\newcommand{\newtext}[1]{#1}
\newcommand{\oldtext}[1]{}
\newif\ifMdBtheorems
    \newtheorem{defin}{Definition}
        \newenvironment{definition}{\begin{defin} \sl}{\end{defin}}
    \newtheorem{theo}[defin]{Theorem}
        \newenvironment{theorem}{\begin{theo} \sl}{\end{theo}}
    \newtheorem{lem}[defin]{Lemma}
        \newenvironment{lemma}{\begin{lem} \sl}{\end{lem}}
    \newtheorem{propo}[defin]{Proposition}
        \newenvironment{proposition}{\begin{propo} \sl}{\end{propo}}
    \newtheorem{coro}[defin]{Corollary}
        \newenvironment{corollary}{\begin{coro} \sl}{\end{coro}}
    \newtheorem{obse}[defin]{Observation}
        \newenvironment{observation}{\begin{obse} \sl}{\end{obse}}
    \newtheorem{rem}[defin]{Remark}
        \newenvironment{remark}{\begin{rem} \rm}{\end{rem}}
    \newtheorem{myfact}[defin]{Fact}
    \theoremstyle{plain}
    \newtheorem{theorem}{Theorem}
    \newtheorem{lemma}[theorem]{Lemma}
    \newtheorem{corollary}[theorem]{Corollary}
    \newtheorem{observation}[theorem]{Observation}
    \theoremstyle{definition}
    \newtheorem{definition}[theorem]{Definition}
    \theoremstyle{remark}
    \newtheorem{remark}[theorem]{Remark}
\definecolor{ablue}{rgb}{0.3,0.4,0.8}
\definecolor{ared}{rgb}{0.95,0.4,0.4}
\definecolor{agreen}{rgb}{0,0.5,0.25}
\definecolor{ayellow}{rgb}{0.95,0.85,0.3}
\DeclarePairedDelimiter{\civ}{[}{]}
\newcommand{\ceil}[1]{\left\lceil #1 \right\rceil}
\newcommand{\NP}{\mbox{\ensuremath{\mathsf{NP}}}\xspace}
\newcommand{\HC}{\textsc{Hamiltonian Cycle}\xspace}
\newcommand{\IS}{\textsc{Independent Set}\xspace}
\newcommand{\etsp}{{\sc Euclidean TSP}\xspace}
\newcommand{\Reals}{\mathbb{R}}
\newcommand{\Nats}{\mathbb{N}}
\newcommand{\Ints}{\mathbb{Z}}
\newcommand{\cC}{\mathcal{C}}
\newcommand{\cP}{\mathcal{P}}
\newcommand{\cD}{\mathcal{D}}
\newcommand{\cU}{\mathcal{U}}
\newcommand{\bn}{\mathbf{n}}
\newcommand{\bx}{\mathbf{x}}
\newcommand{\by}{\mathbf{y}}
\newcommand{\bp}{\mathbf{p}}
\newcommand{\bq}{\mathbf{q}}
\newcommand{\sig}{\sigma}
\newcommand{\eps}{\varepsilon}
\newcommand{\mybox}{\mathsf{Box}}
\newcommand\eqdef{\mathrel{\overset{\makebox[0pt]{\mbox{\normalfont\tiny\sffamily def}}}{=}}}
\DeclareMathOperator{\diam}{diam}
\DeclareMathOperator{\radius}{radius}
\newcommand{\myin}{\mathrm{in}}
\newcommand{\myout}{\mathrm{out}}
\newcommand{\sep}{\mathrm{sep}}
\newcommand{\mylarge}{\mathrm{large}}
\newcommand{\Flarge}{F_\mylarge}
\newcommand{\sub}{\mathrm{sub}}
\newcommand{\Bin}{B_{\myin}}
\newcommand{\Bout}{B_{\myout}}
\newcommand{\bags}{\mathrm{Bags}}
\DeclareMathOperator{\weight}{weight}
\newcommand{\iG}[1]{G[#1]}
\newcommand{\union}{U}
\newcommand{\bd}{\partial}
\DeclareMathOperator{\poly}{poly}
\DeclareMathOperator{\emb}{Emb}
\renewcommand{\leq}{\leqslant}
\renewcommand{\geq}{\geqslant}
\newcommand{\etal}{\emph{et~al.}}
\newcommand{\BeginMyItemize}{\begin{itemize}\setlength{\itemsep}{-\parskip}}
\newcommand{\EndMyItemize}{\end{itemize}}
\newcommand{\BeginMyEnumerate}{\begin{enumerate}\setlength{\itemsep}{-\parskip}}
\newcommand{\EndMyEnumerate}{\end{enumerate}}
\newcommand{\defproblem}[3]{
\vspace{1mm}
\noindent
\fbox{
  \begin{minipage}{0.96\textwidth}
  \begin{tabular*}{\textwidth}{@{\extracolsep{\fill}}lr}
    \textsc{#1} &  \\
  \end{tabular*}
  {\bf{Input:}} #2 \\
  {\bf{Question:}} #3
  \end{minipage}
}
\vspace{1mm}
}
\newcommand{\IndS}{{\sc Independent Set}\xspace}
\author[1]{Mark de Berg}
\author[2]{Hans L. Bodlaender}
\author[3]{S\'andor Kisfaludi-Bak}
\author[4]{\\D\'aniel Marx}
\author[5]{Tom C. van der Zanden}
\affil[1]{Eindhoven University of Technology, The Netherlands\authorcr\texttt{M.T.d.Berg@tue.nl}}
\affil[2]{Utrecht University, The Netherlands\authorcr\texttt{H.L.Bodlaender,@uu.nl}}
\affil[3]{Max Planck Institute for Informatics, Saarbr\"ucken, Germany \authorcr\texttt{sandor.kisfaludi-bak@mpi-inf.mpg.de}}
\affil[4]{CISPA Helmholtz Center for Information Security, Saarbr\"ucken, Germany \authorcr\texttt{marx@cispa.saarland}}
\affil[5]{Maastricht University, The Netherlands \authorcr\texttt{T.vanderZanden@maastrichtuniversity.nl}}
\title{A Framework for ETH-Tight Algorithms and Lower\\ Bounds in Geometric Intersection Graphs\thanks{This work was supported by the NETWORKS project, funded by the Netherlands Organization for Scientific Research NWO under project no. 024.002.003. and by the
ERC Consolidator Grant SYSTEMATICGRAPH (No.~725978) of the European Research
Council.}}
\date{}
\begin{document}

\maketitle

\begin{abstract}
We give an algorithmic and lower-bound framework that facilitates the construction of subexponential algorithms and matching conditional complexity bounds. It can be applied to  
\oldtext{OLD: a wide range of geometric intersection graphs (intersections of similarly sized fat objects),}
\newtext{intersection graphs of similarly-sized fat objects,}
yielding algorithms with running time $2^{O(n^{1-1/d})}$ for any fixed dimension $d\ge 2$ for many well-known graph problems, including \textsc{Independent Set}, $r$-\textsc{Dominating Set} for constant $r$, and \textsc{Steiner Tree}. For most problems, we get improved running times compared to prior work; in some cases, we give the first known subexponential algorithm in geometric intersection graphs. Additionally, most of the obtained algorithms are representation-agnostic, i.e., they work on the graph itself and do not require the geometric representation. Our algorithmic framework is based on a weighted separator theorem and various treewidth techniques.

The lower bound framework is based on a constructive embedding of graphs into $d$-dimensional grids, and it allows us to derive matching $2^{\Omega(n^{1-1/d})}$ lower bounds under the Exponential Time Hypothesis even in the much more restricted class of $d$-dimensional induced grid graphs.
\end{abstract}

\section{Introduction}
Many hard graph problems that seem to require $2^{\Omega(n)}$ time on general graphs,
where $n$ is the number of vertices, can be solved in subexponential time on planar graphs.
In particular, many of these problems can be solved in $2^{O(\sqrt{n})}$ time on planar graphs.
Examples of problems for which this so-called \emph{square-root phenomenon}~\cite{Marx13} holds
include \textsc{Independent Set}, \textsc{Vertex Cover}, \textsc{Hamiltonian Cycle}.
The great speed-ups that the square-root
phenomenon offers lead to the question: are there other graph classes that
also exhibit this phenomenon, and is there an overarching framework
to obtain algorithms with subexponential running time for these graph classes?
The planar separator theorem~\cite{LiptonT79,LiptonT80} and treewidth-based algorithms~\cite{fptbook} offer
a partial answer to this question. They give a general framework to obtain
subexponential algorithms on planar graphs or, more generally, on $H$-minor free graphs.
It builds heavily on the fact that $H$-minor free graphs have treewidth~$O(\sqrt{n})$
and, hence, admit a separator of size $(\sqrt{n})$. A similar line of work is emerging in the area of geometric intersection graphs, with running times of the form $n^{O(n^{1-1/d})}$, or in one case  $2^{O(n^{1-1/d})}$ in the $d$-dimensional case~\cite{MarxS14,SmithW98}.
The main goal of our paper is to establish a framework for a wide class of
geometric intersection graphs that is similar to the framework known for planar graphs, while guaranteeing the running time $2^{O(n^{1-1/d})}$.

The \emph{intersection graph} $\iG{F}$ of a set $F$ of objects in $\Reals^d$ is the graph
whose vertex set is $F$ and in which two vertices are connected when the
corresponding objects intersect. \emph{(Unit-)disk graphs}, where~$F$
consists of (unit) disks in the plane are a widely studied
class of intersection graphs. Disk graphs form a natural generalization of planar
graphs, since any planar graph can be realized as the intersection graph of a set
of disks in the plane. In this paper we consider intersection graphs
of a set $F$ of \emph{fat objects}, where an
object $o\subseteq \Reals^d$ is $\alpha$-\emph{fat}, for some $0<\alpha\leq 1$
if there are balls $\Bin$ and $\Bout$ in $\Reals^d$ such that $\Bin\subseteq o \subseteq \Bout$
and $\radius(\Bin)/\radius(\Bout)\geq \alpha$.
For example, disks are 1-fat and squares are $(1/\sqrt{2})$-fat. From now on we
assume that $\alpha$ is an absolute constant, and often simply speak of fat objects.
\oldtext{Note that we do not require the objects in $F$ to be convex, or even connected.
Thus our definition is very general. In particular, it does not imply that $F$
has near-linear union complexity, as is the case for so-called
locally-fat objects~\cite{abes-union-complexity-fat}. In most of our results
we furthermore assume that the objects in $F$ are \emph{similarly sized},
meaning that the ratio of their diameters is bounded by a fixed constant.}
\newtext{When dealing with arbitrarily-sized fat objects we also require
the objects to be convex. Most of our results are about similarly-sized fat objects,
however, and then we do not need the objects to be convex; in fact, they do not
even need to be connected.\footnote{In the conference version of our
paper we erroneously claimed that for arbitrarily-sized objects the
restriction to convex objects is not necessary either.}
(A set of objects is \emph{similarly-sized} when the
ratio of the largest and smallest diameter of the objects in the set is bounded by a fixed constant.) 
Thus our definition of fatness for similarly-sized fat objects is very general.
In particular, it does not imply that $F$
has near-linear union complexity, as is the case for so-called
locally-fat objects~\cite{abes-union-complexity-fat}.
}

Several important graph problems have been investigated for (unit-)disk graphs or other types
of intersection graphs~\cite{AlberF04,BiroBMMR17,FominLPSZ17,FominLS12,MarxS14}. However,
an overarching framework that helps designing subexponential algorithms
has remained elusive. A major hurdle to obtain such a framework
is that even unit-square graphs can already have arbitrarily large cliques and so
they do not necessarily have small separators or small treewidth. One may hope
that intersection graphs have low cliquewidth or rankwidth---this has proven
to be useful for various dense graph classes \cite{CourcelleMR00,Oum17}---but unfortunately
this is not the case even when considering only unit interval graphs~\cite{GolumbicR00}.
One way to circumvent this hurdle is to restrict the attention to intersection graphs
of disks of \emph{bounded ply}~\cite{BasteT17,hpq-aapeldg-15}. 
This prevents large cliques, but the restriction to bounded-ply graphs severely limits the inputs that can be handled. A major goal of our work is thus to give a framework that can even be applied when the ply is unbounded.

\paragraph*{Our first contribution: an algorithmic framework for geometric intersection graphs of fat objects}
As mentioned, many subexponential results for planar graphs rely on planar separators.
Our first contribution is a generalization of this result to intersection graphs of 
\oldtext{(arbitrarily-sized)} 
\newtext{(arbitrarily-sized and convex, or similarly-sized)}
fat objects in~$\Reals^d$. Since these graphs can have large cliques we cannot bound the
number of vertices in the separator. Instead, we build a separator consisting of cliques.
We then define a weight function~$\gamma$ on these cliques, and we define the
weight of a separator as the sum of the weights of its constituent cliques
$C_i$. This is useful since for many problems a separator can intersect the
solution vertex set in $2^{O(\sum_i \gamma(|C_i|))}$ many ways, for a suitable
function~$\gamma$. Although we state
Theorem~\ref{thm:arbitrary-size-separator} with the strongest bound on
$\gamma$ possible,  in our applications it suffices to define the weight of a
clique~$C$ as $\gamma(|C|):=\log(|C|+1)$. Our theorem can now be stated as
follows.

\begin{theorem}\label{thm:arbitrary-size-separator}
Let $d\geq 2, \alpha>0$ and $\eps>0$ be constants and let $\gamma$ be a weight function such that $\gamma(t) = O(t^{1-1/d-\eps})$. Let $F$ be a set of $n$ $\alpha$-fat objects in~$\Reals^d$
\newtext{that are all convex, or similarly-sized}. Then the intersection
graph~$\iG{F}$ has a $(6^d/(6^d+1))$-balanced separator $F_\sep$ and a clique partition
$\cC(F_\sep)$ of $F_\sep$ with weight $O(n^{1-1/d})$. Such a separator and a
clique partition $\cC(F_\sep)$ can be computed in $O(n^{d+2})$ time if the
objects have constant complexity.
\end{theorem}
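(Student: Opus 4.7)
The plan is to combine a centerpoint-based ball separator in $\Reals^d$ with a multi-scale cube decomposition, partitioning the crossing objects into cliques and then balancing the weight carefully across scales. First I would bucket $F$ into dyadic scale classes $F^{(j)}$ with diameters in $[2^{j-1},2^j)$, giving $O(\log n)$ nontrivial scales. Using a centerpoint of the outer-ball centers, I would locate a ``deep'' point $p\in\Reals^d$ and then consider a one-parameter family of concentric balls $B(p,R)$, picking $R$ so that the sets $F_{\myin}$ and $F_{\myout}$ of objects strictly inside and strictly outside $B(p,R)$ each contain at most $\tfrac{6^d}{6^d+1}n$ elements; the precise constant $6^d/(6^d+1)$ hints at a shifted-grid argument using cells whose inner and outer parts stand in a $1{:}6$ ratio. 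Within this family I would choose $R$ via an averaging argument so that the total crossing cost is small: an object of scale $2^j$ crosses $\partial B(p,r)$ only for $r$ in an interval of length $O(2^j)$, which yields a tight handle on the aggregate cost of crossings over the sweep.

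Next, to partition the crossing set $F_{\sep}$ into cliques, I would handle each scale class separately. At scale $j$, I would cover the spherical shell of width $O(2^j)$ around $\partial B(p,R)$ by axis-aligned cubes of side length $\Theta(\alpha\cdot 2^j)$, chosen small enough that any scale-$j$ $\alpha$-fat object meeting such a cube $Q$ contains $Q$ inside its inner ball $\Bin$; this is exactly where the fatness condition $\radius(\Bin)/\radius(\Bout)\geq\alpha$ is used. Any two scale-$j$ objects sharing a cube then share $Q$ and hence pairwise intersect, forming a clique. Assigning each crossing object to the unique cube containing its inner-ball center yields a proper partition rather than a mere cover.

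For the weight bound, at scale $j$ the shell is covered by $O((R/(\alpha 2^j))^{d-1})$ cubes and the number of scale-$j$ crossings is controlled by the averaging argument; the choice $R=\Theta(n^{1/d})$ is what balances the inside/outside split. Applying Jensen's inequality with $\gamma(t)=O(t^{1-1/d-\eps})$ to each per-cube clique and summing over scales converts the total weight into a geometric series in $j$; the strictly positive slack $\eps$ is precisely what guarantees convergence to $O(n^{1-1/d})$ rather than $O(n^{1-1/d}\log n)$. The $O(n^{d+2})$ running time is dictated by the $d$-dimensional centerpoint computation; the sphere sweep and the grid construction are cheaper. The main obstacle will be tuning the cube sizes to simultaneously force the clique property (through $\alpha$-fatness) and keep the cube counts per shell small enough that the summed weight attains exactly $O(n^{1-1/d})$; coping with non-convex fat objects in defining a canonical ``anchor'' for each object, so that the cube assignment is a partition rather than a cover, is the most delicate point.
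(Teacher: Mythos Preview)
Your high-level plan has the right ingredients---a family of nested separators, dyadic size classes, cliques from fatness via a grid, and an averaging argument---but the centerpoint does not do what you need, and without a replacement the argument does not go through.

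The core issue is that the averaging argument and the balance condition compete for the same parameter $R$. Averaging over $R$ gives you one radius with small crossing weight, but you also need that \emph{particular} $R$ to yield a balanced split. A centerpoint of the outer-ball centers controls halfspaces, not balls: there is no reason why a whole interval of radii around $p$ should all produce balanced separators. Concretely, if $F$ consists of three tight clusters on a line, the inside/outside counts jump as $R$ crosses an inter-cluster distance, collapsing your averaging window to essentially a single value. The paper's substitute for the centerpoint is a \emph{minimum-size} hypercube $H_0$ containing at least $n/(6^d+1)$ objects: after rescaling $H_0$ to the unit cube, any concentric cube $H_i$ of edge length at most $3$ can be tiled into $6^d$ sub-cubes, each strictly smaller than $H_0$ and hence (by minimality) containing fewer than $n/(6^d+1)$ small objects. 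This is where the constant $6^d/(6^d+1)$ actually comes from, and crucially it guarantees balance for \emph{every} cube in the family simultaneously---so one may freely average over the $n^{1/d}$ equally spaced candidates $H_1,\ldots,H_{n^{1/d}}$ to find one of low weight.

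A few secondary gaps follow from the missing normalization. Your claim that ``the choice $R=\Theta(n^{1/d})$ is what balances the inside/outside split'' is dimensionally incoherent: $R$ is a length and $n$ a cardinality, and fixing a scale is precisely what the minimal-$H_0$ step provides. Once the scale is fixed, objects of diameter comparable to $H_0$ can cross every candidate boundary; the paper throws all such ``large'' objects into every separator as $O(1)$ extra cliques, which your sketch omits. Finally, the $O(n^{d+2})$ running time in the statement is not the cost of a centerpoint computation---it is the brute-force enumeration of the $d{+}1$ objects whose bounding coordinates determine $H_0$.
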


\begin{remark} The time stated in the theorem to compute the separator is $O(n^{d+2})$. It is probably possible to reduce this, but since in our applications this does not make a difference for the final time bounds, we do not pursue this.
\end{remark}

\begin{table*}[t]
\begin{center}
 \resizebox{0.95\textwidth}{!}{
  \begin{tabular}{ l  c  c  c c}
    \hline
    Problem & Algorithm class & Representation-agnostic & Lower bound class \\
    \hline
    \textsc{Independent Set} & \newtext{Convex fat} & no & Unit Ball, $d\ge 2$\\
    \textsc{Independent Set} & Sim. sized fat &yes & Unit Ball, $d\ge 2$\\
    \textsc{$r$-Dominating Set}, $r=$ const & Sim. sized fat & yes & Induced Grid, $d \ge 2$ \\
    \textsc{Steiner Tree} & Sim. sized fat &yes &  Induced Grid, $d \ge 2$\\
    \textsc{Feedback Vertex Set} & Sim. sized fat &yes & Induced Grid, $d\ge 2$ \\
    \textsc{Conn. Vertex Cover} & Sim. sized fat &yes & Unit Ball, $d\ge 2$ \textbf{or} Induced Grid, $d\ge 3$ \\
    \textsc{Conn. Dominating Set} & Sim. sized fat &yes & Induced Grid, $d\ge 2$ \\
    \textsc{Conn. Feedback Vertex Set} & Sim. sized fat &yes & Unit Ball, $d\ge 2$ \textbf{or} Induced Grid, $d\ge 3$  \\
    \textsc{Hamiltonian Cycle/Path} & Sim. sized fat & no & Induced Grid, $d \ge 2$ \\
    \hline
  \end{tabular}
 }
\end{center}
\vspace{0.5em}
\caption{Summary of our results. In each case we list the most inclusive class where our framework leads to algorithms with  $2^{O(n^{1-1/d})}$ running time, and the most restrictive class for which we have a matching lower bound. We also list whether the algorithm is representation-agnostic.}
\label{tab:results}
\end{table*}

A direct application of our separator theorem
is a $2^{O(n^{1-1/d})}$ algorithm for \textsc{Independent Set} for any fixed constant $d$. (The dependence on $d$ is double-exponential.)
For general fat objects, only the $2$-dimensional case was known to have such an algorithm~\cite{MarxP15}.

Our separator theorem can be seen as a generalization of the work of Fu~\cite{Fu2011}
who considers a weighting scheme similar to ours.
However, Fu's result is significantly less general as it only applies to unit balls
and his proof is arguably more complicated. Our result can also be seen as
a generalization of the separator theorem of Har-Peled and Quanrud~\cite{hpq-aapeldg-15}
which gives a small separator for constant ply---indeed, our proof borrows some ideas from theirs.

Finally, the technique employed by Fomin et al.~\cite{FominLPSZ17} in two dimensions has also similar qualities; in particular, the idea of using cliques as a basis for a separator can also be found there, and leads to subexponential parameterized algorithms, even for some problems that we do not tackle here.
\medskip

\oldtext{After proving the weighted separator theorem for arbitrarily-sized fat objects,
we switch to similarly-sized objects.}
\newtext{After proving the weighted separator theorem for fat objects,
we apply it to obtain an algorithmic framework for similarly-sized fat objects.}
Here the idea is as follows: We find a suitable
clique-decomposition $\cP$ of the intersection graph $G[F]$, contract each clique to
a single vertex, and then work with the contracted graph~$G_{\cP}$ where the node
corresponding to a clique~$C$ gets weight~$\gamma(|C|)$. We then prove that the
graph $G_\cP$ has constant degree and, using our separator theorem, we prove that
$G_{\cP}$ has weighted treewidth $O(n^{1-1/d})$. (The big-O notation hides an exponential factor in $d$.) Moreover, we can compute a
tree decomposition of this weight in $2^{O(n^{1-1/d})}$ time.

\begin{figure}[t]
\begin{center}
\includegraphics[width=0.6\textwidth]{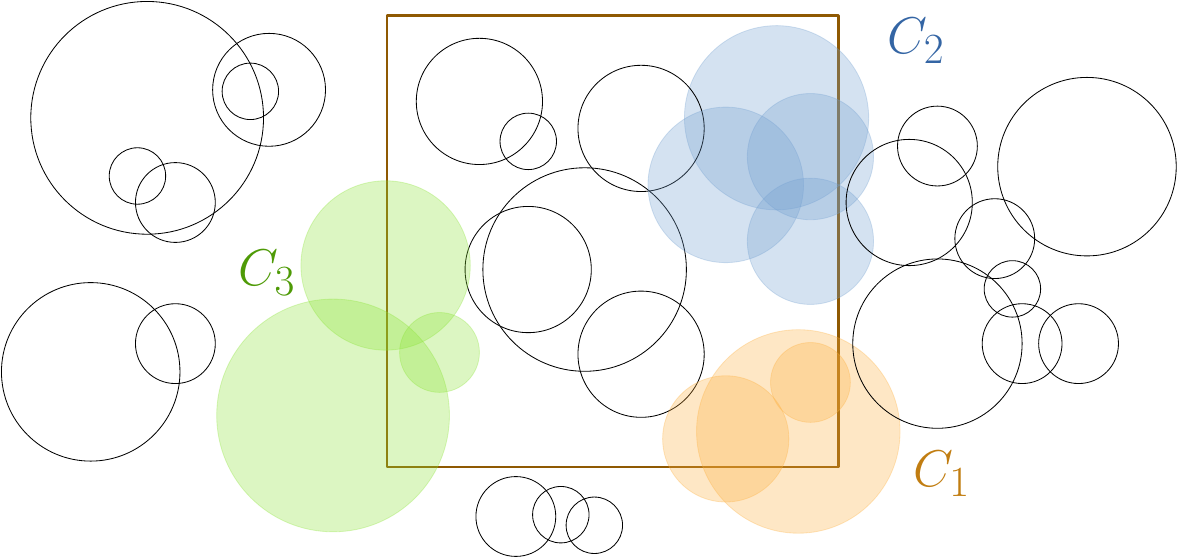}
\end{center}
\caption{An example for Theorem~\ref{thm:arbitrary-size-separator}: a disk graph with a separator partitioned into cliques $C_1,C_2,C_3$.}%
\label{fig:separator_example}
\end{figure}

Thus we obtain a framework that gives $2^{O(n^{1-1/d})}$-time algorithms for
intersection graphs of similarly-sized\footnote{With separator-based results it is often possible to state theorems for all subgraphs of a given graph class. This is not possible in our case: taking subgraphs destroys the cliques that we rely on. Additionally, every graph class we consider contains all complete graphs, so a statement about their subgraphs would have to hold for all graphs.} fat objects for many problems for which treewidth-based
algorithms are known. Our framework recovers and often slightly improves
the best known results for several problems,\footnote{Note that most of the earlier results are in the parameterized setting, but we do not consider parameterized algorithms here.} including
\textsc{Independent Set}, \textsc{Hamiltonian Cycle} and \textsc{Feedback Vertex Set}.
Our framework also gives the first subexponential algorithms in geometric intersection graphs for,
among other problems, \textsc{$r$-Dominating Set} for constant $r$, \textsc{Steiner Tree} and \textsc{Connected Dominating Set}.

Furthermore, we show that our approach can be combined with the \emph{rank-based approach}~\cite{single-exponential},
a technique to speed up algorithms for connectivity problems. Table \ref{tab:results} summarizes the results we obtain by applying our framework; in each case we have matching upper and lower bounds on the time complexity of $2^{\Theta(n^{1-1/d})}$ (where the lower bounds are conditional on the Exponential Time Hypothesis).
\medskip

A desirable property of algorithms for geometric graphs is that they are
\emph{representation-agnostic},\index{representation agnostic} meaning that they can work directly on the
graph without a geometric representation of $F$. Most of the known algorithms
do in fact require a representation, which could be a problem in applications,
since finding a geometric representation (e.g., with unit disks) of a given geometric intersection
graph is \NP-hard~\cite{BreuK98}, and many recognition problems for geometric
graphs are $\exists\Reals$-complete~\cite{Kang12}. Note that in the absence
of a representation, some of our algorithms require that the input graphs are
from the proper graph class, otherwise they might give incorrect answers.

One of the advantages of our framework is that it
yields representation agnostic algorithms for many problems. To this end we
need to generalize our scheme slightly: We no longer work with a clique
partition to define the contracted graph~$G_\cP$, but with a partition whose
classes are the union of constantly many cliques. We show that such a
partition can be found efficiently without knowing the set $F$ defining the
given intersection graph. Thus we obtain representation-agnostic algorithms
for many of the problems mentioned above, in contrast to known results which
almost all need the underlying set~$F$ as input.

\paragraph*{Our second contribution: a framework for lower bounds under ETH}
The $2^{O(n^{1-1/d})}$-time algorithms that we obtain for many problems
immediately lead to the question: is it possible to obtain even faster
algorithms? For many problems on planar graphs, and for certain problems on
ball graphs the answer is no, assuming the Exponential Time Hypothesis (ETH)~\cite{ImpagliazzoP01}.
However, these lower bound results in higher dimensions are scarce, and often
very problem-specific. Our second contribution is a framework to obtain tight
ETH-based lower bounds for problems on $d$-dimensional grid graphs (which are
a subset of intersection graphs of similarly-sized fat objects). The obtained lower
bounds match the upper bounds of the algorithmic framework. Our
lower bound technique is based on a constructive embedding of graphs into
$d$-dimensional grids, for $d\geq 3$, thus avoiding the invocation of deep
results from Robertson and Seymour's graph minor theory.  This
\emph{Cube Wiring Theorem} implies that for any constant $d\ge 3$, any connected graph on $m$
edges is the minor of the $d$-dimensional grid hypercube of side length
$O(m^{\frac{1}{d-1}})$ (see Theorem~\ref{thm:cubewiringminor}).

As it turns out, we can easily derive the Cube Wiring Theorem from a result of Thompson and Kung~\cite{DBLP:conf/stoc/ThompsonK76}. We also prove a slightly stronger version of Cube Wiring, which may be of independnet interest.

For $d=2$, we give a lower bound for a customized version of the
\textsc{3-SAT} problem. Now, these results make it possible to design simple reductions for our problems using just three custom gadgets per problem; the gadgets
model variables, clauses, and connections between variables and clauses, respectively. By invoking Cube Wiring or our custom satisfiability problem, the wires connecting the clause and variable gadgets can be routed in a very tight space.
Giving these three gadgets immediately yields the tight lower bound in $d$-dimensional grid graphs (under ETH) for all $d\ge 2$. Naturally, the same conditional lower bounds are implied in all containing graph classes, such as unit ball graphs, unit cube graphs and also in intersection graphs of similarly sized fat objects. Similar lower bounds are known for various problems in the parameterized complexity literature~\cite{MarxS14,BiroBMMR17}. The embedding in~\cite{MarxS14} in particular has a denser target graph than a grid hypercube, where the ``edge length'' of the cube contains an extra logarithmic factor compared to ours (see Theorem 2.17 in~\cite{MarxS14}) and thereby gives slightly weaker lower bounds.

Moreover, our lower bound for \HC in induced grid graphs implies the same lower bound for \etsp, which turns out to be ETH-tight~\cite{BergBKK18}.

\section{The algorithmic framework}
%
%
%

\subsection{\newtext{Separators for fat objects}}
\label{se:arbitrary-size}
Let $F$ be a set of $n$ $\alpha$-fat objects in $\Reals^d$ for some constant $\alpha>0$, and let $\iG{F}=(F,E)$ be the intersection
graph induced by~$F$. We say that a subset $F_\sep\subseteq F$ is a
\emph{$\beta$-balanced separator} for $\iG{F}$ if $F\setminus F_\sep$ can be partitioned into
two subsets $F_1$ and $F_2$ with no edges between them and with $\max(|F_1|,|F_2|)\leq \beta n$.
For a given decomposition $\cC(F_\sep)$ of $F_\sep$ into cliques
and a given weight function $\gamma$ we define
the \emph{weight} of~$F_\sep$, denoted by $\weight(F_\sep)$, as
$\weight(F_\sep) := \sum_{C\in \cC(F_\sep)} \gamma(|C|)$.
Next we prove that $\iG{F}$ admits a balanced separator of weight~$O(n^{1-1/d})$
for any cost function $\gamma(t) = O(t^{1-1/d-\eps})$ with $\eps>0$.
Our approach borrows ideas from Har-Peled and Quanrud~\cite{hpq-aapeldg-15},
who show the existence of small separators for low-density sets of objects,
although our arguments are significantly more involved.

\paragraph*{Step~1: Finding candidate separators.}
Let $H_0$ be a minimum-size hypercube containing at least
$n/(6^d+1)$ objects from $F$,
and assume without loss of generality that $H_0$ is the unit hypercube centered at the origin.
Let $H_1,\ldots,H_{m}$ be a collection of $m:=n^{1/d}$ hypercubes, all centered
at the origin, where $H_i$ has edge length $1+\frac{2i}{m}$. Note that the largest
hypercube, $H_m$, has edge length~3, and that the distance between the corresponding faces of consecutive hypercubes
$H_i$ and $H_{i+1}$ is $1/n^{1/d}$.

Each hypercube~$H_i$ induces a partition of $F$ into three subsets:
a subset~$F_{\myin}(H_i)$ containing all objects \newtext{whose convex hull} lies 
completely in the interior of~$H_i$,
a subset~$F_{\bd}(H_i)$ containing all objects \newtext{whose convex hull} intersects the boundary $\bd H_i$ of~$H_i$,
and a subset~$F_{\myout}(H_i)$ containing all objects \newtext{whose convex hull} 
lies completely in the exterior of~$H_i$.
Obviously an object from $F_{\myin}(H_i)$ cannot intersect an object from $F_{\myout}(H_i)$,
and so $F_{\bd}(H_i)$ defines a separator in a natural way.
\newtext{(Note that disconnected objects could lie partly
inside $H_i$ and partly outside $H_i$ without intersecting $\bd H_i$. This is the reason why we define the sets 
$F_{\myout}(H_i)$, $F_{\bd}(H_i)$, and  $F_{\myin}(H_i)$ with respect to the 
convex hulls of the objects. If each object is connected, we can also define
these sets with respect to the objects themselves.)}
It will be convenient to add some more objects to these separators, as follows.
We call an object \emph{large} when its diameter is at least~$1/4$,
and \emph{small} otherwise. We will add all large objects that intersect~$H_m$ to our
separators. Thus our candidate separators are the sets
$F_{\sep}(H_i) := F_{\bd}(H_i) \cup F_\mylarge$, where
$F_{\mylarge}$ is the set of all large objects intersecting~$H_m$.
We show that our candidate separators are balanced:
\begin{lemma}\label{le:arbitrary-size-balance}
For any $0\leq i\leq m$ we have
\[
\max\big(|F_{\myin}(H_i)\setminus F_\mylarge|,|F_{\myout}(H_i)\setminus F_\mylarge| \big) < \frac{6^d}{6^d+1}n.
\]
\end{lemma}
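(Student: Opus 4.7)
The plan is to handle the two inequalities separately. For the bound on $|F_\myout(H_i) \setminus F_\mylarge|$, I would exploit the nesting $H_0 \subseteq H_i$, which holds because both hypercubes are centered at the origin and $H_0$ has side $1 \leq 1+2i/m$. By the defining property of $H_0$, at least $n/(6^d+1)$ objects meet $H_0$, and those objects must meet $H_i$ as well, placing them in $F_\myin(H_i) \cup F_\bd(H_i)$. Therefore $|F_\myout(H_i)| \leq n - n/(6^d+1) = \frac{6^d}{6^d+1}n$, and since $F_\myout(H_i) \setminus F_\mylarge \subseteq F_\myout(H_i)$ the desired bound follows.

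The bound on $|F_\myin(H_i) \setminus F_\mylarge|$ is the more interesting direction, and I would prove it by contradiction. Assume $|F_\myin(H_i) \setminus F_\mylarge| \geq \frac{6^d}{6^d+1}n$. The strategy is to produce an axis-aligned hypercube of side strictly less than $1$ that contains at least $n/(6^d+1)$ objects, which would contradict the minimality of $H_0$. To that end, subdivide $H_i$ into $6^d$ congruent axis-aligned subcubes, each of side $(1+2i/m)/6 \leq 1/2$, and assign each object in $F_\myin(H_i) \setminus F_\mylarge$ to the unique subcube containing the center of its axis-aligned bounding box (breaking ties arbitrarily). By pigeonhole, some subcube $C^*$ receives at least $n/(6^d+1)$ objects.

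The last step is to inflate $C^*$ just enough to capture every assigned object. Each such object is small, so its diameter is strictly less than $1/4$, which forces every side of its bounding box to be strictly less than $1/4$ and the object to extend strictly less than $1/8$ from its bounding-box center in every coordinate direction. Expanding $C^*$ by $1/8$ in every direction therefore yields a hypercube that contains every assigned object, and whose side is at most $\tfrac{1}{2} + 2\cdot\tfrac{1}{8} = \tfrac{3}{4} < 1$. This is the sought-after hypercube of side strictly less than $1$ containing $\geq n/(6^d+1)$ objects, contradicting the minimality of $H_0$.

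The main obstacle is keeping the expanded subcube strictly smaller than $H_0$. Three ingredients conspire here and must line up: subdividing into exactly $6^d$ pieces so that the subcube side stays at most $1/2$ (using that $H_m$ has side $3$), measuring membership via bounding-box centers rather than arbitrary points in the objects (an arbitrary reference point would force expansion by $1/4$ per side and only give side exactly $1$), and exploiting the strict smallness condition diameter $<1/4$. Together they make the final computation $1/2 + 1/4$ come out strictly below the side of $H_0$.
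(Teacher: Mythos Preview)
Your proof is correct and follows essentially the same route as the paper: bound $|F_{\myout}(H_i)|$ using $H_0\subseteq H_i$ and the defining count for $H_0$, and bound $|F_{\myin}(H_i)\setminus F_{\mylarge}|$ by cutting $H_i$ into $6^d$ subcubes of side at most $1/2$ and arguing that no subcube can account for $n/(6^d+1)$ small objects without producing a cube of side $<1$ that violates the minimality of $H_0$. The only cosmetic difference is that you assign objects to subcubes via bounding-box centers (giving an enclosing cube of side $\le 3/4$), whereas the paper assigns by intersection and relies on the strict diameter bound to get side strictly below~$1$; both work.
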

\vspace{-1em}

\begin{proof}
Consider a hypercube~$H_i$. Because $H_0$ contains at least $n/(6^d+1)$ objects from $F$, we immediately obtain
\[
\big|(F_{\myout}(H_i)\setminus F_\mylarge)\big|
    \leq    |F_{\myout}(H_0)|
    \leq    |F \setminus F_{\myin}(H_0)|
    <    \left(1-\frac{1}{6^d+1}\right)n
    =   \frac{6^d}{6^d+1}n.
\]
To bound $\big|F_{\myin}(H_i)\setminus F_\mylarge\big|$, consider a subdivision of
$H_i$ into $6^d$ sub-hypercubes of edge length
$\frac{1}{6}(1+\frac{2i}{m}) \leq 1/2$. We claim that any sub-hypercube $H_{\sub}$ intersects
fewer than $n/(6^d+1)$ small objects from $F$. To see this, recall that small objects have
diameter less than~1/4. Hence, all small objects intersecting $H_{\sub}$ are fully
contained in a hypercube of edge length less than~1. Since $H_0$ is a smallest
hypercube containing at least $n/(6^d+1)$ objects from $F$,
$H_{\sub}$ must thus intersect fewer than $n/(6^d+1)$ objects from $F$, as claimed.
Each object in $F_{\myin}(H_i)$ intersects at least one of the
$6^d$ sub-hypercubes, so we can conclude that
$\big |F_{\myin}(H_i)\setminus F_\mylarge \big| < \big(6^d/(6^d+1)\big)n$.
\end{proof}

\paragraph*{Step~2: Defining the cliques and finding a low-weight separator.} Define $F^* := F \setminus (F_{\myin}(H_0) \cup F_{\myout}(H_m)\cup F_{\mylarge})$. Note that
$F_{\bd}(H_i)\setminus F_{\mylarge}\subseteq F^*$ for all~$i$. We partition $F^*$ into \emph{size classes}~$F^*_s$, based
on the diameter of the objects. More precisely, for integers $s$ with $1\leq s \leq s_{\max}$,
where $s_{\max} := \ceil{(1-1/d)\log n}-2$, we define
\[
F^*_s := \left\{ o \in F^*: \frac{2^{s-1}}{n^{1/d}} \leq \diam(o) < \frac{2^s}{n^{1/d}} \right\}.
\]
We furthermore define $F^*_0$ to be the subset of objects $o\in F^*$ with $\diam(o)<1/n^{1/d}$.
Note that $2^{s_{\max}}/n^{1/d} \geq 1/4$, which means that
every object in $F^*$ is in exactly one size class.

Each size class other than $F^*_0$ can be decomposed into cliques as follows: Fix a size class~$F^*_s$, with $1\leq s\leq s_{\max}$.
Since the objects in $F$ are $\alpha$-fat for a fixed constant~$\alpha>0$, each $o\in F^*_s$ contains a ball
of radius $\alpha\cdot(\diam(o)/2) = \Omega(\frac{2^{s}}{n^{1/d}})$.
Moreover, each object $o\in F^*_s$ lies fully or partially inside the outer hypercube $H_m$,
which has edge length~3. This implies we can stab all objects in
$F^*_s$ using a set $P_s$ of $O((\frac{n^{1/d}}{2^{s}})^d)$ points.
Thus there exists a decomposition
$\cC(F^*_s)$ of $F^*_s$ consisting of $O(\frac{n}{2^{sd}})$ cliques.
\oldtext{
In a similar way we can argue that there exists
a decomposition $\cC(F_{\mylarge})$ of $F_{\mylarge}$ into $O(1)$ cliques.}
\newtext{Next we show that there exists
a such a decompostion for $F_{\mylarge}$ as well.}

\begin{lemma}
Let $F$ be a set of similarly-sized fat objects or a set of arbitrarily-sized 
convex fat objects. Then $F_{\mylarge}$ can be decomposed into a collection
$\cC(\Flarge)$ of $O(1)$ cliques.
\end{lemma}
\begin{proof}
Recall that $H_0$ is a smallest hypercube containing at least $n/(6^d+1)$ objects, 
and that we assumed $H_0$ to be a unit hypercube centered at the origin.
Let $H(t)$ denote a copy of $H_0$ scaled by a factor~$t$ with respect to the origin.
Note that $H_0= H(1)$ and $H_m = H(3)$. 

To prove the lemma for the case of similarly-sized objects, let $d_{\min}$
and $d_{\max}$ be the minimum and maximum diameter of any of the
objects in $F$, respectively. Since $H_0$ has unit size and fully contains at least one object from $F$,
we have $d_{\min} \leq \sqrt{d} = O(1)$. Moreover, the objects
are similarly sized and so we also have $d_{\max} = O(1)$.
Because all objects in $\Flarge$ intersect $H_m=H(3)$, they must lie completely inside 
$H(t^*)$ for $t^*=3+2d_{\max}=O(1)$. Since $\diam(o)\geq 1/4$ for all $o\in \Flarge$
and the objects are fat, each object contains a ball of radius $\Omega(1)$.
Hence, we can stab all objects in $\Flarge$ with a grid
of $O(1)$ points inside $H(t^*)$.

\begin{figure}[t]
\begin{center}
\includegraphics{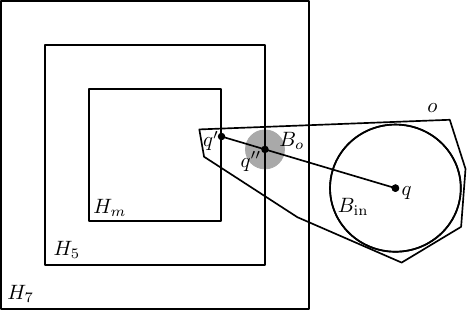}
\end{center}
\caption{A convex fat object $o\in \Flarge$ contains a ball $B_o$ of at least constant radius that is inside the hypercube $H_7$.}%
\label{fig:Flarge_lemma}
\end{figure}

Next we prove the lemma for the case where the objects in $\Flarge$ 
are arbitrarily-sized but convex. It suffices to show that for any
$o\in \Flarge$ there exists a ball $B_o\subseteq o$ of radius $\Theta(1)$
that is fully contained in $H(7)$. To show the existence
of $B_o$, let $\Bin \subseteq o$ be a ball of radius $\Theta(\diam(o))$; 
such a ball exists because $o$ is fat.
Note that $\diam(o)\geq 1/4$ since $o\in \Flarge$.
Let $q$ be the center of $\Bin$, see Fig.~\ref{fig:Flarge_lemma}.
If $q\in H(5)$ then the ball centered at $q$ of radius
$\min(\radius(\Bin),1)$ is a ball of radius $\Theta(1)$ that lies
completely inside $H(7)$ and we are done. Otherwise, let
$q'\in o\cap H_m$ and let $q'' := q q' \cap \bd H(5)$ be the point
where the segment $qq'$ intersects $\bd H(5)$.
We now take $B_o$ to be the ball centered at $q''$ and
of radius $\min\left(1,\frac{|q'q''|}{|q'q|}\cdot\radius(\Bin)\right)$.
By convexity of $o$, we have $B_o\subset o$.
Moreover, $B_o\subset H(7)$.
Finally, since $|q'q''|\geq 1$ and $q'q \leq \diam(o)$ we have
\[
\frac{|q'q''|}{|q'q|} \cdot\radius(\Bin)
    \geq \frac{1}{\diam(o)} \cdot\radius(\Bin)
    = \Theta(1)
\]
which shows $\radius(B_o) =\Theta(1)$. 
This finishes the proof that $B_o$ has the desired properties.
\end{proof}
\oldtext{For $F^*_0$ the argument does not work}
\newtext{The set $F^*_0$ cannot be decomposed into few cliques}
since objects in $F^*_0$ can be arbitrarily
small. Hence, we create a singleton clique for each object in $F^*_0$.
Together with the decompositions of the size classes~$F^*_s$ and of $\Flarge$ we thus
obtain a decomposition $\cC(F^*)$ of $F^*$ into cliques.

Note that $\cC(F^*)$ induces a decomposition of $F_{\sep}(H_i)$  into cliques,
for any $i$. We denote this decomposition by $\cC(F_{\sep}(H_i))$. Thus, for a
given weight function~$\gamma$, the weight of $F_{\sep}(H_i)$ is $\sum_{C\in
\cC(F_{\sep}(H_i))} \gamma(|C|)$. Our goal is now to show that at least one of
the separators $F_{\sep}(H_i)$ has weight $O(n^{1-1/d})$, when $\gamma(t) =
O(t^{1-1/d-\eps})$ for some $\eps>0$. To this end we will bound the total
weight of all separators $F_{\sep}(H_i)$ by $O(n)$. Using that the number of
separators is $n^{1/d}$ we then obtain the desired result.
\begin{lemma}\label{le:arbitrary-size-weight}
If $\gamma(t) = O(t^{1-1/d-\eps})$ for some $\eps>0$ then $\sum_{i=1}^m \weight(F_{\sep}(H_i)) = O(n)$.
\end{lemma}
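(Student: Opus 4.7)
The plan is to bound the total $\sum_{i=1}^m \weight(F_\sep(H_i))$ by interchanging the order of summation over separators and over cliques. Since each $\cC(F_\sep(H_i))$ is induced by the fixed decomposition $\cC(F^*) \cup \cC(F_\mylarge)$ and since $\gamma(|C'|) \leq c\,|C|^{1-1/d-\eps}$ whenever $C' \subseteq C$ (as $t^{1-1/d-\eps}$ is non-decreasing), I would rewrite the sum as
\[
\sum_{i=1}^m \weight(F_\sep(H_i)) \;\leq\; c \sum_{C \in \cC(F^*) \cup \cC(F_\mylarge)} |C|^{1-1/d-\eps} \cdot N(C),
\]
where $N(C) := |\{i : C \cap F_\sep(H_i) \neq \emptyset\}|$. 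The task then reduces to a geometric bound on $N(C)$ and a summation using concavity.

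For the geometric estimate, fix a clique $C$ in size class $s \geq 1$ with stabbing point $p \in P_s$. Since every $o \in C$ contains $p$ and has $\diam(o) < 2^s/n^{1/d}$, the entire clique lies in the ball $B(p, 2^s/n^{1/d})$. The boundaries $\bd H_i$ consist of $2d$ families of parallel axis-aligned face hyperplanes; within each family, consecutive faces (those corresponding to $H_i$ and $H_{i+1}$) are separated by exactly $1/m = 1/n^{1/d}$. For $\bd H_i$ to intersect some object of $C$, at least one face hyperplane of $H_i$ must come within distance $2^s/n^{1/d}$ of $p$; within each axis-parallel family this happens for an interval of at most $O(2^s)$ values of $i$. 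Summing over the $2d$ families gives $N(C) = O(2^s)$. The same reasoning with $r = 1/n^{1/d}$ yields $N(C) = O(1)$ for the singleton cliques in size class $s = 0$.

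The second ingredient is a counting bound from concavity. For $s \geq 1$ there are $k_s = |\cC(F^*_s)| = O(n/2^{sd})$ cliques, with $\sum_{C \in \cC(F^*_s)}|C| \leq n$. Since $t \mapsto t^{1-1/d-\eps}$ is concave on $[0,\infty)$, Jensen's inequality gives
\[
\sum_{C \in \cC(F^*_s)} |C|^{1-1/d-\eps} \;\leq\; k_s^{1/d+\eps}\, n^{1-1/d-\eps} \;=\; O\!\left(\frac{n}{2^{s(1+d\eps)}}\right).
\]
Multiplying by the uniform bound $N(C) = O(2^s)$ yields a contribution of $O(n \cdot 2^{-sd\eps})$ from size class $s \geq 1$, and summing over $s$ produces a convergent geometric series of total $O(n)$.

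It remains to handle the two extremal classes. For the singleton cliques of $\cC(F^*_0)$ each clique has $\gamma(1) = O(1)$ and $N(C) = O(1)$, contributing $O(|F^*_0|) = O(n)$ in total. For $\cC(F_\mylarge)$, there are $O(1)$ cliques, each with $|C| \leq n$ and $N(C) \leq m = n^{1/d}$, contributing $O(m \cdot n^{1-1/d-\eps}) = O(n^{1-\eps})$. Summing all contributions yields the claimed $O(n)$ bound. The main obstacle is the $N(C) = O(2^s)$ estimate, which hinges on using the nested structure of the $H_i$ and the $1/n^{1/d}$ face spacing precisely; everything else is routine manipulation with Jensen and a geometric series.
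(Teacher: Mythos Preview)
Your proof is correct and follows the same approach as the paper: swap the summation order, bound the number $N(C)$ of separators a clique in size class $s$ can touch by $O(2^s)$ via the $1/n^{1/d}$ spacing of the boundaries $\bd H_i$, and then use concavity of $t\mapsto t^{1-1/d-\eps}$ together with the $O(n/2^{sd})$ bound on the number of cliques in class~$s$. Your single Jensen step (with the crude estimates $k_s\leq cn/2^{sd}$ and $\sum_C|C|\leq n$) is in fact a bit more streamlined than the paper's version, which tracks $n_s=|F^*_s|$ and splits the range of $s$ into the cases $n_s\leq n/2^{sd}$ and $n_s>n/2^{sd}$ before reaching essentially the same geometric series.
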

\vspace{-1em}
\begin{proof}
First consider the cliques in $\cC(F^*_0)$, which are singletons. Since objects in $F^*_0$ have
diameter less than $1/n^{1/d}$, which is the distance between consecutive hypercubes~$H_i$ and $H_{i+1}$,
each such object is in at most one set $F_{\bd}(H_i)$. Hence, its contribution
to the total weight $\sum_{i=1}^m \weight(F_{\sep}(H_i))$ is $\gamma(1)=O(1)$. Together, the cliques in $\cC(F^*_0)$
thus contribute $O(n)$ to the total weight.

Next, consider $\cC(\Flarge)$. It consists of $O(1)$ cliques. In the worst case each clique
appears in all sets $F_{\bd}(H_i)$. Hence, their total contribution to
$\sum_{i=1}^m \weight(F_{\sep}(H_i))$ is bounded by $O(1) \cdot \gamma(n) \cdot n^{1/d} = O(n)$.

Now consider a set $\cC(F^*_s)$ with $1\leq s\leq s_{\max}$.
A clique $C\in \cC(F^*_s)$ consists of objects of diameter at most $2^s/n^{1/d}$
that are stabbed by a common point. Since the distance between consecutive hypercubes
$H_i$ and $H_{i+1}$ is $1/n^{1/d}$, this implies that $C$
contributes to the weight of $O(2^s)$ separators~$F_{\sep}(H_i)$. The contribution
to the weight of a single separator is at most $\gamma(|C|)$. (It can be less than $\gamma(|C|)$ because not all
objects in $C$ need to intersect $\bd H_i$.) Hence, the total weight contributed
by all cliques, which equals the total weight of all separators, is

\[
\sum_{s=1}^{s_{\max}} \sum_{C\in \cC(F^*_s)} (\mbox{weight contributed by~$C$})
    \leq \sum_{s=1}^{s_{\max}} \sum_{C\in \cC(F^*_s)} 2^s \gamma(|C|)
    = \sum_{s=1}^{s_{\max}} \left( 2^s \sum_{C\in \cC(F^*_s)} \gamma(|C|) \right).
\]

Next we wish to bound $\sum_{C\in \cC(F^*_s)} \gamma(|C|)$. Define $n_s := |F^*_s|$
and observe that $\sum_{s=1}^{s_{\max}} n_s \leq n$. Recall that $\cC(F^*_s)$ consists of
$O(n/2^{sd})$ cliques, that is, of at most $cn/2^{sd}$ cliques for some constant~$c$.
To make the formulas below more readable we assume $c=1$ (so we can omit~$c$), but it
is easily checked that this does not influence the final result asymptotically.
Similarly, we will be using $\gamma(t) = t^{1-1/d-\eps}$ instead of $\gamma(t) = O(t^{1-1/d-\eps})$.
Because $\gamma$ is positive and concave, the sum $\sum_{C\in \cC(F^*_s)} \gamma(|C|)$ is
maximized when the number of cliques is maximal, namely $\min(n_s,n/2^{sd})$,
and when the objects are distributed as evenly as possible over the cliques.
Hence,
\[
\sum_{C\in \cC(F^*_s)} \hspace{-0.4em} \gamma(|C|) \hspace{-0.3em}\
    \leq \hspace{-0.3em} \ \left\{ \hspace{-0.3em} \begin{array}{ll}
                    n_s  & \mbox{if $n_s \leq n/2^{sd}$} \\
                    (n/2^{sd}) \cdot \gamma\left(  \frac{n_s}{n/2^{sd}} \right) & \mbox{otherwise}
                    \end{array}
             \right.
\]
We now split the set $\{1,\ldots,s_{\max}\}$ into two index sets $S_1$ and $S_2$,
where $S_1$ contains all indices~$s$ such that $n_s \leq n/2^{sd}$, and
$S_2$ contains all remaining indices. Thus
\begin{equation}
\sum_{s=1}^{s_{\max}} \left( 2^s \sum_{C\in \cC(F^*_s)} \gamma(|C|) \right)
      =  \sum_{s\in S_1} \left( 2^s \sum_{C\in \cC(F^*_s)} \gamma(|C|) \right)
         + \sum_{s\in S_2} \left( 2^s \sum_{C\in \cC(F^*_s)} \gamma(|C|) \right)
\label{eq:S1S2}
\end{equation}
The first term in (\ref{eq:S1S2}) can be bounded by
\begin{equation*}
\sum_{s\in S_1} \left( 2^s \sum_{C\in \cC(F^*_s)} \gamma(|C|) \right)
    \leq  \sum_{s\in S_1} 2^s n_s
    \leq  \sum_{s\in S_1}  2^s (n/2^{sd})
    =   n \sum_{s\in S_1}  1/2^{s(d-1)}
    =  O(n),
\end{equation*}
where the last step uses that $d\geq 2$.
For the second term we get
\begin{align*}
\sum_{s\in S_2} \left( 2^s \sum_{C\in \cC(F^*_s)} \gamma(|C|) \right)
    & \leq \sum_{s\in S_2} \left( 2^s (n/2^{sd}) \cdot \gamma\left(  \frac{n_s}{n/2^{sd}} \right) \right) \\
    & \leq \sum_{s\in S_2} \left( \frac{n}{2^{s(d-1)}} \cdot \left(  \frac{n_s 2^{sd}}{n} \right)^{1-1/d-\eps} \right) \\
    & \leq n \sum_{s\in S_2} \left( \frac{n_s}{n} \right)^{1-1/d-\eps} \frac{1}{{2^{sd\eps}}}  \\
    & \leq n \sum_{s\in S_2} \left( \frac{1}{{2^{d\eps}}} \right)^s  \\
    &   =  O(n).\qedhere
\end{align*}

\end{proof}
We are now ready to prove Theorem~\ref{thm:arbitrary-size-separator}.

\begin{proof}[Proof of Theorem~\ref{thm:arbitrary-size-separator}]
Each candidate separator~$F_{\sep}(H_i)$ is $(6^d/(6^d+1))$-balanced by
Lemma~\ref{le:arbitrary-size-balance}. Their total weight is
$O(n)$ by Lemma~\ref{le:arbitrary-size-weight}, and since we have $n^{1/d}$
candidates one of them must have weight $O(n^{1-1/d})$. Finding this separator
can be done in $O(n^{d+2})$ time by brute force.
Indeed, to find the hypercube $H_0 = [x_1,x'_1]\times \cdots \times [x_d,x'_d]$
in $O(n^{d+2})$ time we first guess the object defining $x_i$, for all $1\leq
i\leq d$, then guess the object defining $x'_1$ (and, hence, the size of the
hypercube), and finally determine the number of objects inside the hypercube.
Once we have $H_0$, we can generate the hypercubes $H_1,\cdots,H_{n^{1/d}}$,
generate the cliques as described above, and then compute the weights of the
separators~$F_{\sep}(H_i)$ by brute force within the same time bound.
\end{proof}

\begin{corollary}\label{thm:arbitrary-size-indep-set}
Let $F$ be a set of $n$ fat objects in~$\Reals^d$ \newtext{that are
either convex or similarly-sized}, where $d$ is a constant.
Then \IndS on the intersection graph $\iG{F}$ can be solved in $2^{O(n^{1-1/d})}$ time.
\end{corollary}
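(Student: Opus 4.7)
The plan is to apply Theorem~\ref{thm:arbitrary-size-separator} recursively, using the weight function $\gamma(t) := \log(t+1)$. Since $d\geq 2$, we have $\log(t+1)=O(t^{1-1/d-\eps})$ for, say, $\eps = 1/(2d)$, so the hypothesis of the separator theorem is satisfied. Hence for any set $F'\subseteq F$ of $n'$ objects we can compute, in polynomial time, a balanced separator $F'_\sep$ together with a clique decomposition $\cC(F'_\sep)=\{C_1,\ldots,C_k\}$ such that
\[
\sum_{i=1}^{k} \log(|C_i|+1) \;=\; O\!\left((n')^{1-1/d}\right),
\]
and the two sides $F'_1, F'_2$ of the separator have size at most $\beta n'$, where $\beta = 6^d/(6^d+1) < 1$.

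The crucial observation is that, because each $C_i$ is a clique, \emph{any} independent set in $\iG{F'}$ picks at most one vertex from $C_i$. So the number of candidate intersections of an independent set with $F'_\sep$ is at most
\[
\prod_{i=1}^{k} (|C_i|+1) \;=\; 2^{\sum_i \log(|C_i|+1)} \;=\; 2^{O((n')^{1-1/d})}.
\]
The algorithm enumerates each such candidate $S\subseteq F'_\sep$ (a choice of at most one vertex per clique), discards it if two selected objects intersect, and otherwise recurses on $F'_1\setminus N[S]$ and $F'_2\setminus N[S]$, which are vertex-disjoint and have no edges between them. The maximum independent set of $\iG{F'}$ is then $|S|$ plus the sum of the two recursive answers, maximized over all candidates $S$.

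For correctness, note that every maximum independent set $I$ induces a valid candidate $S = I\cap F'_\sep$ that satisfies the clique constraint, and the remaining two pieces $I\cap F'_1$ and $I\cap F'_2$ are maximum independent sets in the respective residual graphs; conversely, every candidate $S$ combined with optimal choices in the two sides yields a feasible independent set. The running time satisfies
\[
T(n') \;\leq\; 2^{c(n')^{1-1/d}} \cdot \bigl(T(n'_1)+T(n'_2)\bigr) + O((n')^{d+2}),
\]
with $\max(n'_1,n'_2)\leq \beta n'$ and $n'_1+n'_2\leq n'$. A routine induction shows $T(n)\leq A\cdot 2^{C n^{1-1/d}}$ for any constant $C \geq c/(1-\beta^{1-1/d})$ and sufficiently large $A$; the factor $1-\beta^{1-1/d}>0$ (using $\beta<1$) is what drives the geometric contraction in the exponent.

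The only subtle points are (i) verifying that $\gamma(t)=\log(t+1)$ falls within the admissible class of weight functions, which is immediate from $d\geq 2$, and (ii) bounding the recurrence. Neither is really an obstacle; the whole proof is essentially a straightforward divide-and-conquer on top of Theorem~\ref{thm:arbitrary-size-separator}, with the clique structure of the separator doing all the heavy lifting by turning the weighted separator bound into a bound on the number of separator-independent-set configurations.
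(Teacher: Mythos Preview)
Your proof is correct and follows essentially the same approach as the paper: choose $\gamma(t)=\log(t+1)$, enumerate the at most $\prod_i(|C_i|+1)=2^{O(n^{1-1/d})}$ independent choices from the separator cliques, remove the closed neighbourhood of the chosen set, and recurse on the two balanced sides. Your write-up is in fact a bit more careful than the paper's (you explicitly verify that $\gamma$ is admissible and spell out the recurrence solution), but the argument is the same.
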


\begin{proof}
Let $\gamma(t) \eqdef \log (t+1)$, and compute a separator $F_\sep$ for $\iG{F}$
using Theorem~\ref{thm:arbitrary-size-separator}. For each subset $S_\sep
\subseteq F_\sep$ of independent (that is, pairwise non-adjacent) vertices we
find the largest independent set $S$ of $G$ such that $S\supseteq S_\sep$, by
removing the closed neighborhood of $S_\sep$ from~$G$ and recursing on the
remaining connected components. Finally, we report the largest of all these
independent sets. Because a clique $C\in \cC(F_\sep)$ can contribute at most
one vertex to $S_\sep$, we have that the number of candidate sets $S_\sep$ is
at most

\[\prod_{C \in \cC(F_\sep)} (|C| +1)
= 2^{\sum_{C \in \cC(F_\sep)} \log (|C|+1)} = 2^{O(n^{1-1/d})}. \]

Since all components on which we recurse have at most $(6^d/(6^d+1))n$
vertices, the running time $T(n)$ satisfies

\[T(n) = 2^{O(n^{1-1/d})}T\left(\frac{6^d}{6^d+1}n\right) + 2^{O(n^{1-1/d})},\]
which solves to $T(n)=2^{O(n^{1-1/d})}$.
\end{proof}

\subsection{An algorithmic framework for similarly-sized fat objects}
\label{se:similar-size}
We restrict our attention to \emph{similarly-sized} fat objects. More precisely,
we consider intersection graphs of sets $F$ of objects such that, for each $o\in F$,
there are balls $\Bin$ and $\Bout$ in $\Reals^d$ such that $\Bin\subseteq F \subseteq \Bout$,
and $\radius(\Bin)=\alpha$ and $\radius(\Bout) = 1$ for some fatness constant~$\alpha>0$.
The restriction to
similarly-sized objects makes it possible to construct a clique cover of $F$ with the
following property: if we consider the intersection graph $\iG{F}$ where the cliques are contracted
to single vertices, then the contracted graph has constant degree. Moreover, the contracted graph
admits a tree decomposition whose weighted treewidth is $O(n^{1-1/d})$. This tool
allows us to solve many problems on intersection graphs of similarly-sized fat objects.

Our tree-decomposition construction uses the separator theorem from the
previous subsection. That theorem also states that we can compute the
separator for~$\iG{F}$ in polynomial time, provided we are given~$F$. However,
finding the separator if we are only given the graph and not the underlying
set~$F$ is not easy. Note that deciding whether a graph is a unit-disk graph
is already $\exists\Reals$-complete~\cite{Kang12}. Nevertheless, we show that for
similarly-sized fat objects we can find certain tree decompositions with the
desired properties, purely based on the graph~$\iG{F}$.

\paragraph*{$\kappa$-partitions, $\cP$-contractions, and separators.}
Let $G=(V,E)$ be the intersection graph of an (unknown) set~$F$
of similarly-sized fat objects, as defined above.
The separators in the previous section use cliques as basic components.
We need to generalize this slightly, by allowing
connected unions of a constant number of cliques as basic components.
Thus we define a \emph{$\kappa$-partition} of $G$ as
a partition $\cP=(V_1,\dots,V_k)$ of~$V$ such that every partition class~$V_i$
induces a connected subgraph that is the union of at most $\kappa$~cliques.
Note that a 1-partition corresponds to a clique cover of~$G$. A natural way to define the weight of a partition class $V_i$ would be the sum of the weights of the cliques contained in it. However, it will be more convenient to define the weight of a partition class $V_i$ to be $\gamma(|V_i|)$. Since $\kappa$ is a constant, this is within a constant factor of the more natural weight.


Given a $\kappa$-partition $\cP$ of~$G$ we define the \emph{$\cP$-contraction} of $G$,
denoted by $G_{\cP}$, to be the graph obtained by contracting all partition classes
$V_i$ to single vertices and removing loops and parallel edges.
In many applications it is essential that the $\cP$-contraction we work
with has maximum degree bounded by a constant.
From now on, when we speak of the degree of a $\kappa$-partition~$\cP$
we refer to the degree of the corresponding $\cP$-contraction.

The following theorem is very similar to
Theorem~\ref{thm:arbitrary-size-separator},
but it applies only for similarly-sized objects
because of the degree bound on $G_\cP$. The other main difference is that the
separator is defined on the $\cP$-contraction of a given $\kappa$-partition,
instead of on the intersection graph~$G$ itself. The statement is purely
existential; we prove a more constructive theorem in Section~\ref{sec:septw}.
\begin{theorem}\label{thm:gensep}
Let $d\geq 2$ and~$\eps>0$ be constants and let $\gamma$ be a weight function such that $\gamma(t) = O(t^{1-1/d-\eps})$. Let $G=(V,E)$ be the intersection graph of a set of $n$ similarly-sized fat objects in~$\Reals^d$.
Suppose we are given a $\kappa$-partition $\cP$ of $G$ such that $G_\cP$ has maximum degree at most $\Delta$, where~$\kappa$ and $\Delta$ are constants.
Then there exists a $(6^d/(6^d+1))$-balanced separator for $G_{\cP}$ of weight $O(n^{1-1/d})$.
\end{theorem}

\begin{proof}[Proof sketch]
Within each class $C\in\cP$, we replace each object in $C$ with the larger
object $\bigcup_{o\in C} o$. Notice that these new objects are similarly sized
and fat (with worse constants). The new objects define a supergraph $G'$, which is also an intersection graph of similarly sized fat
objects; moreover, $\cP$ is a clique-partition of $G'$. By the condition that
$G_\cP$ has maximum degree at most $\Delta$, any new object intersects at most
$\Delta$ different new objects, that is, if we were to remove all duplicate
objects from the family, then the resulting family would have ply at most
$\Delta+1$.

The arguments in the proof of Theorem~\ref{thm:arbitrary-size-separator} can
be applied for $G'$: the clique partition $\cP^*$ is created by using stabbing
points. Each class of $\cP^*$ can contain objects from at most $\Delta+1$
classes of $\cP$. We can also make sure that $\cP^*$ is a partition that is a
coarsening of $\cP$. By Theorem~\ref{thm:arbitrary-size-separator}, the graph
$G'$ has a weighted separator (with respect to $\cP^*$) of weight
$O(n^{1-1/d})$. Since $\Delta=O(1)$ and each class of $\cP^*$ contains objects from at most $\Delta+1$ classes of $\cP$, this converts into a weighted separator
wrt. $\cP$ of weight $O(n^{1-1/d})$. Since $G$ is a subgraph of $G'$, this is
also a weighted separator of $G$ with respect to $\cP$, and it has the desired
weight.
\end{proof}

The following lemma shows that a partition $\cP$ as needed in Theorem~\ref{thm:gensep} can be computed even in the absence of geometric information. Such partitions can be computed in a greedy manner, as explained next.

\begin{definition}[Greedy partition]\label{def:greedypart}
Given a graph $G$, a partition $\cP$ of $V(G)$ is a greedy partition if
there is a maximal independent set $S$ of $G$ such that each partition class $C\in \cP$
contains exactly one vertex $v_C$ of $S$ together with some neighbors of~$v_C$.
\end{definition}
\begin{lemma}\label{lem:getpartition}
Let $d\geq 2$ be a constant. Then there exist constants $\kappa$ and $\Delta$ such that for any intersection graph $G=(V,E)$ of an (unknown) set of $n$ similarly-sized fat objects in~$\Reals^d$, a greedy $\kappa$-partition $\cP$ for which $G_{\cP}$ has maximum degree $\Delta$ can be computed in polynomial time.

\end{lemma}
\begin{proof}
Let $S\subseteq V$ be a maximal independent set in~$G$ (i.e., it is inclusion-wise maximal). We assign each vertex
$v\in V\setminus S$ to an arbitrary vertex $s\in S$ that is a neighbor of~$v$;
such a vertex $s$ always exists since $S$ is maximal.
For each vertex $s\in S$ define
$V_s := \{s\} \cup \{v\in V\setminus S: \mbox{$v$ is assigned to $s$} \}$.
We prove that the partition $\cP := \{ V_s : s\in S\}$, which
can be computed in polynomial time, has the desired properties.

Let $o_v$ denote the (unknown) object corresponding to a vertex~$v\in V$,
and for a partition class $V_s$ define $\union(V_s) := \bigcup_{v\in V_s} o_v$.
We call $\union(V_s)$ a \emph{union-object}. Let $\cU_S := \{ \union(V_s) : s\in S\}$.
Because the objects defining $G$ are similarly-sized and fat,
there are balls $B_{\myin}(o_v)$ of radius~$\alpha=\Omega(1)$ and $B_{\myout}(o_v)$
of radius~1 such that $B_{\myin}(o_v) \subseteq o_v \subseteq B_{\myout}(o_v)$.
 
Now observe that each union-object $\union(V_s)$ is contained in a ball of radius~3.
Hence, we can stab all balls $B_{\myin}(o_v)$, $v\in V_s$ using $O(1)$ points,
which implies that $\cP$ is a $\kappa$-partition for some $\kappa=O(1)$.

To prove that the maximum degree of $G_\cP$ is $O(1)$, we note that  
any two balls $B_{\myin}(s)$, $B_{\myin}(s')$ with $s,s'\in S$ are disjoint 
(because $S$ is an independent set in $G$). 
Since all union-objects $\union(s')$ that intersect $\union(s)$ are contained in a ball
of radius~9, an easy packing argument now shows that $\union(s)$
intersects~$O(1)$ union-objects~$\union(s)$. Hence, the node in
 $G_\cP$ corresponding to $V_s$ has degree~$O(1)$.
\end{proof}

Note that Lemma~\ref{lem:getpartition} requires the promise that the input graph is indeed an intersection graph, since otherwise the greedy partition may not be a $\kappa$-partition.

\subsection{From separators to \texorpdfstring{$\cP$}{P}-flattened treewidth}\label{sec:septw}
Recall that a \emph{tree decomposition} of a graph $G=(V,E)$
is a pair $(T,\sig)$ where $T$ is a tree and $\sig$ is a mapping from the vertices of $T$ to subsets of $V$
called \emph{bags}, with the following properties. Let $\bags(T,\sig) := \{
\sig(u): u \in V(T) \}$ be the set of bags associated to the vertices of $T$. Then
we have: (1) For any vertex $u\in V$ there is at least one bag in $\bags(T,\sig)$ containing it. (2) For any edge $(u,v)\in E$ there is at least one bag in $\bags(T,\sig)$ containing both $u$ and $v$. (3) For any vertex $u\in V$ the collection of bags in $\bags(T,\sig)$ containing $u$ forms a subtree of~$T$.

The \emph{width} of a tree decomposition is the size of its
largest bag minus~1, and the \emph{treewidth} of a graph $G$ equals the minimum width of a tree
decomposition of $G$. We will need the notion of {\em weighted treewidth}~\cite{EijkhofBK07}.
Here each vertex has a weight, and the
{\em weighted width} of a tree decomposition is the maximum over the bags of the
sum of the weights of the vertices in the bag (note: without the $-1$).
The {\em weighted treewidth} of a graph is the minimum weighted width over its tree decompositions.

Now let $\cP=(V_1,\ldots,V_k)$ be a $\kappa$-partition of a given graph $G$ which is the intersection
graph of similarly-sized fat objects, and let $\gamma$ be a given weight function on partition classes.
We apply the concept of weighted treewidth
to $G_{\cP}$, where we assign each vertex $V_i$ of $G_\cP$ a weight~$\gamma(|V_i|)$.
Because we have a separator for $G_\cP$ of low weight by Theorem~\ref{thm:gensep},
we can prove a bound on the weighted treewidth of~$G_\cP$ using
standard techniques.
\begin{lemma}\label{lem:gettreewidthbound}
Let $\cP$ be a $\kappa$-partition of a family of similarly-sized fat objects
such that $G_\cP$ has maximum degree at most $\Delta$,
where~$\kappa$ and $\Delta$ are constants.
Then the weighted treewidth of $G_\cP$ is $O(n^{1-1/d})$ for any weight function $\gamma$ with $\gamma(t) = O(t^{1-1/d-\eps})$.
\end{lemma}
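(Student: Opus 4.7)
The plan is to apply the standard recursive separator-based construction of a tree decomposition to $G_\cP$, now tracking weighted width rather than unweighted width. Let $W(n)$ denote the best weighted width attainable on any contraction $G_\cP$ meeting the hypotheses of the lemma for $n$ objects; I aim to prove $W(n)=O(n^{1-1/d})$ by strong induction on $n$.

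First I would invoke Theorem~\ref{thm:gensep} to obtain a $(6^d/(6^d+1))$-balanced separator $S\subseteq V(G_\cP)$ with $\weight(S)=O(n^{1-1/d})$, splitting $V(G_\cP)\setminus S$ into two sides $A$ and $B$ corresponding to at most $\beta n$ objects each, where $\beta:=6^d/(6^d+1)<1$. The induced subgraphs $G_\cP[A]$ and $G_\cP[B]$ are themselves contractions of $\kappa$-partitions of subfamilies of similarly-sized fat objects in $\Reals^d$, and since deleting vertices cannot raise degree, their maximum degrees remain at most $\Delta$. Hence the hypotheses of Theorem~\ref{thm:gensep} are inherited, and the induction hypothesis yields tree decompositions $(T_A,\sigma_A)$ and $(T_B,\sigma_B)$ of these subgraphs with weighted width at most $W(\beta n)$.

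Next I would stitch the two decompositions together in the usual way: add $S$ to every bag of $T_A$ and of $T_B$, and attach a new root whose bag is $S$ itself to the roots of $T_A$ and $T_B$. Because $S$ separates $A$ from $B$, every edge of $G_\cP$ has both endpoints in $A\cup S$ or in $B\cup S$, so edge-coverage is preserved; vertex-coverage and the subtree property follow directly from the construction.

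The weighted width of each bag in the combined decomposition is at most that of the corresponding recursive bag plus $\weight(S)$, giving the recurrence
\[
W(n)\leq W(\beta n)+c\,n^{1-1/d}
\]
for some constant $c>0$. Unrolling this yields a convergent geometric series $c\,n^{1-1/d}\sum_{i\geq 0}\beta^{i(1-1/d)}=O(n^{1-1/d})$, with a constant base case absorbing small~$n$. The only mildly delicate point is confirming that the hypotheses of Theorem~\ref{thm:gensep} persist at every level of the recursion; this holds because each restricted object family stays similarly sized and $\alpha$-fat, the induced partition is still a $\kappa$-partition with the same constant $\kappa$, and the degree bound $\Delta$ can only go down upon restriction to $A$ or $B$.
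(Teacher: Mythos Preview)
Your proposal is correct and follows essentially the same approach as the paper: apply Theorem~\ref{thm:gensep} to get a balanced low-weight separator, recurse on the pieces, add the separator to every bag, and solve the resulting geometric recurrence $W(n)\le W(\beta n)+O(n^{1-1/d})$. If anything, you are more explicit than the paper about checking that the hypotheses of Theorem~\ref{thm:gensep} (similar size, fatness, $\kappa$-partition, degree bound) persist under restriction, which is a nice touch.
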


\begin{proof}
The lemma follows from Theorem~\ref{thm:gensep} by a minor variation on
standard techniques---see for example~\cite[Theorem 20]{Bodlaender98}.
Take a separator $S$ of $G_\cP$ as indicated by Theorem~\ref{thm:gensep}.
Recursively, make tree decompositions of the connected components
of $G_\cP \setminus S$. Take the disjoint union of these tree decompositions,
add edges to make the disjoint union connected and then add $S$ to all bags.
We now have a tree decomposition of~$G_\cP$.
As base case, when we have a subgraph of $G_\cP$ with
total weight $O(n^{1-1/d})$, then we take one bag with all vertices in this subgraph.

The weight of bags for subgraphs of $G_\cP$ with $r$ vertices fulfills
$w(r) = O(r^{1-1/d}) + w\left(\frac{6^d}{6^d+1} r\right)$, which gives that the weighted
width of this tree decomposition is $w(n)=O(n^{1-1/d})$.
\end{proof}

In all of our applications, we can fix $\gamma(x)=\log(x+1)$. For a partition $\cP$, we define the \emph{$\cP$-flattened treewidth} of $G$ as the weighted treewidth of $G_\cP$ under the weighting $\gamma(x)=\log(x+1)$.


A \emph{blowup}\index{blowup} of a vertex $v$ by an integer $t$ results in a graph where we
replace the vertex $v$ with a clique of size $t$ (called the clique of $v$),
in which we connect every vertex to the neighborhood of $v$. Note that when
blowing up multiple vertices of a graph in succession, the resulting graph
does not depend on the order of blowups.

Consider the following algorithm to compute a weighted tree decomposition of
graph $G$ with weight function $w$.
\begin{enumerate}
\item Construct an unweighted graph $H$ by blowing up each vertex $v$ of $G$
by $w(v)$. Let $H(v)$ denote the vertices of $H$ that were gained from blowing
up $v$.
\item Compute a tree decomposition of $H$, denoted by $(T_H,\sigma_H)$.
\item Construct a tree decomposition $(T_G\sim T_H,\sigma_G)$ using the same
tree layout the following way: a vertex $v\in G$ is added to a bag if and only
if the corresponding bag in $T_H$ contains all vertices of $H(v)$.
\end{enumerate}

\begin{lemma}\label{lem:weighttoclique}
The weighted width of $(T_G,\sigma_G)$ is at most the width of
$(T_H,\sigma_H)$ plus $1$; furthermore, the weighted treewidth of $G$ is equal
to $1$ plus the treewidth of $H$.
\end{lemma}

\begin{proof}
The proof is a simple modification of folklore insights on treewidth; for
related results see~\cite{BodlaenderR03,BodlaenderF05}. The proof relies on
the following well-known observation~\cite{BodlaenderM93}.

\begin{quotation}
\noindent {\sc Observation.} \textit
Let $W\subseteq V$ form a clique in $G=(V,E)$. Then all tree decompositions
$(T,\sigma)$ of $G$ have a bag $\sigma(u)\in \bags(T,\sigma)$  with
$W\subseteq \sigma (u)$.
\end{quotation}

First, we prove that $(T_G,\sigma_G)$ is a tree decomposition of $G$. From
the observation stated above, we have that for each vertex $v$ and edge $\{v,w\}$
there is a bag in $(T_G,\sigma_G)$ that contains $v$, respectively $\{v,w\}$.
For the third condition of tree decompositions, suppose $j_2$ is in $T_G$ on
the path from $j_1$ to $j_3$. If $v$ belongs to the bags of $j_1$ and $j_3$,
then all vertices in $H(v)$ belong in
$(T_H,\sigma_H)$ to the bags of $j_1$ and $j_3$, hence by the properties of
tree decompositions to the bag of $j_2$, and hence $v \in \sigma_G(j_2)$. It
follows that the preimage of each vertex in $V_G$ is a subtree of $T_G$. The
total weight of vertices in a bag in $(T_G,\sigma_G)$ is never larger than the
size of the corresponding bag in $(T_H,\sigma_H)$. This proves the first claim.

The above algorithm can also be reversed. If we take a tree decomposition
$(T_G,\sigma_G)$ of $G$, we can obtain one of $H$ by replacing in each bag
each vertex $v$ by the clique that results from blowing up $G$. The size of a
bag in the tree decomposition of $H$ now equals the total weight of the
vertices in $G$; hence the width of $(T_G,\sigma_G)$ equals the weighted width
of the obtained tree decomposition of $H$; it follows that  the weighted width
of $(T_G,\sigma_G)$ is equal to the width of $(T_H,\sigma_H)$ minus $1$.

Running the algorithm on an optimal tree decomposition of $H$ and the reverse
on an optimal weighted tree decomposition of $G$ shows that the the weighted
tree\-width of $G$ is at most (respectively, at least) $1$ plus the treewidth of
$H$, concluding our proof.
\end{proof}

We are now ready to prove our main theorem for algorithms.
\begin{theorem}\label{thm:alg_main}
Let $d\geq 2$, $\alpha>0$, and~$\eps>0$ be constants and let $\gamma$ be a weight function such that $1\leq \gamma(t) = O(t^{1-1/d-\eps})$. Then there exist constants $\kappa$ and $\Delta$ such that for any intersection graph $G=(V,E)$ of an (unknown) set of $n$ similarly-sized $\alpha$-fat objects in~$\Reals^d$, there is a $\kappa$-partition $\cP$ with the following properties:
(i) $G_\cP$ has maximum degree at most $\Delta$, and
(ii) $G_\cP$ has weighted treewidth $O(n^{1-1/d})$.
Moreover, such a partition $\cP$ and a corresponding tree decomposition of weight $O(n^{1-1/d})$ can be computed in $2^{O(n^{1-1/d})}$ time.
\end{theorem}

\begin{proof}
We use the greedy partition $\cP$ built around a maximal independent set (Definition~\ref{def:greedypart}). By Lemma~\ref{lem:gettreewidthbound}, the weighted treewidth of $G_\cP$ is $O(n^{1-1/d})$.

To get a tree decomposition, consider the above partition again, with weight function $\gamma(t) = O(t^{1-1/d-\eps})$. We work on the contracted graph $G_\cP$; we intend to simulate the weight function by modifying $G_\cP$. Let $H$ be the graph we get from  $G_\cP$ by blowing up each vertex $v_C$ by an integer that is approximately the weight of the corresponding class, more precisely, we blow up $v_C$ by $\lceil\gamma(|C|)\rceil$. By Lemma~\ref{lem:weighttoclique}, its treewidth (plus one) is a $2$-approximation of the weighted treewidth of $G$ (since $\gamma(t)\ge 1$). Therefore, we can run a treewidth approximation algorithm that is single exponential in the treewidth of $H$. We can use the algorithm from either \cite{Robertson95} or \cite{Bodlaender16} for this, both have running time $2^{O(tw(H))}|V(H)|^{O(1)}=2^{O(n^{1-1/d})}(n\, \gamma(n))^{O(1)}=2^{O(n^{1-1/d})}$, and provide a tree decomposition whose width is a $c$-approximation of the treewidth of $H$. From this tree decomposition we gain a
tree decomposition whose weighted treewidth is a $2c$-approximation of the
weighted treewidth of $G_\cP$. In particular, we get a $2c$-approximation of the $\cP$-flattened treewidth of $G$.
This concludes the proof.
\end{proof}


\subsection{Basic algorithmic applications}

In this section, we give examples of how $\kappa$-partitions and weighted tree
decompositions can be used to obtain subexponential-time algorithms for
classical problems on geometric intersection graphs.

First, we make the following observation about tree decompositions.

\begin{observation}
Given a $\kappa$-partition $\cP$ and a weighted tree decomposition of $G_\cP$
of width $\tau$, there exists a nice tree decomposition of $G$
(i.e., a ``traditional'', non-partitioned tree decomposition) with the
property that each bag is a subset of the union of a number of partition
classes, such that the total weight of those classes is at most $\tau$.
\end{observation}

We can do this by creating a nice version of the weighted tree decomposition of
$G_\cP$, and then replacing every introduce/forget bag (that
introduces/forgets a class of the partition) by a series of introduce/forget
bags (that introduce/forget the individual vertices). We call such a
decomposition a \emph{traditional tree decomposition}.\index{tree decomposition!traditional \tilde} Using such a
decomposition, it becomes easy to give algorithms for problems for which we
already have dynamic-programming algorithms operating on nice tree
decompositions. We can re-use the algorithms for the leaf, introduce, join and
forget cases, and either show that the number of partial solutions remains
bounded (by exploiting the properties of the underlying $\kappa$-partition) or
show that we can discard some irrelevant partial solutions.

We present several applications for our framework, resulting in $2^{O(n^{1-1/d})}$ algorithms for various problems. In addition to the \textsc{Independent Set} algorithm for fat objects based on our separator, we also give a representation-agnostic algorithm for similarly sized fat objects. In contrast, the state of the art algorithm~\cite{MarxS14} requires the geometric representation as input. In the rest of the applications, our algorithms work on intersection graphs of $d$-dimensional similarly sized fat objects; this is usually a larger graph class than what has been studied. We have representation-dependent algorithms for \textsc{Hamiltonian Path} and \textsc{Hamiltonian Cycle}; this is a simple generalization from the algorithm for unit disks that has been known before~\cite{FominLPSZ17,ciac-hamiltonian}. For \textsc{Feedback Vertex Set}, we give a representation-agnostic algorithm with the same running time improvement, over a representation-dependent algorithm that works in $2$-dimensional unit disk graphs~\cite{FominLPSZ17}. For \textsc{$r$-Dominating Set},
we give a representation-agnostic algorithm for $d\ge 2$, which is the first subexponential algorithm in dimension $d\ge 3$, and the first representation-agnostic subexponential for $d=2$~\cite{MarxP15}. (The algorithm in~\cite{MarxP15} is for \textsc{Dominating Set} in unit disk graphs.) Finally, we give representation-agnostic algorithms for \textsc{Steiner Tree, $r$-Dominating Set, Connected Vertex Cover, Connected Feedback Vertex Set} and \textsc{Connected Dominating Set}, which are -- to our knowledge -- also the first subexponential algorithms in geometric intersection graphs for these problems.

In the following, we let $t$ refer to a node of the tree decomposition $T$,
let $\sig(t)$ denote the set of vertices in the bag associated with $t$, and let
$G[t]$ denote the subgraph of $G$ induced by the vertices appearing in bags in
the subtree of $T$ rooted at $t$. We fix our weight function to be
$\gamma(k)=\log(k+1)$.

\begin{theorem}\label{thm:independentset}
Let $\gamma(k)=\log(k+1)$. If a $\kappa$-partition and a weighted tree
decomposition of width at most $\tau$ is given, {\sc Independent Set} and {\sc
Vertex Cover} can be solved in time $2^{\kappa \tau} n^{O(1)}$.
\end{theorem}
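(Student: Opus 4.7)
The plan is to run the standard Independent Set dynamic program on a traditional tree decomposition, with the only nontrivial step being to bound the table size at each bag using the structure provided by the $\kappa$-partition and the logarithmic weight function.

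First, I would invoke the construction described in the paragraph preceding the theorem to convert the weighted tree decomposition of $G_\cP$ (of weighted width at most $\tau$) into a nice traditional tree decomposition of $G$ with the property that each bag $X_t$ is contained in the union of partition classes $V_{i_1},\ldots,V_{i_{m(t)}}$ satisfying $\sum_{j=1}^{m(t)} \gamma(|V_{i_j}|) = \sum_j \log(|V_{i_j}|+1) \leq \tau$. I keep around, for each bag, the list of classes it is contained in and, for each such class, its covering by at most $\kappa$ cliques (guaranteed by the definition of a $\kappa$-partition).

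The key counting step is to bound the number of independent sets inside a bag $X_t$. Any independent set $I$ intersects each clique in at most one vertex, so if $V_{i_j}$ is covered by cliques $C_{j,1},\ldots,C_{j,\kappa}$, the number of possible restrictions $I \cap V_{i_j}$ is at most $\prod_{\ell=1}^{\kappa}(|C_{j,\ell}|+1) \leq (|V_{i_j}|+1)^{\kappa}$. Multiplying over the classes covering $X_t$,
\[
\#\{\text{ind.\ sets in } X_t\} \;\leq\; \prod_{j=1}^{m(t)} (|V_{i_j}|+1)^{\kappa} \;=\; 2^{\kappa \sum_{j} \log(|V_{i_j}|+1)} \;\leq\; 2^{\kappa \tau}.
\]
I would then run the textbook Independent Set dynamic program over a nice tree decomposition, where the DP table at node $t$ is indexed by independent sets of $X_t$ and stores the maximum-weight independent set in $G[t]$ with that trace. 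The leaf/introduce/forget/join transitions are the standard ones and each costs $\poly(n)$ per state; by the bound above, the total number of states at any bag is $2^{\kappa\tau}$, so the total running time is $2^{\kappa\tau} n^{O(1)}$. Reading off the optimum at the root finishes Independent Set, and Vertex Cover follows because its optimum equals $n$ minus the Independent Set optimum (alternatively, the same DP can be phrased directly for Vertex Cover, with the same per-bag state bound since a vertex cover is the complement of an independent set within each bag).

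The only real obstacle is the bag-size bound, and the calculation above is essentially all there is to it: the weight function $\log(|C|+1)$ is tailored precisely so that each clique contributes exactly one $\log$-factor to $\tau$ and hence at most a factor $(|C|+1)$ to the state count, with the constant $\kappa$ in the exponent absorbing the fact that a partition class may be the union of up to $\kappa$ cliques rather than a single one.
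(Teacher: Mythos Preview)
Your proposal is correct and follows essentially the same approach as the paper: both run the standard bounded-treewidth Independent Set DP on the traditional tree decomposition, bound the number of relevant subsets per bag by $\prod_j (|V_{i_j}|+1)^\kappa = 2^{\kappa\tau}$ using that an independent set meets each of the $\kappa$ cliques covering a class in at most one vertex, and handle Vertex Cover by complementation. Your write-up is in fact slightly more explicit about the per-clique factor $\prod_\ell (|C_{j,\ell}|+1)\le (|V_{i_j}|+1)^\kappa$, but the argument is the same.
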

\vspace{-1em}
\begin{proof}
A well-known algorithm (see, e.g., \cite{fptbook}) for solving
\textsc{Independent Set} on graphs of bounded treewidth, computes, for each
bag $\sigma(t)$ and subset $S\subseteq \sigma(t)$, the maximum size $c[t, S]$ of an
independent subset $\hat{S}\subset G[t]$ such that $\hat{S}\cap \sigma(t) = S$.

Let $t$ be a node of the weighted tree decomposition of $G_\cP$. Recall that the
corresponding bag is $\sigma(t)$, and it consists of partition classes. Let
$X_t=\bigcup_{C \in \sigma(t)} C$ be the set of vertices that occur in a
partition class in $\sigma(t)$. An independent set never contains more than
one vertex of a clique. Therefore, since from each partition class we can
select at most $\kappa$ vertices (one vertex from each clique), the number of
subsets $\hat{S}$ that need to be considered is at most  \[\prod_{C \in
\sigma(t)} (|C|+1)^\kappa = \exp\big(\sum_{C\in \sigma(t)} \kappa
\log{(|C|+1)}\big)=2^{\kappa\tau}.\]
This bound also holds for all the bags of the traditional tree decomposition that we create.
Applying the standard algorithm for \textsc{Independent Set} on the traditional
tree decomposition, using the fact that only solutions that select at most one
vertex from each clique get a non-zero value, we obtain the claimed algorithm for \IS.
\textsc{Vertex Cover} is simply the complement of \IS.
\end{proof}

Combining this result with Theorem \ref{thm:alg_main} gives the following result:

\begin{corollary}\label{cor:independentset}
Let $d\geq 2$ be a fixed constant, and let $G$ be an intersection graph of
similarly sized fat objects in $\Reals^d$. Then we can solve \IS and {\sc Vertex Cover}
on $G$ in $2^{O(n^{1-1/d})}$ time, even if the geometric representation
is not given.
\end{corollary}

In the remainder of this section, because we need additional assumptions that
are derived from the properties of intersection graphs, we state our results
in terms of algorithms operating directly on intersection graphs. However,
note that underlying each of these results is an algorithm operating on a
weighted tree decomposition of the contracted graph.

To obtain the algorithm for \IS, we exploited the fact that we can
select at most one vertex from each clique, and that thus, we can select at
most $\kappa$ vertices from each partition class. For \textsc{Dominating Set}, our
bound for the treewidth is however not enough. Instead, we need the following,
stronger result, which states that the weight of a bag in the decomposition
can still be bounded by $O(n^{1-1/d})$, even if we take the weight to be the
total weight of the classes in the bag \emph{and} of classes at most $r$
hops away in $G_\cP$.

\begin{theorem}\label{thm:smallnbh}
Let $d\geq 2, r\geq 1$ be constants and let $\gamma(t)=O(t^{1-1/d-\eps})$ be a weight function. Then there exists constants $\kappa,\Delta$, such that for any intersection graph $G$ of $n$ similarly-sized $d$-dimensional fat
objects there exists a $\kappa$-partition
$\cP$ and a corresponding $G_\cP$ of maximum degree at most $\Delta$, where
$G_\cP$ has a weighted tree decomposition with the additional property that
for any node $t$, the total weight of the partition classes
\[\{C\in\cP \mid \textrm{ there exist } v\in C,\, C'\in \sigma(t),\, v'\in C' \textrm{ such that $v$ and $v'$ are at distance at most $r$ in $G$ }\}\]
is $O(n^{1-1/d})$.
\end{theorem}

\begin{proof} As per Theorem~\ref{thm:alg_main}, there exist constants
$\kappa,\Delta=O(1)$ such that $G$ has a $\kappa$-partition $\cP$ in which
each class of the partition is adjacent to at most $\Delta$ other classes.

For any pair of classes $C,C' \in V(G_\cP)$ whose distance in $G_\cP$ is at
most $r$, we introduce a new copy of every object in $C$. This gives a new
intersection graph $G^r$. Note that for each object we now have at most
$1+\Delta+\Delta(\Delta-1)+ \dots \Delta(\Delta-1)^{r-1} = c =O(\Delta^r)$
copies. We create the following $\kappa c$-partition $\cP^r$: for each class
$C$ of the original partition, create a class that contains a copy of each
object of $C$ and a copy of each object from the classes at distance at most
$r$ from $C$. The resulting graph $G^r$ has at most $c n = O(n)$ vertices, and
it is an intersection graph of similarly-sized objects, and $G^r_{\cP^r}$ has constant-bounded maximum degree. Therefore, we can find
a weighted tree decomposition of $G^r_{\cP^r}$ of width $O(n^{1-1/d})$ by
using the machinery of  Section~\ref{sec:septw}.

This decomposition can also be used as a decomposition for $G$ and the
original $\kappa$-partition $\cP$, by replacing each partition class of
$\cP^r$ with the  original partition classes contained within; this increases
the width of the tree decomposition by at most a constant multiplicative
factor.
\end{proof}

\begin{theorem}\label{thm:dominatingset}
Let $d\geq 2$ and $r\geq 1$ be constants, and let $G$ be an intersection graph of
similarly sized fat objects in $\Reals^d$. Then {\sc Distance-$r$-Dominating Set} can be
solved on $G$ in $2^{O(n^{1-1/d})}$ time.
\end{theorem}

\begin{proof}
We first present the argument for \textsc{Dominating Set}. It is easy to see
that from each partition class, we need to select at most $\kappa^2(\Delta+1)$
vertices: each partition class can be partitioned into at most $\kappa$
cliques, and each of these cliques is adjacent to at most $\kappa(\Delta +1)$
other cliques. If we select at least $\kappa(\Delta+1)+1$ vertices from a
clique, we can instead select only one vertex from the clique, and select at
least one vertex from each neighboring clique.

We once again proceed by dynamic programming on a traditional tree
decomposition (see e.g. \cite{fptbook} for an algorithm solving \textsc{Dominating Set}
using tree decompositions). Rather than needing just two states per
vertex (in the solution or not), we need three: a vertex can be either in the
solution, not in the solution and not dominated, or not in the solution and
dominated. After processing each bag, we discard partial solutions that select
more than $\kappa^2(\Delta+1)$ vertices from any class of the partition. Note
that all vertices of each partition class are introduced before any are
forgotten, so we can guarantee that we indeed never select more than
$\kappa^2(\Delta+1)$ vertices from each partition class.

Whether a given vertex $v$ outside the solution is dominated or not is completely
determined by the vertices that are in its class and in neighboring classes. While the partial solution does not track this explicitly
for vertices that are forgotten, by using the fact that we need to select at
most $\kappa^2(\Delta+1)$ vertices from each class of the partition, and the fact
that Theorem~\ref{thm:smallnbh} bounds the total weight of the neighborhood
of the partition classes in a bag, we see that the number of partial solutions to be considered is at most
\[\prod_{C\in N(\sig(t))}
(|C_i|+1)^{\kappa^2(\Delta+1)} = \exp\left(\kappa^2(\Delta+1) \sum_{C\in N(\sig(t))} \log{(|C|
+ 1)}\right) = 2^{O(n^{1-1/d})},\]
where the product (resp., sum) is taken over $N(\sig(t))$, which denotes the set of
partition classes $C$ that appear in the current bag $\sig(t)$ or are a neighbors of
such a class. Therefore, we can solve \textsc{Dominating Set} in $2^{O(n^{1-1/d})}$ time.

For the generalization where $r>1$, the argument that we need to select at
most $\kappa^2(\Delta+1)$ vertices from each clique still holds: moving a
vertex from a clique with more than $\kappa(\Delta +1)$ vertices selected to
an adjacent clique only decreases the distance to any vertices it helps cover.
The dynamic programming algorithm needs, in a partial solution, to track at
what distance from a vertex in the solution each vertex is. This, once again,
is completely determined by the solution in partition classes at distance at
most $r$; the number of such cases we can bound using
Theorem~\ref{thm:smallnbh}.
\end{proof}

\subsection{Connectivity problems and the rank-based approach}

It is possible to combine our framework with the rank-based approach \cite{single-exponential} to obtain $2^{O(n^{1-1/d})}$-time algorithms for problems with connectivity constraints, avoiding the extra logarithmic factor associated with tracking connectivity constraints in the na\"ive way. To illustrate this, we now give an algorithm for Steiner Tree. We consider the following variant of Steiner Tree:

\defproblem{Steiner Tree}
{A graph $G=(V,E)$, a set of terminal vertices $K\subseteq V$ and integer $s$.}
{Decide if there is a vertex set $X\subseteq V$ of size at most $s$, such that $K\subseteq X$, and $X$ induces a connected subgraph of $G$.}


The proof requires the following lemma.

\begin{lemma}\label{lem:stlem}
Let $\cP$ be a $\kappa$-partition of a graph $G$ where $G_\cP$ has maximum degree $\Delta$.
Let $A$ be a clique of a $\kappa$-sized clique partition of a class
$C\in \cP$. Suppose $X$ is a minimal solution for {\sc Steiner Tree} (i.e., no proper subset $X'$ of $X$ is also a solution). Then $X$ contains at most $\kappa
(\Delta+1)$ vertices from $A$ that are not also in $K$. Any
minimal solution thus contains at most $\kappa^2 (\Delta+1)$ vertices (that
are not also in $K$) from each partition class.
\end{lemma}

\begin{proof}
Consider the following process: To every vertex $v\in (A\cap X)\setminus K$ we greedily assign a
neighbor $u\in X\setminus A$ such that $u$ is adjacent to $v$ and $u$ is not
adjacent to any other previously assigned neighbor. We call such a neighbor a \emph{private neighbor}. We repeat this process until every vertex $v\in (A\cap X)\setminus K$ has been assigned a private neighbor.

If at some point it is not possible to assign a private neighbor to a vertex $v\in (A\cap X)\setminus K$, then $v$ has no neighbors in $X\setminus A$ that are not adjacent to a private neighbor of some vertex $v'\in (A\cap X)\setminus K$. Then either $v$ has no neighbors in $X$ outside $A$, or all such neighbors are already adjacent to some previously assigned private neighbor. In the former case all neighbors of $v$ are already adjacent (since they are in the clique $A$) and $X$ remains connected after removing $v$. In the latter case, $X$ also remains connected after removing $v$, since any neighbor (in $X$) of $v$ can be reached from some other vertex $v'\in (A\cap X)\setminus K$ through its private neighbor. Both cases contradict our assumption that $X$ is a minimal solution, so every vertex can be assigned a private neighbor.

We now note that since the neighborhood of $A$ can be covered by at most
$\kappa (\Delta +1)$ cliques, this gives us an upper bound on the number of
private neighbors that can be assigned and thus bounds the number of vertices
that can be selected from any partition class.
\end{proof}

\begin{theorem}\label{thm:steinertree}
Let $d\geq 2$ be a constant, and let $G$ be an intersection graph of similarly
sized fat objects in $\Reals^d$. Then {\sc Steiner Tree} can be solved on $G$ in
$2^{O(n^{1-1/d})}$ time.
\end{theorem}

\begin{proof}
The algorithm again works by dynamic programming on a traditional tree
decomposition. The leaf, introduce, join and forget cases can be handled as
they are in the conventional algorithm for \textsc{Steiner Tree} based on tree decompositions,
see e.g. \cite{single-exponential}. However, after processing
each bag, we can reduce the number of partial solutions that need to be
considered by exploiting the properties of the underlying $\kappa$-partition.

To this end, we first need a bound on the number of vertices that can be
selected from each class of the $\kappa$-partition $\cP$.

The algorithm for \textsc{Steiner Tree} presented in \cite{single-exponential}
is for the weighted case, but we can ignore the weights by setting them to
$1$. A partial solution is then represented by a subset $S\subseteq \sig(t)$
(representing the intersection of the partial solution with the vertices in
the bag $\sig(t)$), together with an equivalence relation on $S$ (which indicates
which vertices are in the same connected component of the partial solution).

By Lemma~\ref{lem:stlem} we select at most $\kappa^2 (\Delta+1)$ vertices from each partition
class, so we can discard partial solutions that select more than this number of
vertices from any partition class. Then the number of subsets $S$ considered
is at most

\[ \prod_{C\in\sigma(t)} (|C|+1)^{\kappa^2 (\Delta+1)} =
\exp\left(\kappa^2 (\Delta+1) \cdot \sum_{C\in\sigma(t)}
\log(|C|+1)\right)  \leq \exp\big(\kappa^2 (\Delta+1) n^{1-1/d}\big).
\]
For any such subset $S$, the number of possible equivalence relations is
$2^{\Theta(|S|\log{|S|})}$. However, the rank-based approach
\cite{single-exponential} provides an algorithm called ``\emph{reduce}'' that,
given a set of equivalence relations\footnote{What we refer to as
``equivalence relation'' is called a ``partition'' in \cite{single-exponential}.}
on $S$, outputs a representative set of equivalence
relations of size at most $2^{|S|}$. Thus, by running the reduce
algorithm after processing each bag, we can keep the number of equivalence
relations considered bounded by $2^{O(|S|)}$.

Since $|S| = O(\kappa^2 (\Delta+1) n^{1-1/d})$ (we select at most
$\kappa^2 (\Delta+1)$ vertices from each partition class and each bag contains
at most $O(n^{1-1/d})$ partition classes), for any subset $S$, the rank-based
approach guarantees that we need to consider at most $2^{O(\kappa^2 (\Delta+1)
{n^{1-1/d}})}$ representative equivalence classes of $S$.
\end{proof}

\begin{theorem}\label{thm:fvsalgo}
Let $d\geq 2$ be a constant, and let $G$ be an intersection graph of similarly
sized fat objects in $\Reals^d$.  Then {\sc Maximum Induced Forest}
(and \textsc{Feedback Vertex Set}) can be solved in
$2^{O(n^{1-1/d})}$ time in $G$.
\end{theorem}

\begin{proof}
We once again proceed by dynamic programming on a traditional tree decomposition corresponding to the weighted tree decomposition of $G_\cP$ of width $\tau$, where $\cP$ is a $\kappa$-partition, and the maximum degree of $G_\cP$ is at most $\Delta$. We describe the algorithm for
{\sc Maximum Induced Forest}, which also gives an algorithm for {\sc Feedback Vertex Set} as it is simply its complement.

Using the rank-based approach with \textsc{Maximum Induced Forest} requires some modifications to the problem, since the rank-based approach is designed to get maximum connectivity, whereas in \textsc{Maximum Induced Forest}, we aim to ``minimize'' connectivity (i.e., avoid creating cycles). To overcome this issue, the authors of \cite{single-exponential} add a special universal vertex $v_0$ to the graph (increasing the width of the decomposition by $1$) and ask (to decide if a Maximum Induced Forest of size $k$ exists in the graph) whether we can delete some of the edges incident to $v_0$ such that there exists an induced, connected subgraph, including $v_0$, of size $k+1$ in the modified graph that has exactly $k$ edges. Essentially, the universal vertex allows us to arbitrarily glue together the trees of an induced forest into a single (connected) tree. This thus reformulates the problem such that we now aim to find a connected solution.

The main observation that allows us to use our framework, is that from each clique we can select at most $2$ vertices (otherwise, the solution would become cyclic), and that thus, we only need to consider partial solutions that select at most $2\kappa$ vertices from each partition class. The number of such subsets is at most $2^{O(\kappa n^{1-1/d})}$. Since we only need to track connectivity among these $2\kappa$ vertices (plus the universal vertex), the rank-based approach allows us to keep the number of equivalence relations considered single-exponential in $\kappa n^{1-1/d}$. Thus, we obtain a $2^{O(\kappa n^{1-1/d})}n^O(1)$-time algorithm.
\end{proof}

\paragraph*{Additional Problems}

Our approach gives $2^{O(n^{1-1/d})}$-time algorithms on geometric intersection graphs of $d$-dimensional similarly-sized fat objects for almost any problem with the property that the solution (or the complement thereof) can only contain a constant (possibly depending on the ``degree'' of the cliques) number of vertices of any clique. We can also use our approach for variations of the following problems, that require the solution to be connected:

\begin{itemize}
    \item {\sc Connected Vertex Cover} and {\sc Connected Dominating Set}: these problems may be solved similarly to their normal variants (which do not require the solution to be connected), using the rank-based approach to keep the number of equivalence classes considered single-exponential. In case of {\sc Connected Vertex Cover}, the complement is an independent set, therefore the complement may contain at most one vertex from each clique. In case of {\sc Connected Dominating Set}, it can be shown that each clique can contain at most $O(\kappa^2\Delta)$ vertices from a minimum connected dominating set.

    \item  {\sc Connected Feedback Vertex Set}: the algorithm for Maximum Induced Forest can be modified to track that the complement of the solution is connected, and this can be done using the same connectivity-tracking equivalence relation that keeps the solution cycle-free.

\end{itemize}

\begin{theorem}
For any constant dimension $d\geq 2$, {\sc Connected Vertex Cover}, {\sc Connected Dominating Set} and {\sc Connected Feedback Vertex Set} can be solved in time $2^{O(n^{1-1/d})}$ on intersection graphs of similarly-sized $d$-dimensional fat objects.
\end{theorem}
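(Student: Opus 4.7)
The plan is to combine Theorem~\ref{thm:alg_main} with the rank-based approach of~\cite{single-exponential}, following the template established for \textsc{Steiner Tree} in Theorem~\ref{thm:steinertree} and \textsc{Maximum Induced Forest} in Theorem~\ref{thm:fvsalgo}. First I would invoke Theorem~\ref{thm:alg_main} with weight function $\gamma(t) = \log(t+1)$ to obtain, in $2^{O(n^{1-1/d})}$ time, a $\kappa$-partition $\cP$ with $G_\cP$ of maximum degree $\Delta$ and a weighted tree decomposition of $G_\cP$ of weighted width $\tau = O(n^{1-1/d})$. Unrolling this into a traditional tree decomposition, each bag $X_t$ is covered by partition classes of total weight at most $\tau$, so any DP state that uses at most $O_\kappa(1)$ vertices per partition class yields at most $2^{O(\kappa \tau)} = 2^{O(n^{1-1/d})}$ candidates per bag.

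For each problem, the first step is to bound the number of relevant partial solutions per bag. For \textsc{Connected Vertex Cover}, the complement of a vertex cover is independent, so at most one vertex per clique lies outside the cover and hence at most $\kappa$ per partition class. The number of candidate ``uncovered'' subsets per bag is then at most $\prod_{V_i \in \sigma(t)} (|V_i|+1)^\kappa = 2^{O(\kappa \tau)}$, and connectivity of the selected cover is tracked by an equivalence relation on the bag vertices lying in the cover, sparsifiable to $2^{O(\kappa \tau)}$ representatives by the reduce procedure of~\cite{single-exponential}. For \textsc{Connected Dominating Set}, the swap argument from Theorem~\ref{thm:dominatingset} shows that an optimal solution contains at most $\kappa(\Delta+1)$ vertices per clique, hence at most $\kappa^2(\Delta+1)$ per partition class; the domination status of non-selected vertices is determined by classes within distance one of the bag, whose total weight is $O(n^{1-1/d})$ by Theorem~\ref{thm:smallnbh}. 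Combining this with rank-based connectivity tracking keeps the state count at $2^{O(n^{1-1/d})}$.

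For \textsc{Connected Feedback Vertex Set}, I would reuse the \textsc{Maximum Induced Forest} dynamic program of Theorem~\ref{thm:fvsalgo}: the induced forest (the complement of the FVS) contains at most two vertices per clique and hence at most $2\kappa$ per partition class. On top of the existing rank-based relation that uses the universal-vertex reformulation of~\cite{single-exponential} to enforce acyclicity, a second rank-based relation tracks connectivity of the FVS itself. The key observation that keeps this tractable despite the fact that $|FVS \cap X_t|$ may be large is that the FVS restricted to any single clique is again a clique, and thus contributes a single connected component; FVS-connectivity therefore only needs to be tracked at the level of the at most $\kappa \cdot |\sigma(t)| = O(\kappa \tau)$ cliques that contain an FVS vertex in the bag. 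Equivalence relations on this many elements sparsify to $2^{O(\kappa \tau)}$ representatives.

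The main obstacle is the \textsc{Connected Feedback Vertex Set} case, where two different connectivity-type invariants---acyclicity of $G \setminus S$ via the universal-vertex trick, and connectivity of $S$ itself at the clique level---must be maintained simultaneously. The reason this still works is that each invariant is a connectivity requirement on a distinguished vertex set of size $O(\kappa \tau)$ inside the bag, so two independent applications of the reduce procedure suffice and the combined state count remains $2^{O(\kappa \tau)}$. Summing the per-bag work over the $O(n)$ bags, and using $\kappa, \Delta = O(1)$ together with $\tau = O(n^{1-1/d})$, yields the claimed running time $2^{O(n^{1-1/d})} n^{O(1)} = 2^{O(n^{1-1/d})}$.
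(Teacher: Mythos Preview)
Your approach matches the paper's sketch, and is in fact more detailed than what the paper provides (the paper only gives a one-paragraph outline for all three problems). Two points need tightening, however.

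For \textsc{Connected Vertex Cover}, you write that connectivity of the cover is tracked by an equivalence relation ``on the bag vertices lying in the cover'' and that this sparsifies to $2^{O(\kappa\tau)}$ representatives. As stated this is not justified: the cover can contain almost all vertices of a bag, so the ground set of the equivalence relation can have size $\Theta(n)$, and the reduce procedure of~\cite{single-exponential} only guarantees $2^{|\hat S|}$ representatives. The fix is exactly the observation you already make for \textsc{Connected Feedback Vertex Set}: the cover intersected with each clique is the clique minus at most one vertex, hence connected, so connectivity need only be tracked at the level of the at most $\kappa|\sigma(t)|\le \kappa\tau$ cliques. You should apply that observation here as well.

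For \textsc{Connected Dominating Set}, the swap argument from Theorem~\ref{thm:dominatingset} does not transfer verbatim: replacing many selected vertices in a clique by one vertex there plus one in each neighbouring clique can disconnect the solution, since the removed vertices may have been the only links to other parts of $S$. A separate (still elementary) argument is required to bound the number of vertices an optimal \emph{connected} dominating set takes from each clique; the paper itself only asserts such a bound of $O(\kappa^2\Delta)$ without proof. Once some $O_{\kappa,\Delta}(1)$ bound is established, the rest of your argument (including the use of Theorem~\ref{thm:smallnbh} and rank-based connectivity tracking) goes through.
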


\paragraph*{Hamiltonian Cycle} Our separator theorems imply that
\textsc{Hamiltonian Cycle} and \textsc{Hamiltonian Path} can be solved in
$2^{O(n^{1-1/d})}$ time on intersection graphs of sim\-i\-larly-sized
$d$-dimensional fat objects. However, in contrast to our other results, this
requires that a geometric representation of the graph is given. Given a
geometric representation, we can compute a
$1$-partition $\cP$ where $G_\cP$ has constant degree. It can be shown
that a cycle/path only needs to use at most two edges between each pair of
cliques; see e.g. \cite{ito-hamiltonian, ciac-hamiltonian} and that we can
obtain an equivalent instance with all but a constant number of vertices
removed from each clique. Our separator theorem implies that this graph has
treewidth $O(n^{1-1/d})$, and \textsc{Hamiltonian Cycle} and
\textsc{Hamiltonian Path} can then be solved using dynamic programming on a
tree decomposition.

\begin{theorem}
For any constant dimension $d\geq 2$, {\sc Hamiltonian Cycle} and {\sc
Hamiltonian Path} can be solved in time $2^{O(n^{1-1/d})}$ on the intersection
graph of similarly-sized $d$-dimensional fat objects which are given as input.
\end{theorem}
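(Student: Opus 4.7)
The plan is to reduce the problem to an equivalent instance where every clique in a suitable clique cover has constant size, and then apply the separator machinery developed earlier in the paper. First, using the given geometric representation, I would construct a $1$-partition (clique cover) $\cP$ of $V(G)$ whose $\cP$-contraction $G_\cP$ has constant maximum degree~$\Delta$. This is straightforward for similarly-sized fat objects: overlay an axis-aligned grid whose cells have side length comparable to the object diameter, stab the inner balls assigned to each cell by $O(1)$ points (possible by the similar-size and fatness assumption), and group the objects stabbed by a common point into a clique. Only objects in $O(1)$ neighboring cells can be adjacent, so the resulting clique graph has constant degree. This step is where the geometric input is genuinely required, since merely recognizing a similarly-sized fat intersection graph is known to be hard.

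Second, I would establish the structural lemma used in~\cite{ito-hamiltonian, ciac-hamiltonian}: any Hamiltonian cycle or path in $G$ may be assumed to use at most two edges between any pair of cliques $C, C' \in \cP$. The argument is a local exchange: if three or more cross edges are used between $C$ and $C'$, one can combine an unused clique edge inside $C$ with an unused clique edge inside $C'$ to re-route the tour and eliminate a cross edge, without changing the set of visited vertices. Iterating, every Hamiltonian tour visits each clique $C$ via at most $\Delta$ contiguous subpaths. A folding argument then shows that any clique of size exceeding a constant $K = \Theta(\Delta)$ may be shrunk to $K$ vertices while preserving Hamiltonicity: the extra vertices can always be re-inserted into the interior of some subpath using clique edges. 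Applied to every clique, this yields an equivalent instance $G'$ of size $O(n)$ in which every clique of $\cP$ has constant size.

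Third, I would bound the treewidth of $G'$ and conclude by standard dynamic programming. Applying Theorem~\ref{thm:arbitrary-size-separator} with the constant weight function $\gamma \equiv 1$ (which trivially satisfies the hypothesis) gives a balanced separator consisting of $O(n^{1-1/d})$ cliques, hence, since each clique has at most $K$ vertices, of $O(n^{1-1/d})$ vertices; the recursive argument of Lemma~\ref{lem:gettreewidthbound} turns this into a tree decomposition of $G'$ of width $O(n^{1-1/d})$, computable in $2^{O(n^{1-1/d})}$ time via Theorem~\ref{thm:alg_main}. A standard dynamic program for Hamiltonian cycle/path on tree decompositions, sped up via the rank-based approach~\cite{single-exponential} so that the number of connectivity equivalence relations per bag stays $2^{O(\mathrm{width})}$, then solves the reduced instance in $2^{O(n^{1-1/d})}$ time. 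The main obstacle is the clique-shrinking reduction in the second step: while the two-edges-per-pair exchange and the folding are essentially folklore, combining them into a clean equivalent instance is the only place where any real work is needed beyond invoking the already-developed framework.
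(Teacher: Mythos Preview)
Your proposal is correct and follows essentially the same approach as the paper: build a geometric $1$-partition with bounded-degree contraction, invoke the two-edges-per-clique-pair exchange from~\cite{ito-hamiltonian, ciac-hamiltonian} to shrink each clique to constant size, and then apply the separator/treewidth machinery to the reduced (still geometric) instance. The paper's own sketch is terser and does not explicitly name the rank-based step for the final dynamic program, but the strategy is identical.
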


Note that the geometric representation is only needed to
ensure that we can find a $1$-partition $\cP$ such that $G_\cP$ has constant degree. Without the geometric representation the complexity
of computing such a $1$-partition is unknown, and a challenging open question.

\section{The lower-bound framework}\label{sec:lowerbounds}

The goal of this section is to provide a general framework to exclude algorithms with running time $2^{o(n^{1-1/d})}$ in intersection graphs.
To get the strongest results, we show our lower bounds where possible for a more restricted graph class, namely subgraphs of $d$-dimensional induced grid graphs. Induced grid graphs are intersection graphs of unit balls, so they are a subclass of intersection graphs of similarly sized fat objects.
We need to use a different approach for $d=2$ than for $d>2$; this is because of
the topological restrictions introduced by planarity. Luckily, the difference between $d=2$ and $d>2$ is only in the need of two
different ``embedding theorems''; when applying the framework to specific problems, the same gadgetry works both for $d=2$ and for $d>2$. In particular, in $\Reals^2$, constructing crossover gadgets is not necessary with our framework. To apply our framework, we need a graph problem $\cP$ on grid graphs in $\Reals^d$, $d\geq 2$. Suppose that $\cP$ admits a reduction from $3$-SAT using constant size variable and clause gadgets and a wire gadget, whose size is a constant multiple of its length. Then the framework implies that $\cP$ has no $2^{o(n^{1-1/d})}$ time algorithm in $d$-dimensional grid graphs for all $d\ge 2$, unless ETH fails. We remark that such gadgets can often be obtained by looking at classical NP-hardness proofs in the literature, and introducing minor tweaks if necessary.

\subsection{Lower bounds in two dimensions}

To prove lower bounds in two-dimensional grids, we introduce an intermediate
problem.

For an integer $n$, let $[n]=\{1,\dots, n\}$. We denote by $G^2(n_1,n_2)$ the two-dimensional grid graph with vertex set
$\civ{n_1} \times \civ{n_2}$. We say that a graph $H$ is \emph{embeddable} in
$G^2(n_1,n_2)$ if it is a topological minor of $G^2(n_1,n_2)$, i.e., if $H$\index{minor!topological \tilde}
has a subdivision that is a subgraph of $G^2(n_1,n_2)$. Finally, for a given
3-CNF formula $\phi$, its \emph{incidence graph} $G_{\phi}$ is the bipartite graph on\index{graph!incidence \tilde}
its variables and clauses, where a variable vertex and a clause vertex are
connected by an edge if the variable appears in the clause.

We say that a CNF formula $\phi$ is a $(3,3)$-CNF formula if all clauses in $\phi$ have size at most $3$ and each variable
occurs at most $3$ times.\footnote{Crucially, we allow clauses of size $2$, as formulas with clause size exactly $3$ and at most $3$ occurrences per variable are trivially satisfiable~\cite{DBLP:journals/dam/Tovey84}.} Note that in such
formulas the number of clauses and variables is within constant factor of
each other. The $(3,3)$-SAT problem asks to decide the satisfiability of
a $(3,3)$-CNF formula.

\begin{proposition}\label{prop:33SAT}
There is no $2^{o(n)}$ algorithm for $(3,3)$-SAT unless ETH fails.
\end{proposition}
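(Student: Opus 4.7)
The plan is a two-step reduction from standard $3$-SAT, using ETH and the Sparsification Lemma as the starting point. I would first argue that we may assume the source $3$-SAT instance is already sparse, and then provide a linear-size reduction that bounds the number of occurrences of each variable by $3$.

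For the first step, I would invoke the Sparsification Lemma of Impagliazzo, Paturi, and Zane: under ETH, there is no $2^{o(n+m)}$-time algorithm for $3$-SAT on $n$ variables and $m$ clauses. In particular, it suffices to rule out a $2^{o(N)}$ algorithm for $(3,3)$-SAT, where $N$ is the total number of variables and clauses. For the second step, given a $3$-CNF formula $\phi$ with $n$ variables and $m$ clauses, I would replace each variable $x$ that occurs $k$ times (say in literal positions $\ell_1,\dots,\ell_k$) by $k$ fresh variables $x^{(1)},\dots,x^{(k)}$, substituting $x^{(i)}$ for $x$ in position $\ell_i$, and adding the cyclic chain of implications
\[
(\neg x^{(1)}\vee x^{(2)}),\ (\neg x^{(2)}\vee x^{(3)}),\ \dots,\ (\neg x^{(k-1)}\vee x^{(k)}),\ (\neg x^{(k)}\vee x^{(1)}),
\]
which enforces $x^{(1)} = x^{(2)} = \cdots = x^{(k)}$ in every satisfying assignment. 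The resulting formula $\phi'$ is equisatisfiable with $\phi$, all its clauses have size at most $3$, and each fresh variable $x^{(i)}$ appears exactly $3$ times (once in the substituted original clause, once positively in one chain clause, and once negatively in the adjacent chain clause).

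Next I would verify the size bound. The number of variables of $\phi'$ equals the total number of literal occurrences of $\phi$, which is at most $3m$, and the number of clauses of $\phi'$ is $m + \sum_x k_x \leq m + 3m = 4m$. Hence the total size $N'$ of $\phi'$ is $O(n+m)$. Therefore a hypothetical $2^{o(N')}$-time algorithm for $(3,3)$-SAT, composed with the reduction, would decide $3$-SAT in $2^{o(n+m)}$ time, contradicting ETH via the Sparsification Lemma.

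The proof is essentially routine; the only point requiring care is the occurrence count, where the cyclic (rather than linear) chain of implications is needed to make the boundary variables $x^{(1)}$ and $x^{(k)}$ also have exactly three occurrences, and where one must check that a chain of implications rather than biconditionals suffices for the propagation of values. Everything else is bookkeeping.
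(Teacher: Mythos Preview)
Your proposal is correct and essentially identical to the paper's proof: both invoke the Sparsification Lemma and then replace a variable with $k$ occurrences by $k$ fresh copies linked by a cyclic chain of binary implication clauses, yielding an equisatisfiable $(3,3)$-CNF of linear size. The only cosmetic differences are that the paper applies the replacement only when $k>3$ and phrases the Sparsification Lemma as giving a sparse instance with $m=\Theta(n)$ rather than ruling out $2^{o(n+m)}$ algorithms.
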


\begin{proof}
By the sparsification lemma of Impagliazzo, Paturi and Zane
\cite{ImpagliazzoPZ01}, satisfiability on 3-CNF formulas with $n$ variables and
$\Theta(n)$ clauses has no $2^{o(n)}$ algorithm under the ETH. Let $\phi$ be
such a formula. If a variable $v$ occurs $k>3$ times in $\phi$, then
we can replace $v$ with a new variable at each occurrence. Call these new
variables $v_i\, (i=1,\dots, k)$. Now, add the following clauses to the
formula:

\begin{equation}\label{eq:makeequal}
(v_1 \vee \neg v_2)\wedge(v_2 \vee \neg v_3)\wedge \dots \wedge(v_{k-1} \vee \neg v_k)\wedge(v_k \vee \neg v_1).
\end{equation}

It is easy to see that the resulting formula is a $(3,3)$-CNF formula of
$O(n)$ variables and clauses, and it can be created in polynomial time from
the initial formula. Next, we argue that the new formula is satisfiable if and only if $\phi$ is satisfiable.
If $\phi$ is satisfiable, then the new formula is also satisfied by the assignment where for each $v$ we set all  the variables $v_i$ to be equal to $v$. If the new formula is satisfiable, then for each $v$ the added clauses can only be satisfied if $v_1=v_2=\dots=v_k$; therefore we can set $v=v_1$ and such an assignment will satisfy all the original clauses.
Consequently, a $2^{o(n)}$ algorithm for $(3,3)$-SAT would also give a $2^{o(n)}$ algorithm to evaluate the satisfiability of $\phi$, which contradicts ETH.
\end{proof}

Our intermediate problem, {\sc Grid Embedded SAT}, asks to determine the
satisfiability of a $(3,3)$-CNF formula whose incidence graph is embedded
in an $n\times n$ grid:

\defproblem{Grid Embedded SAT}{A $(3,3)$-CNF formula $\phi\,$ together
with an embedding of its incidence graph $G_{\phi}$ in $G^2(n,n)$.}%
{Is there a satisfying assignment?}

\begin{theorem}\label{thm:GridEmbSATlower}
\textsc{Grid Embedded SAT} has no $2^{o(n)}$ algorithm under ETH.
\end{theorem}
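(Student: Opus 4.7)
The plan is to reduce $(3,3)$-SAT on $N$ variables to \textsc{Grid Embedded SAT} on a grid $G^2(n,n)$ with $n = O(N)$: a hypothetical $2^{o(n)}$ algorithm would then solve $(3,3)$-SAT in $2^{o(N)}$ time, contradicting Proposition~\ref{prop:33SAT}.

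First I would produce a drawing of the incidence graph $G_\phi$ inside a grid of area $O(N)\times O(N)$, tolerating edge crossings. Because $\phi$ is a $(3,3)$-CNF formula, $G_\phi$ has $O(N)$ vertices of degree at most three and $O(N)$ edges. A textbook channel-routing layout works: place the $O(N)$ vertices along the bottom row, giving each vertex a constant-width ``hub'' so that its (at most three) incident edges exit in distinct columns, and assign every edge its own horizontal channel above the row of vertices into which it ascends, traverses, and then descends. This produces up to $O(N^2)$ pairwise crossings, but respects the $O(N)\times O(N)$ size bound.

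Next I would design a constant-size \emph{crossover gadget}: a $(3,3)$-CNF subformula with four terminals $a,a',b,b'$ enforcing $a\leftrightarrow a'$ and $b\leftrightarrow b'$, whose incidence graph is planar and fits in a constant-size grid tile. Such gadgets are classical (going back to Lichtenstein's planar $3$-SAT reduction), and any variable whose occurrences exceed three can be split using a chain of equivalence clauses of the form in (\ref{eq:makeequal}) to restore the $(3,3)$ property. I would also encode each wire segment of the drawing as a chain of fresh auxiliary variables tied together by two-literal equivalence clauses, so that each connected piece of a subdivided edge faithfully transmits its truth value from one end to the other.

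Then I would assemble the reduction by replacing every crossing in the layout with the crossover gadget, every wire by its equivalence chain, and rescaling the ambient grid by a constant factor so that each gadget occupies a dedicated tile. The resulting formula $\phi'$ is a $(3,3)$-CNF formula, equisatisfiable with $\phi$, whose incidence graph is (by construction) a topological minor of $G^2(n,n)$ for some $n = O(N)$; the embedding is output alongside $\phi'$. The hard part will be Step~2: building a crossover gadget that simultaneously has constant size, has a planar and grid-embeddable incidence graph, faithfully preserves both signals, and respects the $(3,3)$ occurrence restriction, so that gluing many such gadgets into the grid still produces a valid topological-minor embedding of $G_{\phi'}$ of the promised side length.
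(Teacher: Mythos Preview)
Your proposal is correct and takes essentially the same approach as the paper: a channel-routed $O(N)\times O(N)$ grid drawing of the incidence graph (the paper uses equivalent $\sqcap$-shaped routing along a single axis), planarized by inserting a constant-size Lichtenstein crossover gadget at each crossing, with equivalence chains as in~\eqref{eq:makeequal} to repair the degree bound where the gadget creates high-degree variable vertices. One inessential difference: your explicit encoding of every wire segment as a chain of auxiliary equivalence variables is not needed, since under the topological-minor definition of embedding the edges of $G_{\phi'}$ may already be realized as arbitrarily long grid paths---but it does no harm to the $(3,3)$ bound or to the $O(N)$ grid side length.
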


\begin{proof}
Consider a $(3,3)$-CNF formula $\phi$. As a first step, we generate a grid drawing $\cD_{\phi}$ in $\Reals^2$
of the incidence graph of $\phi$ the following way. Assign the point $(3i,0)$ to vertex $v_i$ of $G_\phi$, as depicted in
Figure~\ref{fig:griddrawing}, and add horizontal grid segments to its left (resp. left and right) to
vertices of degree $2$ and $3$; this way each vertex in $G_{\phi}$ of degree
$k$ is assigned to a \emph{group} of $k$ consecutive grid points. Finally,
for each edge of $G_{\phi}$, we add two vertical segments and a horizontal
segment in a $\sqcap$ shape that connects two points corresponding to the
group of its endpoints; the height of this segment for edge $j$ is set to $j$. This drawing assigns a unique grid point to each
vertex of $G_{\phi}$ and a grid path to each edge (with intersections).

\begin{figure}[t]
\begin{center}
\includegraphics{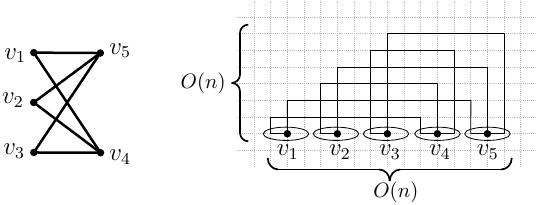}
\end{center}
\caption{Left: the incidence graph of the formula $(x_1 \vee \neg x_2 \vee x_3) \wedge (\neg x_1 \vee  x_2 \vee x_3 )$. Right: a drawing $\cD_{\phi}$ of the incidence graph on the grid.
Each edge is a grid path, the main part of which is a $\sqcap$ shape with a
unique integer height.}%
\label{fig:griddrawing}
\end{figure}

Next, we need to planarize $\phi$. To this end, we use a modified
version of the crossover gadget from Lichtenstein's classical planar 3-SAT
reduction~\cite{Lichtenstein82}. The cross\-over gadget is built on a constant size
$3$-CNF formula with two pairs of special variables, $a,a'$ and $b,b'$. The incidence graph of the formula is planar and has maximum degree six, see Figure~\ref{fig:licht}. The
incidence graph has a planar embedding where the vertices corresponding to the
special variables occur in the order $a,b,a',b'$ around the unbounded face; furthermore, the degree of $a$ and $b$ within this graph is two. The added variables and clauses of the gadget ensure that in all satisfying assignments $a=a'$ and $b=b'$.
One uses the gadget the following way: suppose two edges, starting at variable vertices $u$ and $v$ are crossing in our drawing. Then we identify $u$ and $a$, and also $v$ and $b$. On the other side of the crossing, we add the new variable vertices $a'$ and $b'$; finally, the crossing itself gets replaced by the above mentioned embedding of the gadget.

\begin{figure}[t]
\begin{center}
\includegraphics[scale=0.7]{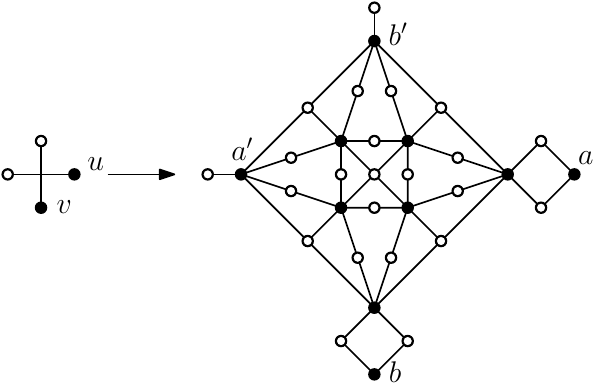}
\end{center}
\caption{Lichtenstein's crossing gadget.}%
\label{fig:licht}
\end{figure}

Note that this modification can at most double the degree of each original variable vertex, and the maximum degree of newly introduced variable vertices is six. Therefore, the resulting formula $\phi'$ is planar, and it has degree at most six. For these high degree vertices we introduce a degree-decreasing gadget,
depicted in Figure~\ref{fig:degdec}, where the new formulas (unlabeled
vertices in the figure) are the same as in~\eqref{eq:makeequal} for $k=6$.
After introducing these new variables and clauses, we get a planar $(3,3)$-CNF formula $\phi''$.
Due to the nature of our modifications the formula $\phi''$ is satisfiable if and only if $\phi$ is satisfiable.

\begin{figure}[t]
\begin{center}
\includegraphics[scale=0.7]{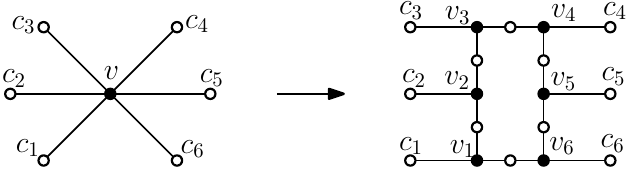}
\end{center}
\caption{A low-degree gadget to help grid embeddings.}
\label{fig:degdec}
\end{figure}

Using the earlier drawing $\cD_{\phi}$ we can easily produce a grid drawing
for $G_{\phi''}$. Clearly there is a constant $c_1$ such that one can draw our
degree decreasing gadget of size six inside, and there is a constant $c_2$
such that one can draw the crossing gadget (with our degree decreasing gadget
added where necessary) in a grid of size $c_2\times c_2$. By switching the
grid underneath to a grid with $2\max(c_1,c_2)$ times the density, we can
introduce these gadgets at the crossings, and also add the degree-decreasing
gadget in place of new high degree variable vertices. This results in a grid
drawing of $G_{\phi''}$, where the grid has size $O(n)\times O(n)$, and
therefore the whole construction has $O(n^2)$ vertices. The algorithm to
obtain this embedding runs in polynomial time. This completes our reduction.
\end{proof}

Note that \textsc{Grid Embedded SAT} is solvable in $2^{O(n)}$ time, since it
reduces to \textsc{Planar SAT} on $O(n^2)$ variables and clauses, which in
turn has an algorithm with running time $2^{O(\sqrt{t})}$ where $t$ is the
number of variables and clauses~(see e.g., \cite{Woeginger2003}).

\subsection{Lower bounds in higher dimensions -- Cube Wiring}


For a vector $\bn:=
(n_1,\dots,n_d)$ in $\Ints_+^d$, let $\mybox_d(\bn)= [n_1]\times \dots\times
[n_d]$. Let $G^d(\bn)$ be the graph whose vertex set $V(G)$ is
$\mybox_d(\bn)$, and where $\bx,\by\in V(G)$ are connected if and only if they
are at distance $1$ in $\Reals^d$. The integer points of $\Reals^d$ can be
divided into parallel \emph{layers}. The layer at ``height'' $h\in \Ints$ is
defined as $\ell(h)=\{ \bx \in \Ints^d \;|\; x_d = h \}$. Let $\emb^h:
\Reals^{d-1} \rightarrow \Reals^d$ be the function that maps $\Reals^{d-1}$
into $\ell(h)$ as follows: $\emb^h(x_1,\ldots,x_{d-1}) =
(x_1,\ldots,x_{d-1},h)$.

In what follows, $\bn$ denotes a $(d-1)$-dimensional vector, and we will
often denote $d$-dimensional vectors as a concatenation of a $d-1$ and a
$1$-dimensional vector. For example, $G^d(\bn,h)$ denotes the grid graph for
the vertex set $[n_1]\times\dots\times[n_{d-1}]\times[h]$.

Let $P,Q$ be equal-size subsets of $\mybox(\bn)$. Let $M$ be a perfect
matching of the graph $G_{P\times Q} := (P\cup Q,P\times Q)$. We say that
\emph{$M$ can be wired in $G^d(c \mathbf{n}, h)$} where $c$ and $h$ are
positive integers, if there are vertex-disjoint paths in
$G^d(c \mathbf{n}, h)$ that connect $\emb^1(\bp)$ to $\emb^h(\bq)$ for all
$(\bp,\bq)\in M$. Note that $G^d(c \mathbf{n}, h)$ consists of $h$ layers,
each of which is a copy of $\mybox(c \bn)$ at a different height. 

We will refer to the embedding in $\Reals^d$ of the path representing a pair $(\mathbf{p},\mathbf{q})$ as a \emph{wire}.

Let $k$ be a positive integer, and consider a point set $P\subseteq \Ints^{d-1}$. The set $P$ is \emph{$k$-spaced} if there is an integer $0\leq r < k$ such that for any $x=(x_1,\dots,x_{d-1})\in P$ we have $x_i \equiv r \mod{k}$ for all $i=1,\dots,d-1$.

\begin{theorem}\label{thm:boxwiring} \textbf{(Cube Wiring Theorem)}
Let $d\ge 3$, $\bn\in \Ints_+^{d-1}$, and let $P$ and $Q$ be two equal-size $2$-spaced
subsets of $\mybox_{d-1}(2\bn)$. Let $M$ be a perfect matching in $G_{P\times
Q} =(P\cup Q,P\times Q)$. Then $M$ can be wired in $G^d(2\bn,h)$, where
$h=O(\sum_{i=1}^{d-1} n_i)$.
\end{theorem}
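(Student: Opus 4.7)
My plan is to construct the wires directly in $d-1$ phases (one per horizontal axis), preceded and followed by auxiliary steps that move between the compact input positions $(\bp,1)$, $(\bq,h)$ and the widely-spaced working positions $(36\bp,\cdot)$, $(36\bq,\cdot)$. During phase $i\in\{1,\dots,d-1\}$, each wire corrects its $i$-th coordinate: a wire starting at $\bp$ and matched to $\bq$ enters the phase at a cell whose first $i-1$ coordinates are already $36q_1,\dots,36q_{i-1}$ and whose $j$-th coordinate for $j>i$ is still $36p_j$, and it leaves with the $i$-th coordinate updated to $36q_i$. The crucial structural point is that phase $i$ decomposes into mutually independent one-dimensional routing problems: wires that share all coordinates other than the $i$-th lie on a common axis-$e_i$ line and must be permuted among themselves, whereas wires on different lines occupy disjoint cells throughout the phase.

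The main sub-problem is then that any permutation of $n_i$ equally spaced wires on a single line can be realized in $O(n_i)$ layers using axes $e_i$, $e_d$, and one auxiliary axis $e_j$ with $j \ne i$ (which exists because $d\ge 3$). I would accomplish this using odd-even transposition sort, which decomposes any such permutation into $O(n_i)$ rounds of pairwise disjoint adjacent swaps. The swap between two wires at $e_i$-coordinates $36a$ and $36(a+1)$ is carried out inside a constant-height local box: wire $A$ ascends one layer, slides $36$ cells along $e_i$ at its current $e_j$-coordinate, and then ascends again; wire $B$ first deflects one unit along $e_j$ into a ``bypass lane'', traverses the same $e_i$-span in the opposite direction within that lane, and then returns to the original $e_j$-coordinate once $A$ has cleared the overlapping column. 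The $36$-fold blow-up of the grid guarantees (a) that the bypass lane of each line stays in the slack around that line and therefore cannot reach a neighbouring line $36$ units away in the $e_j$ direction, and (b) that simultaneous adjacent swaps within one round use disjoint $36$-unit intervals along $e_i$ and hence never collide with each other or with the straight-up trajectories of wires not participating in that round's swap. The initial blow-up and final blow-down are analogous shearing constructions that move wire $k$ from $e_i$-coordinate $k$ to $36k$ (or back) via $O(n_i)$ layers per axis.

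Summing over the $d-1$ phases together with the blow-up and blow-down steps yields total height $h=O(\sum_{i=1}^{d-1}n_i)$, and each wire accumulates $O(n_i)$ length per phase (at most a constant detour per round over $O(n_i)$ rounds), giving a total length of $O(d\sum_{i=1}^{d-1}n_i)$ as claimed. The most delicate step will be the collision analysis within a single round of a single phase: one has to verify simultaneously that the bypass lanes of parallel swaps are pairwise disjoint, that wires not participating in a given round's swap pass straight up without being disturbed, and that wires on neighbouring lines (only $36$ units away in every transverse axis) are never touched. The combination of $d\ge 3$ (guaranteeing an auxiliary bypass axis $e_j$) with the $36$-fold blow-up (guaranteeing sufficient slack in every transverse direction) is exactly what makes this collision-freeness possible.
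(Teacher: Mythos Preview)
Your coordinate-by-coordinate scheme has a genuine gap: the claim that phase~$i$ ``decomposes into mutually independent one-dimensional routing problems'' in which the wires on a common $e_i$-line ``must be permuted among themselves'' is false. After phase~$i$, the wire of $(\bp,\bq)$ is supposed to sit at $(36q_1,\dots,36q_i,36p_{i+1},\dots,36p_{d-1})$. Two distinct wires $(\bp,\bq)$ and $(\bp',\bq')$ can share this position whenever $q_j=q'_j$ for $j\le i$ and $p_j=p'_j$ for $j>i$, while $\bp$ and $\bp'$ differ only in coordinates $\le i$. Concretely, take $d=3$, $\bn=(2,2)$, $P=Q=\mybox_2(\bn)$, and the matching $(1,1)\mapsto(1,1)$, $(2,1)\mapsto(1,2)$, $(1,2)\mapsto(2,1)$, $(2,2)\mapsto(2,2)$. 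On the line $p_2=1$ in phase~$1$, both wires have target first coordinate $q_1=1$, so both want to exit at $x_1=36$; this is not a permutation, and no amount of odd--even transposition sorting will produce two vertex-disjoint paths ending at the same grid point. In other words, the intermediate target configuration you specify is not even a set of distinct points, so the 1D sub-problems you reduce to are not permutation-routing problems at all.

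The paper's proof circumvents exactly this obstruction. It never commits a single coordinate of each wire to its final value; instead it halves the range of one coordinate at a time via a \emph{rough reordering}: wires that must cross the median are pushed or pulled (in disjoint $(1,3)$- and $(2,3)$-subgrids) using the lexicographic matching of Lemma~\ref{lem:roughreorder}, which is always a bijection between equal-size sets because $|\push|=|\pull|$ once one pads to $P=Q=\mybox_{d-1}(\bn)$. Recursion on the two halves then proceeds in parallel. Your approach could perhaps be repaired by routing to carefully chosen temporary positions rather than to $36q_i$ directly, but that is essentially the content of the pushing/pulling lemma and would require a different argument than the one you sketch.
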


Our original (and qualitatively different) proof for a slightly different formulation can be found in Kisfaludi-Bak's thesis~\cite{doktori}.
As it turns out, we can also derive the above theorem from the following result of Thompson and Kung~\cite{DBLP:journals/cacm/ThompsonK77} that concerns sorting on parallel processors that are connected into a $d-1$-dimensional grid cube. In their setting, a set of processors are connected in a $d-1$-dimensional hypercube. In a time step, some disjoint pairs of neighboring processors can exchange their content. In the \emph{sorting problem}, each processor receives a number as input, and the goal is to sort the numbers so that their ordering coincides with the lexicographical order of the processors using a small number of time steps.  Note that the time corresponds to the last axis in our setting. Their main result can be stated the following way.

\begin{theorem}\textbf{(Thompson and Kung~\cite{DBLP:journals/cacm/ThompsonK77}, 1977, paraphrased)} 
The sorting problem on processors connected in a grid $\bn$ can be solved in $O(\sum_{i=1}^{d-1} n_i)$ time.
\end{theorem}

\begin{figure}
\centering
\includegraphics{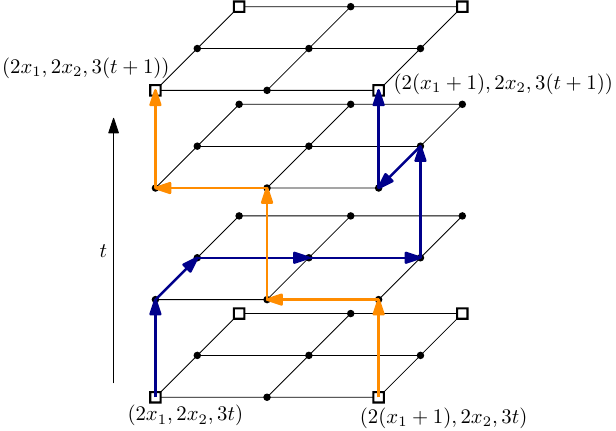}
\caption{Exchanging two neighboring wires in a $2\times 2 \times 3$ grid box.}\label{fig:wire_exchange}
\end{figure}

To prove the Cube Wiring Theorem, one just needs to define a wire for each number that tracks its location in the grid through the procedure. For a number that is at position $(x_1,\dots,x_{d-1})$ at time $t$, we associate the point $2x_1,2x_2,\dots,2x_{d-1},3t$, and ensure that it is on its wire. It is sufficient to show that the exchange of neighboring numbers can be done in a $2 \times 2 \times \dots 2 \times 3$ grid cube. In fact, it is sufficient to show that the exchange is possible in a $2\times 2 \times 3$ grid box, as the same method can be used in higher dimensions. For exchanging elements $(2x_1,2x_2,3t)$ and $(2(x_1+1),2x_2,3t)$, we need to get the wire of the first element to $(2(x_1+1),2x_2,3(t+1))$ and of the second to $(2x_1,2x_2,3(t+1))$. See Figure~\ref{fig:wire_exchange} for an illustration of such an exchange. The wire of the first follows the path
\begin{multline*}
(2x_1,2x_2,3t), (2x_1,2x_2,3t+1), (2x_1,2x_2+1,3t+1), (2x_1+1,2x_2+1,3t+1),\\ (2(x_1+1),2x_2+1,3t+1), (2(x_1+1),2x_2+1,3t+2), (2(x_1+1),2x_2,3t+2),(2(x_1+1),2x_2,3(t+1)).
\end{multline*}
For the second, we use the wire
\begin{multline*}
(2(x_1+1),2x_2,3t), (2(x_1+1),2x_2,3t+1), (2x_1+1,2x_2,3t+1),\\ (2x_1+1,2x_2,3t+2), (2x_1,2x_2,3t+2)  , (2x_1,2x_2,3(t+1)).
\end{multline*}

A rotation of this wiring is used for a neighboring wire pair $(2x_1,2x_2,3t)$ and $(2x_1,2(x_2+1),3t)$. Notice that the created wires remain vertex disjoint. This concludes the proof of Theorem~\ref{thm:boxwiring}.

\subsection{Strengthening the Cube Wiring Theorem}

We now turn to giving a stronger version of Cube Wiring. Note that this stronger version is not necessary for the lower bounds to hold, but it may be necessary in other applications.

\begin{theorem}\label{thm:strongwiring} \textbf{(Strong Cube Wiring Theorem)}
There exists a constant $c$ such that for any $d\ge 3$ the following hold.
Let $\bn\in \Ints_+^{d-1}$, and let $P$ and $Q$ be two equal-size $2$-spaced
subsets of $\mybox_{d-1}(2\bn)$. Let $M$ be a perfect matching in $G_{P\times
Q} =(P\cup Q,P\times Q)$. Then $M$ can be wired in $G^d(c\bn,cn_{\max})$, where
$n_{\max}$ is the largest coordinate of $\bn$.
\end{theorem}

The proof is based on the following lemma.

\begin{figure}[t]
\begin{center}
\includegraphics[width = 0.75\textwidth]{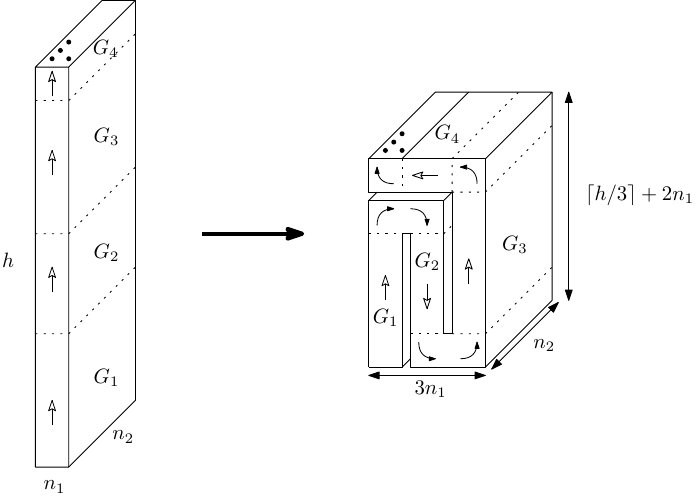}
\end{center}
\caption{Decreasing the height of a wiring by adding elbows. We decompose the
wiring into four parts, and add elbows as required (denoted by bent arrows).} \label{fig:squiggle}
\end{figure}

\begin{lemma}\label{lem:squiggle}
If a matching can be wired in $G^d((n_1,\dots,n_{d-1}),h)$, then it can
also be wired in\\ $G^d((3n_1,n_2,\dots,n_{d-1}),\ceil{h/3}+2n_1)$.
\end{lemma}

\begin{proof} The idea is to add artificial turns into the wiring, and create
the snake-like structure seen in Figure~\ref{fig:squiggle} that has roughly
one-third the height of the original wiring. The wiring in three dimensions is
created as depicted in the figure. Note that for each wire point we only
introduce changes along the first and last coordinate $x_1$ and $x_d$, and all
other wire points retain their original coordinates $x_2,\dots,x_{d-1}$.

Let $W$ be some set of wire points in the starting layer $\ell(1)$ . We
introduce elbows that bend around the $x_1$ axis, that is, it brings the
wires from the ``bottom'' face of a box to an adjacent face. More precisely,
for a wire $w$ with wire point $(x_1,\dots,x_{d-1},1)$ in $\ell(1)$, we first
increase the last coordinate until we reach $x'_d=n_1-x_1+1$, then increase
along the first coordinate until we reach $x'_1=n_1+1$. Doing this for all the
wires in $W$ creates a right elbow. One can also create a left elbow
analogously. The elbow keeps the wires disjoint and preserves ordering: the
wire points at the beginning of the elbow can be mapped to to the wire points
at the end of the elbow by a rotation of angle $\pi/2$.

We can build a wiring based on the original wiring $G$ the
following way. Decompose $G$ into four distinct wirings, by cutting the wires
at their wire points of height $\ceil{h/3}$, $\ceil{2h/3-n_1}$ and $h-n_1$.
Let $G_1,G_2,G_3$ and $G_4$ be the resulting partial wirings. To construct the
new wiring, we start with $G_1$ and keep it unchanged. Then we introduce two
right elbows in succession, and attach to that $G_2$, which is rotated in the
$x_1x_d$ plane by $\pi$. Next, we attach two left elbows in succession, and
add a translate of $G_3$. Next, we add a left elbow, and the wiring $G_4$ that
is rotated by $\pi/2$ in the $x_1x_d$ plane. Notice that the resulting box has
the required size: the new height is $\ceil{h/3}+2n_1$, and the width along the
first coordinate has tripled from $n_1$ to $3n_1$; the width of the bounding
box is the same along all other axes. Finally, notice that the wire points at
the end of the wiring are in the same position of the box as they were in the
original wiring.
\end{proof}

We are now ready to prove the Strong Cube Wiring Theorem.

\begin{proof}[Proof of Theorem~\ref{thm:strongwiring}]
Let $G$ be the wiring that is given by the Cube Wiring Theorem. The wiring
has base $\mybox(2n_1,\dots,2n_{d-1})$ and height $h<c'dn_{\max}$ for some
constant $c'$. Without loss of generality, assume that $n_{\max}=n_1 \geq
n_2\geq\dots\geq n_{d-1}$ and that each $n_i$ and $h$ are powers of three.
This can be achieved by rounding all of them up to the nearest power of three so that $G$ has base  $\mybox(6n_1,\dots,6n_{d-1})$.

We now apply Lemma~\ref{lem:squiggle} to each of the coordinates
$x_1,\dots,x_{d-1}$ in succession. The resulting box has base
$\mybox(18n_1,\dots,18n_{d-1})$ and its height is 
\[
h'=\frac{h}{3^{d-1}}+\frac{2n_1}{3^{d-2}}+\frac{2n_2}{3^{d-3}}
+\dots+\frac{2n_{d-1}}{3^{0}}
\leq \frac{h}{3^{d-1}} +
2n_{\max}\sum_{j=0}^{d-2}\frac{1}{3^j}
< \frac{h}{3^{d-1}}
+ 3n_{\max}.\]
Notice that for all $d\geq 3$ we have that $3^{d-1}>d$,
therefore 
\[h'<\frac{h}{3^{d-1}} +
3n_{\max} < \frac{c'dn_{\max}}{3^{d-1}} + 3n_{\max}<\left(c'+3\right)n_{\max}.\]
Picking $c=\max\left(18,\left(c'+3\right)\right)$ concludes the proof.
\end{proof}

\begin{remark}
The $2$-spaced requirement in both the original and the strong version of Cube Wiring was added for technical reasons: one can rearrange the point sets $P$ and $Q$ to be tightly packed in $\emb^1(\bn)$ and $\emb^h(\bn)$ respectively~\cite[Lemma 7.6]{doktori}.
\end{remark}

\subsection{Minors in grids}

We describe a corollary of Strong Cube Wiring in terms of graph minors. Recall that a
graph $G$ is a minor of a graph $G'$ if there is way to get from $G'$ to $G$ by\index{minor}
some sequence of edge contractions, edge deletions and vertex deletions. While
the theorem itself is not used elsewhere in the paper, we think that it is
interesting in its own right.

\begin{theorem}\label{thm:cubewiringminor}
There exists a constant $c$ such that for all $d\ge 3$, any graph with $m$ edges and no isolated vertices
is the minor of the $d$-dimensional grid hypercube of side length
$cm^{\frac{1}{d-1}}$.
\end{theorem}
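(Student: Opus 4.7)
My plan is to embed the given graph $G$ as a minor of the hypercube $G^d(L,\ldots,L)$ with $L = O(m^{1/(d-1)})$, treating each edge of $G$ as a ``wire'' routed through the cube and invoking the Cube Wiring Theorem (Theorem~\ref{thm:boxwiring}) to route all $m$ edges simultaneously. Since $G$ has no isolated vertices, $|V(G)| \leq 2m$ and $\sum_v d_v = 2m$; I will need to place $2m$ \emph{ports} on the bottom face and $2m$ \emph{dummies} on the top face, so $L^{d-1} \geq 2m$ is the driving constraint. On the bottom face $\ell(1)$, I fix a snake-like Hamiltonian path $\pi$ through $\mybox_{d-1}(L,\ldots,L)$ (a standard zigzag, available for $d-1 \geq 2$) and, walking along $\pi$, assign the first $d_{v_1}$ grid points to vertex $v_1$, the next $d_{v_2}$ to $v_2$, and so on. Each vertex $v$'s $d_v$ ports thus form a connected sub-path of $\pi$ in the grid.

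On the top face $\ell(h_{top})$, I place $m$ pairwise-disjoint pairs of adjacent grid points $(q_k, q_k')$, one per edge $e_k$ of $G$, which also fits because $2m \leq L^{d-1}$. For each $e_k = (u,w)$, sending an unused port of $u$ to $q_k$ and an unused port of $w$ to $q_k'$ defines a perfect matching $M$ between the bottom ports $P$ and the top dummies $Q$. Applying Theorem~\ref{thm:boxwiring} with $\bn = (L,\ldots,L)$ then realizes $M$ by vertex-disjoint wires inside $G^d(36\bn, h)$, where $h = O((d-1)L) = O(L)$; since $d$ is constant, the whole construction fits inside a hypercube of side $O(L) = O(m^{1/(d-1)})$, as required. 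I define the branch set of each vertex $v$ to consist of its port sub-path on the bottom face, every wire ascending from those ports, and the top-face endpoint of each such wire. These branch sets are connected (a snake trunk with upward wire branches) and pairwise vertex-disjoint (port blocks are disjoint by construction, wires are disjoint by Cube Wiring, and top endpoints are distinct). For every edge $e_k = (u,w)$ of $G$, the single grid edge $(q_k, q_k')$ joins the branch set of $u$ to that of $w$, certifying $G$ as a minor.

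The main conceptual hurdle is that the matching I really want to realize --- connecting a port of $u$ to a port of $w$ for every edge $(u,w)$ of $G$ --- lives entirely on the bottom face, whereas Theorem~\ref{thm:boxwiring} routes matchings between two \emph{opposite} faces. The fix, which is the essence of the reduction, is to split each edge connection into two opposite-face wires glued at the top by a single grid edge between an adjacent pair of dummies; this turns the same-face matching problem into exactly the opposite-face form that Cube Wiring handles. Once this translation is set up, the remaining verifications (connectivity and disjointness of branch sets, existence of the required minor edges, and the final side-length bound) are routine.
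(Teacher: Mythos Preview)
Your proposal is correct and follows essentially the same approach as the paper: the paper also dilates each vertex $v$ into a path of $\deg(v)$ ``port'' vertices on the bottom face, places for each edge a pair of adjacent subdivision vertices on the top face, and invokes Cube Wiring on the resulting bottom-to-top matching. Your explicit use of a snake Hamiltonian path to lay out the ports (and implicitly the top-face pairs) supplies the detail the paper glosses over with ``it is easy to see that both $G'[P]$ and $G'[Q]$ are subgraphs of $G_{d-1}((cm^{1/(d-1)},\dots,cm^{1/(d-1)}))$''.
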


\begin{proof}
Let $G$ be an arbitrary graph with $m$ edges. We expand all vertices $v$ of
$G$ into a path $P_v$ of length $\deg_G (v)$, that is, replace $v$ with a path $P_v$ of length $\deg(v)$,
where each vertex of $P_v$ is adjacent to a single neighbor of $v$. We also
subdivide each original edge $e=uv$ of $G$ by two new vertices, $w_{eu}$
(adjacent to $u$) and $w_{ev}$ (adjacent to $v$); let $G'$ be the graph that
we end up with after these modifications. Let $P\eqdef \bigcup_{v\in V} V(P_v)$ and
 $Q \eqdef \{w_{ev}\;|\;e\in E(G)\text{ and } v\in e\}$.

There is a constant $c'$ such that we can find an isomorphism $\phi_P$ from
$G'[P]$ into  a subgraph of $\Gamma = G_{d-1}((c'm^{1/(d-1)},\dots,c'm^{1/(d-1)}),c'm^{1/(d-1)})$, and
similarly an isomorphism $\phi_Q$ from $G'[Q]$ into a subgraph of
$\Gamma$. By the Strong Cube Wiring
Theorem, there are vertex-disjoint paths connecting $\phi_P(P)$ to $\phi_Q(Q)$
in $cm^{1/(d-1)}$ layers with the matching $M=E(G')\cap (P \times Q)$, where
$c$ is a constant. Since $E(G') = G'[P] \cup G'[Q] \cup M$, this shows that
$G'$ is a minor of the grid hypercube. We also have that $G$ is a minor of
$G'$, therefore $G$ is a minor of the grid hypercube of side length
$cm^{\frac{1}{d-1}}$.
\end{proof}

We could also consider \emph{topological minors} instead. We would like\index{minor!topological \tilde}
to find an edge subdivision of some given graph $G$ in a grid cube. In this
case we are forced to bound the degree of $G$ with $2d$, since the maximum
degree of the grid is an upper bound on the maximum degree of its topological
minors. As in the proof above, we need to use a grid of side length at least $c'n^\frac{1}{d-1}$ for some $(2d)^{1/d}<c'=O(1)$ in order to fit the starting points of the at most $2dn$ wires into a single layer. Then invoking the Strong Cube Wiring yields the following theorem.

\begin{theorem}\label{thm:cubewiringtopminor}
There exists a constant $c$ such that for all $d\ge 3$, any graph with $n$
vertices and maximum degree $2d$ is the topological minor of the
$d$-dimensional grid hypercube of side length $cn^{\frac{1}{d-1}}$.
\end{theorem}

\section{Applying the lower-bound framework}

In order to construct reductions for our problems, we can often reuse gadgetry
from classical \NP-completeness proofs. The simplest approach would be to start with a problem on planar graphs, and try to create a grid graph based on that. Unfortunately, this approach is not sufficient for ETH-tight lower bounds for the following reason. Drawing a planar graph (even of
maximum degree $3$) on $n$ vertices may require an $\Theta(n)\times \Theta(n)$
grid. Usually, the ETH-based lower bound for the starting planar problem is of the form
$2^{\Omega(\sqrt{n})}$. Trying to realize the planar graph as a grid graph results in a
graph of size $n^2$, since connecting distant vertices requires us to
subdivide the edge with $\Omega(n)$ grid points. Therefore the resulting lower
bound would be of the form $2^{\Omega(n^{1/4})}$, which is not ETH-tight.

Instead, it is usually required to maintain a grid drawing carefully in a grid
of size $O(n)\times O(n)$. In our reductions, we will either start with {\sc
Grid Embedded SAT}, or with an arbitrary (3,3)-CNF
formula and a specific grid drawing of its incidence graph (with crossings),
similar to what we have done in the proof of Lemma~\ref{thm:GridEmbSATlower}.

Recall that our goal is to prove the lower bounds in the most restricted graph
class possible. Thus, where possible we aim to get a lower bound in induced grid graphs.
There are two cases where we do not succeed in obtaining such a lower bound.

\begin{enumerate}
\item \textsc{Independent Set} and \textsc{Vertex Cover} are
solvable in polynomial time on bipartite graphs, because they are equivalent to
matching~\cite{Konig1931}, and therefore can be found using a bipartite
matching algorithm~\cite{HopcroftK73}. Since $d$-dimensional grid graphs are
bipartite, the lower bounds can only be achieved in some larger graph class.
Hence, for \IS and \textsc{Vertex Cover} we will prove our lower bounds for
unit ball graphs. Regardless, the general strategy remains the same; we
can use the same type of gadgetry and realize the constructed wires by
mimicking the grid-embedded drawing or Cube Wiring.
\item In order to prove
lower bounds in $2$-dimensional grid graphs, one needs to be able to create
gadgetry that has maximum degree $4$, and other desirable properties. Such
gadgets are often not known in the literature, and in case of
\textsc{Connected Vertex Cover} and \textsc{Connected Feedback Vertex Set}, we
are not able to create such gadgets. Instead, we end up proving the lower
bounds for unit disk graphs, while still having the lower bound for
$d$-dimensional grid graphs when $d\geq 3$. In case of
\textsc{Connected Vertex Cover}, we are also able to prove the lower bound
for (non-induced) grid graphs in two dimensions.
\end{enumerate}

A key step in many of these reductions is refinement. A $k$-\emph{refinement}\index{refinement (of a grid)}
of a drawing $\cD \subset \Reals^d$ inside a grid is obtained by scaling the
drawing by a factor of $k$. This means that an axis-parallel grid segment in the
drawing becomes an axis-parallel grid segment whose length is a multiple of
$k$. If $\cD$ is a drawing of a grid graph, then by subdividing each segment
of the $k$-refinement using $k-1$ inner grid points, we get an induced grid
graph. If we say that a drawing or a grid is refined without specifying $k$,
then it means that we introduce some refinement for some large enough
constant $k\in \Nats_+$ that is chosen so that certain conditions hold.

\subsection{\textsc{Dominating Set}}\label{sec:dslower}

We prove the following lower bound for \textsc{Dominating Set}.

\begin{theorem}
Let $d\geq 2$ be a fixed constant. Then there is no $2^{o(n^{1-1/d})}$
algorithm for \textsc{Dominating Set} in induced grid graphs of dimension $d$,
unless ETH fails.
\end{theorem}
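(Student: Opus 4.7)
The plan is to construct a reduction from $(3,3)$-SAT (for $d\geq 3$) and from \textsc{Grid Embedded SAT} (for $d=2$) to \textsc{Dominating Set} on $d$-dimensional induced grid graphs, using three constant-size gadgets—a variable gadget, a clause gadget, and a wire gadget—where the wire gadget has size proportional to its length. A global budget $s$ is chosen equal to the sum of the minimum dominating set sizes of the individual gadgets, so that a dominating set of size at most $s$ exists in the assembled graph if and only if the input formula is satisfiable.

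First, I design the gadgets. The wire gadget is a narrow strip of grid vertices augmented with small axis-aligned pendants or ``bumps'' placed at a fixed constant spacing, whose purpose is to force any minimum dominating set restricted to the wire to have one of exactly two canonical configurations, corresponding to the literal values \emph{true} and \emph{false}. The wire gadget must also bend around axis-aligned corners and split into multiple branches while preserving exactly the two canonical configurations. The variable gadget has two locally optimal extensions of the dominating pattern, one for each truth value, and forces the value transmitted on every outgoing wire to be consistent. The clause gadget contains a single ``unsupported'' grid vertex that can be dominated only if at least one of the three incoming wires arrives in its satisfying-literal configuration; all other vertices of the gadget are dominated for free by the canonical wire configurations.

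Second, I assemble the construction. For $d=2$, I start from an instance of \textsc{Grid Embedded SAT} on $G^{2}(n,n)$ given by Theorem~\ref{thm:GridEmbSATlower}, apply a constant-factor refinement of the grid, place variable and clause gadgets at the corresponding integer points, and realize every path representing an edge of the incidence graph by a wire gadget following that path. Since the embedding is already planar and non-crossing, no crossover gadget is needed; the result is an induced subgraph of a $d$-dimensional grid on $N=O(n^{2})$ vertices. For $d\geq 3$, I start from an arbitrary $(3,3)$-CNF formula on $n$ variables (Proposition~\ref{prop:33SAT}), whose incidence graph has $m=O(n)$ edges, and apply Theorem~\ref{thm:boxwiring} (Cube Wiring) to route the $m$ edges in vertex-disjoint fashion inside a $d$-dimensional grid hypercube of side length $O(n^{1/(d-1)})$; a constant refinement then gives enough room to host the constant-size variable and clause gadgets and to realize each wired path by a wire gadget, producing an induced grid graph with $N=O(n^{d/(d-1)})$ vertices.

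Correctness of the reduction reduces to verifying the local behaviour of the three gadgets, and its quantitative content is that a hypothetical $2^{o(N^{1-1/d})}$-time algorithm for \textsc{Dominating Set} on the output would decide the original $(3,3)$-SAT (or \textsc{Grid Embedded SAT}) instance in time $2^{o(n)}$, contradicting ETH. The main obstacle is the wire gadget: it must admit exactly two efficient dominating configurations, transport the truth value faithfully over arbitrarily long straight segments and across every bend produced by Cube Wiring, and be compatible with the induced-subgraph requirement so that two wires lying in neighbouring grid positions do not pick up parasitic edges that would corrupt the semantics. A sufficiently large constant refinement factor together with a thin ``padding'' of unused grid cells around each wire and each gadget handles this last point, relying on the fact that Cube Wiring already produces vertex-disjoint routes.
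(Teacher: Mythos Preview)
Your proposal follows essentially the same approach as the paper: reduce from \textsc{Grid Embedded SAT} in two dimensions and from $(3,3)$-SAT via the Cube Wiring Theorem in higher dimensions, using constant-size variable and clause gadgets joined by wire gadgets, with the same quantitative analysis yielding $N=O(n^{d/(d-1)})$ vertices. The paper's concrete gadgets are simpler than what you sketch---the wire is a plain path of length $3k{+}1$ (no pendants or bumps), the clause gadget is a single vertex, and the variable gadget is a $12$-cycle with an attached ear, with truth values encoded by the residue class modulo~$3$ of the dominating vertices along the wire.
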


\begin{proof}

Given an input formula $\phi$, we will replace each grid point in the
embedding of $G_\phi$ that corresponds to a variable of  $\phi$ by a
\emph{variable gadget}, each grid point that corresponds to a clause by a
\emph{clause gadget} and each wire (that is a path in the grid connecting a
variable point to a clause point) by a \emph{wire gadget}. Next we describe
these gadgets, and how they are connected to form the construction. This will be followed by a description of how the construction can be realized as a $2$- and higher-dimensional grid graph.

Our
variable gadget is a cycle of length $12$ with an ``ear'' of the same size, as
depicted in Figure~\ref{fig:domsetgadgets}. We number the vertices of the
cycle from $0$ to $11$. The idea is to represent setting the variable to true
by putting vertices with index $\equiv 1 \pmod{3}$ into the dominating set; if
the variable is false, we select vertices with index $\equiv 0 \pmod{3}$
instead. The role of the ear is to ensure that in a minimum dominating set,
one of these two scenarios is forced.

The wire gadgets are simple paths that have a length (i.e., number of edges)
that is congruent to $1$ modulo three. The clause gadget is a single vertex, see Figure~\ref{fig:domsetgadgetsfull}. A
wire that corresponds to a positive literal $x_i$ will start at vertex
$1,4$ or $7$ of the variable gadget of $x_i$, and ends at the corresponding clause vertex.
For negative literals, we start at a vertex of index $\equiv 0 \pmod{3}$, i.e., at vertex $0,3$ or $6$
instead. Note that selecting $k-1$ internal vertices from a wire of
length $3k+1$ can dominate all internal vertices of the wire. Moreover, if the literal
is true (i.e., its starting vertex in the variable cycle is in the dominating
set), then selecting every third vertex on the wire will additionally dominate the
clause vertex at the other end.

\begin{figure}[t]
\centering
\includegraphics[scale = 1]{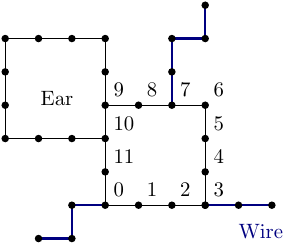}
\caption{Variable gadget for \textsc{Dominating Set}. In blue it is shown
where the wire gadgets are attached for a variable that occurs twice as a
negative literal (corresponding to wires attached to vertices 0 and 3) and
once as a positive literal (wire attached to vertex 7).}%
\label{fig:domsetgadgets}
\end{figure}

From each variable cycle, we must select at least four vertices into our
dominating set, and at least three more vertices from the ear are necessary.
Among the inner vertices of a wire of length $3k+1$, we have at least $k-1$
vertices in the dominating set.

\begin{figure}[t]
\centering
\includegraphics[scale = 1]{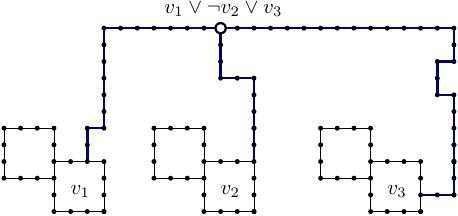}
\caption{\textsc{Dominating Set} gadgetry for the formula $(v_1 \vee \neg v_2 \vee v_3)$. The wire connected to $v_3$ has a detour to ensure that it has length $\equiv 1 \pmod{3}$.}%
\label{fig:domsetgadgetsfull}
\end{figure}

Consider a dominating-set instance corresponding to a formula
on $n$ variables with a drawing that has $w$ wires, where the wires have altogether
$\ell$ edges. The resulting grid graph has dominating set size at least $7n + \frac{\ell-4w}{3}$. It can be verified that this is attainable if and only if the formula is satisfied.
See~\cite{Berg17} for a similar, but more detailed argument.

\medskip
\noindent\emph{Two-dimensional grid graphs.}

Given an instance of
\textsc{Grid Embedded SAT}, that is, a $(3,3)$-SAT formula $\phi$ and a grid
embedded drawing $\cD$ of $G_\phi$, we need to create a grid graph which
incorporates the above gadgets. This can be done by taking a $10$-refinement
of $\cD$; this way, we can add the variable gadgets without overlap or
unwanted induced edges, and we also have space to adjust the wire length where
necessary to ensure that each wire length is $\equiv 1 \pmod{3}$. This
transformation can be done in polynomial time, and the result is an induced
grid graph drawn in an $O(n)\times O(n)$ grid, so the resulting induced grid graph has $O(n^2)$ vertices. Therefore, by Theorem~\ref{thm:GridEmbSATlower}, \textsc{Dominating
Set} has no $2^{o(\sqrt{n})}$ algorithm in induced grid graphs unless ETH
fails.

\medskip
\noindent\emph{Higher dimensional grid graphs.}

We start with a $(3,3)$-SAT formula $\phi$. We place each of the above variable
gadgets together with the first inner vertex of the connected wires in a $9\times 9 \times \dots \times 9$ small $d-1$-dimensional hypercube. These small hypercubes are then packed into a $(d-1)$-dimensional facet of a $d$-dimensional hypercube of side length
$O(n^{\frac{1}{d-1}})$. The clause gadgets along with the last inner vertices
of each wire are placed similarly in the opposing facet of the $d$-dimensional hypercube.
Applying the Strong Cube Wiring Theorem to the first and last inner vertices of the
wires that have been placed in the opposing facets, we can place each wire
inside the hypercube, by increasing the side length by a constant factor. Finally, we adjust the wires so that all of their lengths are $\equiv 1 \pmod{3}$.

The construction fits in a hypercube of side length
$O(n^{\frac{1}{d-1}})$, and the number of vertices in this induced grid graph
is $O(n^{\frac{d}{d-1}})$. Thus, a $2^{o(|V|^{1-1/d})}$ algorithm for
\textsc{Dominating Set} would translate into a
$2^{o((n^{\frac{d}{d-1}})^{1-1/d})} +\poly(n) = 2^{o(n)}$ algorithm for $(3,3)$-SAT,
contradicting ETH.
\end{proof}

\subsection{\textsc{Vertex Cover} and \textsc{Independent Set}}\label{sec:isvclower}
It is well known that \textsc{Vertex Cover} and \IS are solvable in polynomial time on bipartite graphs by using an augmenting-path algorithm. Hence, these problems are also solvable in polynomial time in $d$-dimensional grids. Consequently, we need a slightly broader graph class for this reduction. The \emph{augmented $d$-dimensional grid} for $d\ge 2$ is defined as the infinite $d$-dimensional grid graph together with the edges \index{augmented grid graph}
\[\big((x_1,x_2,\dots,x_d),(x_1+1,x_2+1,x_3,\dots,x_d)\big) \qquad \big((x_1,\dots,x_d) \in \Ints^d\big).\]
In other words, the augmented $d$-dimensional grid is obtained  from the regular $d$-dimensional grid by adding certain ``diagonals'' on 2-dimensional faces of the grid cells.

\begin{figure}[t]
\begin{center}
\includegraphics[width=0.7\textwidth]{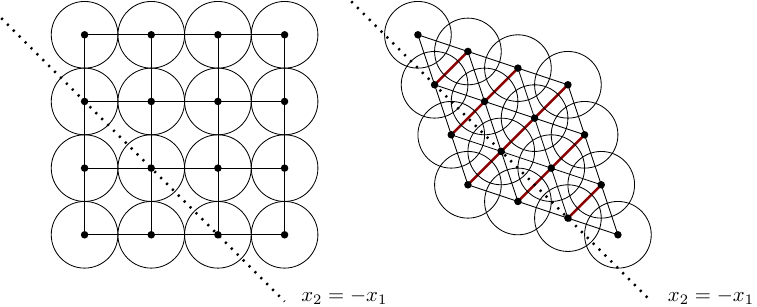}
	\caption{Realizing the augmented grid as an intersection graph of equal radius balls. The new edges introduced by the transformation are red.}\label{fig:augmentedgrid}
\end{center}
\end{figure}

The augmented $d$-dimensional grid is a unit ball graph. To see this, let $\phi: \Reals^d \rightarrow \Reals^d$ be the linear transformation 
\[
\phi(x_1,\dots,x_d) = \left(\frac{(1+\sqrt{2})x_1+(1-\sqrt{2})x_2}{2\sqrt{2}},\frac{(1-\sqrt{2})x_1+(1+\sqrt{2})x_2}{2\sqrt{2}},x_3,\dots,x_d\right),
\]
i.e., it pushes points closer to the hyperplane $x_2=-x_1$. Then the intersection graph of balls of radius $1/2$ with centers $\phi(\Ints^d)$ is the augmented grid.  See Figure~\ref{fig:augmentedgrid} for an illustration. Indeed, the distance between $\phi(x_1,...,x_d)$ and $\phi(x_1+1,x_2+1,x_3,...,x_d)$ is $1/\sqrt{2}$.

The \emph{$d$-dimensional augmented grid graphs}\index{augmented grid graph!induced \tilde} are defined as subgraphs of the augmented $d$-dimensional grid. We usually consider \emph{induced} augmented grid graphs, which are induced subgraphs of the augmented grid. Note that induced augmented grid graphs form a subclass of $d$-dimensional unit ball graphs. Instead of proving the result for unit ball graphs, we prove the following stronger statement.

\begin{theorem}
For any $d\ge 2$, \textsc{Vertex Cover} and \textsc{Independent Set} on induced augmented $d$-di\-men\-sion\-al grid graphs have no $2^{o(n^{1-1/d})}$ algorithm, unless ETH fails.
\end{theorem}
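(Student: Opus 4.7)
The plan is to give an ETH-tight reduction into induced augmented $d$-dimensional grid graphs using the same three-gadget template as in the \textsc{Dominating Set} proof above, but with gadgets tailored to \textsc{Independent Set}/\textsc{Vertex Cover}. Since these two problems are complementary, it suffices to prove the lower bound for one of them. The role of the diagonal edges in the augmented grid is exactly to let us realize a triangle, which bipartite grid graphs cannot host; this is essential because triangles are the clause gadget in the classical \textsc{Independent Set} reduction.

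First I would design the gadgets. The variable gadget is an even cycle $C_{2k}$ of constant length realized by four grid edges in a small grid box; its two maximum independent sets of size $k$ encode the truth assignment, and wires for positive/negative literals attach to vertices of the two different cycle parities. The clause gadget is a triangle, realized by two axis-parallel unit segments together with a single diagonal edge of the augmented grid; a triangle has independence number $1$, forcing at most one of the three incident literal wires to be ``false-consistent,'' hence at least one literal is true. Wire gadgets are even-length paths of the form $3k+\text{const}$, with small $O(1)$-sized parity-fixing detours that can be inserted locally without disturbing neighboring gadgets. Counting shows the target independence number is $k\cdot n + 1\cdot m + \tfrac{\ell-w}{2}$ (variables, clauses, internal wire vertices) and is achieved iff $\phi$ is satisfiable.

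Next, for $d=2$, I would reduce from \textsc{Grid Embedded SAT} (Theorem~\ref{thm:GridEmbSATlower}). Given the embedded incidence graph in an $n\times n$ grid, take a $k$-refinement for a sufficiently large constant $k$ so that each variable's cycle and each clause's triangle fit in its own $O(1)\times O(1)$ sub-box with empty buffer points separating adjacent gadgets; this buffer both prevents unwanted induced edges (including stray diagonal edges of the augmented grid) and provides room for parity-fixing wire detours. The resulting induced augmented grid graph has $N=O(n^2)$ vertices, so a $2^{o(\sqrt{N})}$ algorithm would yield a $2^{o(n)}$ algorithm for \textsc{Grid Embedded SAT}, contradicting ETH. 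For $d\ge 3$, I would start from an arbitrary $(3,3)$-CNF formula $\phi$ (Proposition~\ref{prop:33SAT}), place the $n$ variable gadgets in one $(d-1)$-dimensional face of a $(d-1)$-box of side length $L=O(n^{1/(d-1)})$, and place the clause triangles in the opposite face, each with its three incoming wire endpoints. Cube Wiring (Theorem~\ref{thm:boxwiring}) routes the literal wires as vertex-disjoint paths in a hypercube of side $O(L)$, using $O(L)$ layers; local parity adjustments are absorbed into the constant slack of the wiring. The total size is $N=O(L^d)=O(n^{d/(d-1)})$, so a $2^{o(N^{1-1/d})}$ algorithm collapses to $2^{o(n)}$ for $(3,3)$-SAT, contradicting ETH.

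The main technical obstacle is the inducedness: the augmented grid is denser than the plain grid, so the reduction must prevent any stray axis-parallel or diagonal edge from appearing between gadgets or between a gadget and a wire passing nearby, since a single extraneous edge can shift the independence number. The fix is to make the refinement constant $k$, the hypercube grid scaling inside Cube Wiring, and the spacing between parallel wires all large enough that every pair of ``used'' vertices belonging to distinct gadgets or distinct wires is at $\ell_\infty$-distance at least $2$, which forbids both grid and augmented-diagonal adjacencies. A secondary issue is realizing wires of the exact required parity despite the rigidity of Cube Wiring; this is handled by the same constant-size detours used in the \textsc{Dominating Set} proof, inserted in the buffer space next to the clause face.
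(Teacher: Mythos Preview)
Your approach is essentially the paper's: both use an even cycle as the variable gadget, an odd cycle realized via a diagonal of the augmented grid as the clause gadget, and paths as wires, plugging into \textsc{Grid Embedded SAT} for $d=2$ and Cube Wiring for $d\ge 3$. The paper uses a $9$-cycle for three-literal clauses (with wires attached at vertices $0,3,6$) and a single edge for two-literal clauses rather than your triangle, but either choice works and the triangle is arguably simpler to embed.

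Two small things to clean up. First, your wires should have \emph{odd} length, not even---the ``$3k+\text{const}$'' phrasing looks copied from the \textsc{Dominating Set} argument, and your own target count $(\ell-w)/2$ for internal wire vertices is only consistent with odd-length wires (this is also what the paper uses). Second, your inducedness discussion covers interference between distinct gadgets and wires but not turns \emph{within} a single wire: a northeast or southwest corner on a grid path induces a diagonal chord in the augmented grid, turning a $2$-path into a triangle and destroying the propagation. The fix (implicit in the paper too) is to take the diagonal edge itself at such corners and absorb the resulting parity change with the local detours you already allow.
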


\begin{figure}[t]
\begin{center}
\includegraphics[scale=0.7]{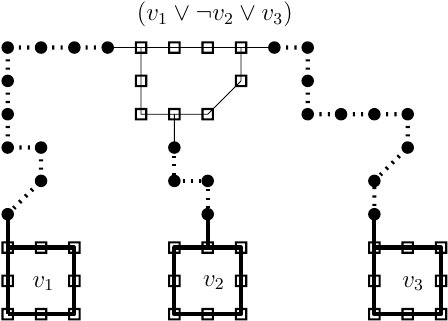}
\caption{Gadgetry for the formula $(v_1 \vee \neg v_2 \vee v_3)$.}\label{fig:vcgadget_stoc}
\end{center}
\end{figure}

\begin{proof}
The complement of an independent set is a vertex cover and vice
versa, so it is sufficient to give a reduction for \textsc{Vertex Cover}. We
will use a reduction from \textsc{Grid embedded SAT}. Let $\phi$ be a
$(3,3)$-CNF formula, and let $G_{\phi}$ be its incidence graph. Similarly to
our \textsc{Dominating Set} gadgetry, we use a cycle as variable gadget. The
literals are represented by paths of odd lengths, see
Figure~\ref{fig:vcgadget_stoc}. The variable gadget for a variable $v_i$ is a
cycle of length eight with vertices $v_i(0),\dots,v_i(7)$, where the literal
edges formerly incident to $v$ are now connected to a cycle vertex of even
index for positive and of odd index for negative literals (see
Figure~\ref{fig:vcgadget_stoc}). For a clause $c_i$ that has three literals,
the gadget is a cycle of length nine with vertices $c_i(0),\dots,c_i(8)$, and we connect the wires at  vertices $c_i(0),c_i(3),c_i(6)$. For clauses that have exactly two literals, we use an edge
as clause gadget, and connect each wire to different endpoints. We can
eliminate clauses of size one in a preprocessing step. The wire gadgets are
simple paths of odd length.

Given a satisfying assignment to $\phi$, we can create a vertex cover the following way. For true variables, we select vertices of even degree from the variable cycle, and for false variables, we select the vertices of odd degree. On the wires, we select every second vertex starting from the variable cycle; in case of a true literal, this means that we do not select the first inner vertex adjacent to the variable cycle, nor the endpoint in the clause gadget, while for false literals we do select both of these. Since this is a true assignment, at least one literal of each clause of size three is true; let $k\in {0,3,6}$ be the index at which the wire of a true literal connects to the gadget of the clause $c_i$. Then we can select vertices $c_i(k+1),c_i(k+3),c_i(k+4),c_i(k+6),c_i(k+7)$ (where the indices are interpreted modulo $9$) from the clause cycle. This covers every edge of the cycle, and also the last edges of the other wires. For a clause of size $2$, we select an endpoint of the clause segment that corresponds to a false literal, or if both literals are true, then an arbitrary endpoint.
For a construction with $\nu$ variables,
$\gamma$ clauses of three literals, $\gamma'$ clauses of two literals, and
$\kappa$ inner vertices on the literal paths, this gives a vertex cover of size
$s\eqdef 4\nu + 5\gamma + \gamma' + \kappa/2$.

Suppose now that we are given a vertex cover $S$ of size $s$. A vertex cover must contain at least four vertices of each variable cycle,
at least five vertices per clause cycle (as introduced for clauses with three literals), and at least one vertex for each
clause segment (as introduced for clauses with two literals). It is also easy to check that from a wire of
$2k+1$ edges, the vertex cover must contain at least $k$ inner vertices. It follows that $S$ is a cover where all of these inequalities are tight: each variable cycle has to contain exactly $4$ vertices of $S$, each clause cycle contains exactly $5$ vertices of $S$, each clause segment contains $1$ vertex of $S$, and among the inner vertices of each wire of length $2k+1$, there are exactly $k$ vertices from $S$.

On a variable cycle, in order to cover the cycle with four vertices, all vertices of $S$ need to have odd indices or they all need to have even indices. Let us set a variable to true if and only if the corresponding variable cycle has its even index vertices in $S$. We show that every clause is satisfied by this assignment. Consider first a clause with three literals.
We claim that for any vertex cover of size exactly five within the clause cycle,
there is at least one vertex of index divisible by three that is not in $S$. Suppose the contrary: the clause cycle of $c_i$ satisfies $c_i(0),c_i(3),c_i(6) \in S$. Then $S$ is disjoint from at least one of the sets $\{c_i(1),c_i(2)\},\{c_i(4),c_i(5)\},\{c_i(7),c_i(8)\}$, thus $S$ fails to cover at least one edge among $c_i(1)c_i(2),c_i(4)c_i(5),c_i(7)c_i(8)$, which is a contradiction.
It follows that the last edge of at least one of the wires connecting to the clause cycle is covered by the last inner vertex of the wire. By the size bound, we know that this wire has every second inner vertex in $S$, so it follows that the starting vertex of the wire in the variable gadget has to be in $S$, which means that the corresponding literal is true. A somewhat simpler argument shows that clauses of size $2$ are also satisfied by this assignment. 

Therefore, there is a vertex cover of size $s$ if and only if the original formula is
satisfiable.

Next, we need to insert these gadgets into a refined version of either the
$2$-di\-men\-sion\-al grid embedding or the cube wiring. We regard this
refined grid as a subgraph of the augmented grid. Using the ``diagonals'' of
the augmented grid, the odd-length clause cycles can be realized. We can also
enforce the parity condition on the wires by incorporating some diagonals; we
introduce small local detours on the wires that have even length after the
refinement to make their length odd. (Figure~\ref{fig:vcgadget_stoc} has two
wires with local parity adjustments.)
\end{proof}

\subsection{\textsc{Connected Vertex Cover}}\label{sec:cvclower}

In this section we prove the following theorem.

\begin{theorem}\ \\[-1.5em]
\begin{itemize}
\item \textsc{Connected Vertex Cover} has no $2^{o(\sqrt{n})}$ algorithm
in $2$-dimensional grid graphs\footnote{This refers to subgraphs of the grid that are not necessarily induced.} or in unit disk graphs, unless ETH fails.
\item Let $d\geq 3$. Then \textsc{Connected Vertex Cover} has
no $2^{o(n^{1-1/d})}$ algorithm in induced $d$-dimensional grid graphs, unless ETH fails.
\end{itemize}
\end{theorem}

We apply ideas from a reduction by Garey and Johnson~\cite{GareyJ77} to make
our gadgetry for \textsc{Connected Vertex Cover} from our original
\textsc{Vertex Cover} gadgets. The key step of their reduction
converts a planar graph of maximum degree three into a planar graph of maximum degree four in such a way that a vertex cover of the original
graph corresponds to a connected vertex cover of the constructed graph. See Lemma
2 in their paper, which effectively adds a \emph{skeleton} to the graph. We\index{skeleton}
illustrate this on a five-vertex augmented grid graph instead of on our actual
construction; see the first step of Figure~\ref{fig:cvc_example}.

By adding the skeleton to our \textsc{Vertex Cover} construction, we get a
planar graph of maximum degree four, which has a connected vertex cover of size $k$ if and only if the original graph has a vertex cover of some specific size $k'$. Starting from our previous augmented grid
embedding, the edges of this planar graph can be drawn as grid paths in a
refinement of the \textsc{Vertex Cover} drawing. (Note that we avoid the
diagonals with these paths, and only use grid edges in the drawing.) We call
this grid drawing $\cD_\phi$. We use the following simple trick~\cite{EscoffierGM10} to make an equivalent instance that is a grid graph.

\begin{observation}\label{obs:getequiv}
Let $e=uv$ be an edge of a graph $G$ that has at least two edges. Let $G'$ be the graph that we get if we
subdivide $e$ using a vertex $w_e$, and add a leaf $w'_e$ that is connected to $w_e$ (i.e., $V(G') = V(G) \cup
\{w_e,w'_e\},\; E(G') = \big(E(G) \setminus \{uv\}\big) \cup \{uw_e,w_e v,w_e
w'_e\}$). Then $G$ has a connected vertex cover of size $k$ if and only if
$G'$ has a connected vertex cover of size $k+1$.
\end{observation}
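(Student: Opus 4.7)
The plan is a direct two-way construction, exhibiting how to transform a connected vertex cover (CVC) of one graph into a CVC of the other with the required size change. First I would observe that the case $k=0$ (i.e.\ $G$ has no edges) is trivial and can be dismissed, so I may assume $|C'|\ge 2$ in everything that follows.

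For the forward direction, given a CVC $C$ of $G$ of size $k$, I would set $C' := C \cup \{w_e\}$; this has size $k+1$. Every edge of $G' \setminus \{uw_e, w_ev, w_ew'_e\}$ is an edge of $G \setminus \{e\}$ and is covered by $C$; the three new edges are all covered by $w_e$. For connectivity, at least one of $u, v$ lies in $C$ (to cover $e$ in $G$), so $w_e$ is attached to the previously connected piece $G[C]$; any path in $G[C]$ that uses the edge $e$ simply reroutes through $w_e$ in $G'[C']$.

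For the backward direction, starting from a CVC $C'$ of $G'$ of size $k+1$, I would normalize by arranging $w_e \in C'$ and $w'_e \notin C'$. Since $w'_e$ is a leaf with only neighbor $w_e$, at least one of the two must lie in $C'$ to cover $w_ew'_e$; the case $w'_e \in C'$ with $w_e \notin C'$ would isolate $w'_e$ inside $G'[C']$, contradicting connectivity because $|C'|\ge 2$; and if both lie in $C'$, deleting $w'_e$ preserves both coverage and connectivity (the only edge it covers, $w_ew'_e$, remains covered by $w_e$, and a leaf deletion cannot disconnect the induced subgraph). After this normalization I would set $C := C' \setminus \{w_e\}$, of size exactly $k$.

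The remaining task is to verify that $C$ is a CVC of $G$. For the vertex-cover property, any edge of $G$ other than $e$ is present in $G'$ and was covered in $C'$ by a vertex other than $w_e$ (since $w_e$ is only incident, in $G'$, to $u$, $v$, $w'_e$, none of which gives a $G$-edge via $w_e$), so it is covered by $C$; for the edge $e=uv$, connectivity of $G'[C']$ forces some neighbor of $w_e$ in $C'$, which must be $u$ or $v$ (as $w'_e \notin C'$), hence is in $C$. For connectivity, I would note that $G[C]$ differs from $G'[C' \setminus \{w_e\}]$ only by possibly containing the extra edge $uv$, and that any $G'[C']$-path through the vertex $w_e$ (which has at most two $C'$-neighbors, namely $u$ and $v$) can be replaced by the direct edge $uv$ in $G[C]$, while every other path survives unchanged. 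The only subtlety in the whole argument is the WLOG normalization of $C'$; once that is in place, everything else is routine bookkeeping.
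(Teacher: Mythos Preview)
The paper does not give its own proof of this observation; it is quoted as a known fact from \cite{EscoffierGM10} and used without argument, so there is nothing to compare against. Your two-direction construction is correct and is exactly the natural verification one would write. One small wording issue: in the backward direction, when both $w_e$ and $w'_e$ lie in $C'$ and you delete $w'_e$, the normalized set already has size $k$, so $C := C'\setminus\{w_e\}$ then has size $k-1$, not ``exactly $k$''; this is harmless under the standard ``size at most $k$'' reading of the decision problem, but your phrase ``of size exactly $k$'' is not quite accurate in that sub-case.
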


\begin{figure}[t!]
\begin{center}
\includegraphics[width=0.8\textwidth]{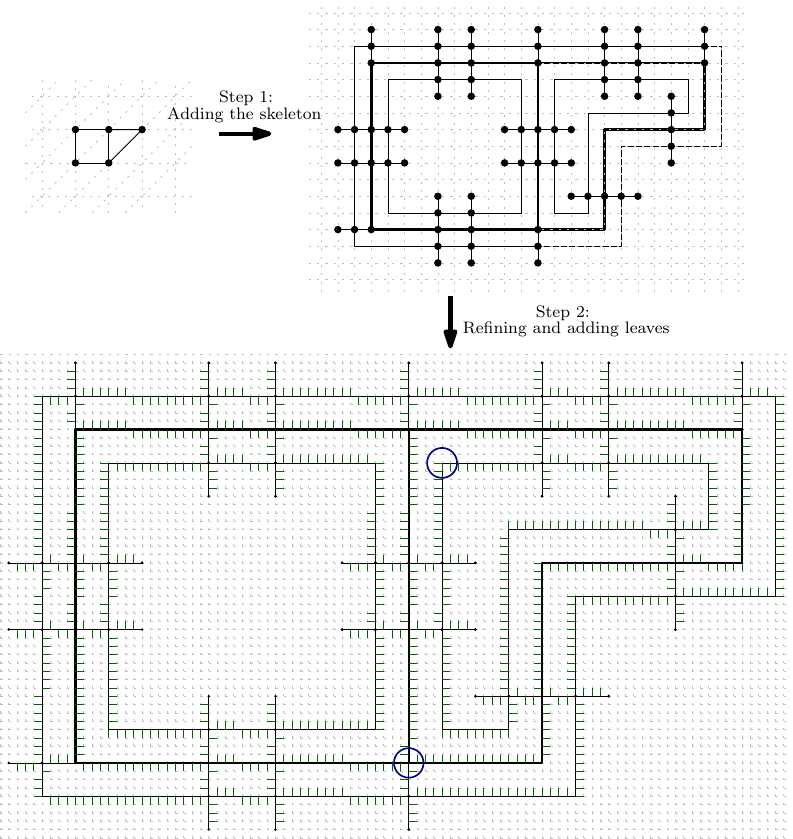}
\caption{Two transformations starting from a small augmented grid graph,
resulting in a (non-augmented) grid graph. The leaves can be added without
conflicts, even around degree four vertices and corners (in blue circles).}%
\label{fig:cvc_example}
\end{center}
\end{figure}

We refine $\cD_\phi$ by a factor of four; this way each old edge becomes a
grid path of length at least four. We subdivide each edge by adding all the
grid points that lie on its grid path as vertices, and we add leaves to all of
these new vertices. This corresponds to applying
Observation~\ref{obs:getequiv} multiple times, therefore we get an equivalent
instance. Note that we need to show that these leaves can be added into the current grid drawing without trying to assign the same grid point to two different leaves. (We call such double assignments \emph{conflicts}.) 
Consider first the neighborhoods of
\emph{interesting vertices}, that is, in neighborhoods of vertices of degree
four and corners. Notice that we can choose the leaves in a manner (see Figure~\ref{fig:cvc_example}) that avoids the conflicts. For all other vertices, the leaves cannot introduce any
conflict, as for any pair of leaves $a,b$, we have that the neighbor of $a$ and the neighbor of $b$ are already assigned to grid points that have no shared neighbors. Since we only used constantly many refinements, the
resulting grid graph is drawn in an $O(n)\times O(n)$ grid.

We have shown how to modify the 2-dimensional lower-bound construction for \textsc{Vertex Cover} to obtain an instance of \textsc{Connected Vertex Cover}, such that the original instance of \textsc{Vertex Cover} has a solution of size $k$ if and only if the resulting instance of \textsc{Connected Vertex Cover} has a solution of some specific size $k''$. Consequently, there is no $2^{o(\sqrt{n})}$ algorithm for \textsc{Connected Vertex Cover} in $2$-dimensional grid graphs unless ETH fails.

\begin{remark}\label{rem:conn_ext}
Let $V$ be the vertex set of the starting {\sc Vertex Cover} construction, and let $W$ and $L$ denote the newly introduced subdividing and leaf vertices respectively. Let $G$ be the constructed graph itself. Note that $W$ is a minimum (not connected) vertex cover of $G$, since each edge of the original vertex cover construction has been subdivided at least once. An easy argument shows that there is a minimum connected vertex cover which contains $W$, therefore a minimum connected vertex cover is essentially the smallest set $S\supseteq W$ that induces a connected subgraph of $G$.
\end{remark}

Note that we can realize our construction above as a unit disk graph. The disk
centers are the same as in the grid graph, but we use disks of radius $1/2$.
Moreover, we shift the disk centers corresponding to leaf vertices by $1/3$ or
$1/4$ towards the neighboring disk's center, with the constraint that leaves
added to neighboring vertices get a different shift.

\smallskip
When $d\geq 3$, we can do the same modifications. By being careful
with adding the leaves we can even get a $d$-dimensional \emph{induced} grid
graph. Indeed, it is easy to avoid conflicts and unwanted induced edges between leaves that are
attached to neighbors of interesting vertices. (The most challenging case is
vertices of degree four, but these have the property that the four neighboring
edges lie in the same 2-flat, so by placing the leaves outside this 2-flat we
can avoid conflicts between them.) As for the straight paths of length at least four
connecting these vertices, the leaves can be adjusted
on the paths so that the leaves attached to the first and last inner vertex are
pointing in the desired direction, as required by the interesting vertex at the
endpoint.

\subsection{Some easy consequences}\label{sec:brieflowers}

This section sketches further ETH-based lower bounds of the form $2^{\Omega(n^{1-1/d})}$ for several problems.

\paragraph*{\textsc{Steiner Tree}}

We apply a $2$-refinement to our connected vertex cover construction from Section~\ref{sec:cvclower}. We then
subdivide every edge with the new grid point in the middle, and define the set
of terminals to be these new vertices. The non-terminal vertices of a Steiner tree in this
graph is a connected vertex cover in the original  graph and the other way around.
Notice that due to the refinement, the resulting graph is an induced grid
graph even in the $2$-dimensional case.

\paragraph*{\textsc{Connected Dominating Set}}

We use a classical reduction by Clark~\etal~\cite{ClarkCJ90} from \textsc{Planar
Connected Vertex Cover} to \textsc{Grid Connected Dominating Set} (see Theorem 5.1
in~\cite{ClarkCJ90}). We apply this reduction to our \textsc{Vertex Cover}
construction from Section~\ref{sec:isvclower}. We get an induced grid graph embedded in an $O(n)\times
O(n)$ grid. We can divide the construction into constant size
variable and clause gadgets, and wire gadgets of size proportional to their
length, and use these gadgets for the higher dimensional reduction, similar to what we did at the end of Section~\ref{sec:dslower}.

\paragraph*{\textsc{Feedback Vertex Set}}

We observe that subdividing an edge in a \textsc{Feedback Vertex Set}
instance leads to an equivalent instance. Take our \textsc{Vertex Cover}
construction from Section~\ref{sec:isvclower}, and add a triangle to each
edge, that is, for each edge $uv$ we add a new unique vertex $w_{uv}$, and the
edges $uw_{uv}$ and $uw_{uv}$. These triangles ensure that at least one
endpoint of the original edge has to be in the feedback vertex set. Moreover,
the original graph has a vertex cover of size $k$ if and only if the new graph
has a feedback vertex set of size $k$. This results in a planar graph of
degree at most $6$. Our wires become triangle chains (of degree at most $4$),
and using subdivisions we can realize this wire gadget as an induced grid
graph.

For vertices of degree more than $4$, we use the degree reduction gadget by
Speckenmeyer~\cite{Speckenmeyer83}, which gives us constant size planar
graphs that can be put in place of vertices of degree $5$ and $6$. These planar graphs can
be drawn in an $O(1)\times O(1)$ grid, which can be turned into an induced
grid graph of constant size using subdivisions. We introduce a refinement so
that we can insert these gadgets as necessary.

\subsection{\textsc{Connected Feedback Vertex Set}}

The goal of this section is to prove the following.

\begin{theorem}\label{thm:cfvslower}\ \\[-1.5em]
\begin{enumerate}
\item[(i)] \textsc{Connected Feedback Vertex Set} has no $2^{o(\sqrt{n})}$ algorithm
in unit disk graphs, unless ETH fails.
\item[(ii)] Let $d\geq 3$. Then \textsc{Connected Feedback Vertex Set} has
no $2^{o(n^{1-1/d})}$ algorithm in induced $d$-dimensional grid graphs, unless ETH fails.
\end{enumerate}
\end{theorem}

\begin{proof}
We use our {\sc Connected Vertex Cover} construction as a starting point, but replace each leaf added in Observation~\ref{obs:getequiv} with a \emph{leaf cycle}: for a vertex $w$ with leaf $\ell$ we add another vertex $\ell'$ and connect $w\ell$ and $\ell\ell'$ to form a cycle of length three. There exists a minimum connected feedback vertex set $S$ which contains all the vertices $w$ that have such a leaf cycle attached. Note that the set $W$ of vertices with attached leaf cycles is by itself a feedback vertex set, so similarly to Remark~\ref{rem:conn_ext}, a minimum connected feedback vertex set $S$ is just the smallest set containing $W$ that induces a connected subgraph.

To prove (i), note that the construction without leaf cycles is a grid graph, and it has a natural unit disk representation. We can extend this representation at each vertex $w\in W$ by adding two small perturbations of the disk of $w$, making sure that neighboring vertices $w,w'\in W$ get disjoint disks. This results in a unit disk representation of our construction.

To prove (ii), we use the same global structure, starting from our $d$-dimensional {\sc Connected Vertex Cover} construction, but we need to introduce some further changes.
We  replace our triangular leaf cycles with cycles of length $4$, that is, we subdivide each edge $\ell\ell'$ with a new vertex; let $G^d$ denote the resulting graph. We use cube wiring to realize $G^d$ as a non-induced grid graph in $\Reals^3$ and in higher dimensions. Since our original connected vertex cover construction had maximum degree 4, the graph $G^d$ has maximum degree 5, which is only attained on vertices that have a leaf cycle attached. Let us call an induced path in this construction which has a leaf cycle on all of its internal vertices a \emph{leaf path}. We argue that in $\Reals^3$, the graph $G^3$ can be tweaked to get an induced grid graph; the same tweaks can be used in higher dimensions as well.

\begin{figure}[t]
\begin{center}
\includegraphics[width=0.6\textwidth,trim={0cm 0cm 0cm 2cm},clip]{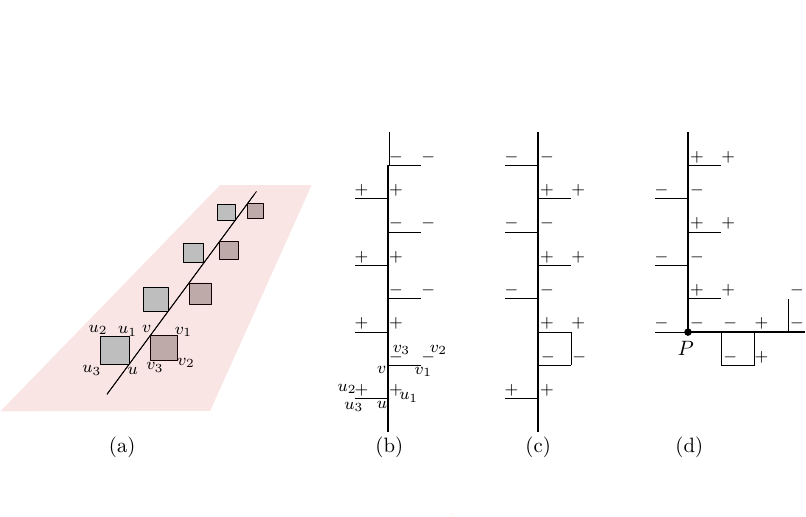}
\caption{(a) and (b): A path with leaf cycles. The cycles on the left are above,  and the cycles on the right are below the plane $z=0$. (c): A tweak along a path with leaf cycles. (d):Introducing a turn in a leaf path.}\label{fig:cfvs_3d_paths}
\end{center}
\end{figure}

\begin{figure}[h]
\begin{center}
\includegraphics[width=\textwidth]{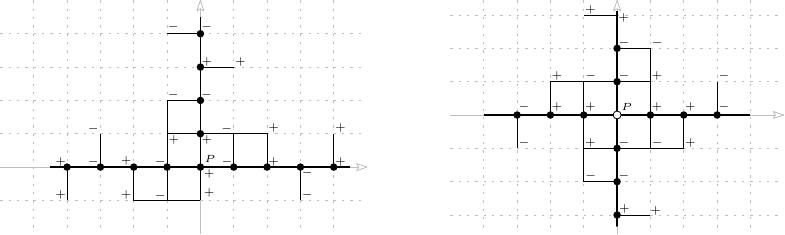}
\caption{Left: Gadget for replacing degree five skeleton vertices. Right: Gadget for replacing degree five normal vertices.}\label{fig:cfvs_3d_gadgets}
\end{center}
\end{figure}

We can avoid unwanted induced edges along leaf paths by putting the leaf cycles along each path on alternating sides (see Figure~\ref{fig:cfvs_3d_paths}a). In order to visualize our gadgets in figures, we usually just draw the intersection of the gadget with a plane, which we can assume to be the plane $z=0$. If the grid point directly above some vertex in this plane is in the gadget, e.g., we have $(x,y,1)$ in the gadget, then we put the sign $+$ near the point $(x,y,0)$ . Similarly, the sign $-$ near $(x,y,0)$ denotes that $(x,y,-1)$ is included in the gadget. For example, the depiction in Figure~\ref{fig:cfvs_3d_paths}a is equivalent to Figure~\ref{fig:cfvs_3d_paths}b.

If we cannot place the leaf cycles alternatingly along a path for some reason, then we can introduce a tweak as shown in Figure~\ref{fig:cfvs_3d_paths}c. Note that the tweak does not change the structure or size of a minimum feedback vertex set. Furthermore, we can also introduce turns on leaf paths: we show a turn at a vertex $P$ in Figure~\ref{fig:cfvs_3d_paths}d.

Finally, we need gadgets to deal with potential unwanted induced edges around vertices of degree $4$ without leaf cycles, and around vertices of degree $5$. 

Due to the construction, a vertex of degree $5$ has a leaf cycle and is the common endpoint of three leaf paths. We replace such vertices with the gadget on the left in Figure~\ref{fig:cfvs_3d_gadgets}. Notice that the gadget does not contain new vertices, only some extra edges. All the marked vertices (the vertices that are in $W$) can be assigned to vertex-disjoint cycles. For the central vertex $P=(0,0,0)$ and its selected neighbors, the cycles are
\[
\begin{matrix}
(0,0,0),&(0,0,1),&(0,-1,1,),&(0,-1,0);\\
(1,0,0),&(1,0,-1),&(1,1,-1),&(1,1,0);\\
(0,1,0),&(0,1,1),&(-1,1,1),&(-1,1,0);\\
(-1,0,0),&(-1,0,-1),&(-1,-1,-1),&(-1,-1,0).
\end{matrix}
\]
The gadget without the selected vertices is cycle-free. Therefore, there is a minimum connected feedback vertex set containing the selected vertices, i.e., the gadget has the same role as in the original construction, and the feedback vertex set itself remains unchanged.

For vertices of degree $4$ without a leaf cycle we introduce the gadget on the right of Figure~\ref{fig:cfvs_3d_gadgets}. Again, the marked vertices (the vertices from $W$) can be assigned to vertex disjoint cycles, and their removal gives a cycle-free graph: from the gadget, we will be left with the vertex $P$ and four induced paths of length $6$.

Using a further refinement, the above gadgetry can be integrated into the construction to create an induced $d$-dimensional grid graph that is at most constant times larger than the original graph. Therefore a $2^{o(n^{1-1/d})}$ algorithm for \textsc{Connected Feedback Vertex Set} would give a $2^{o(n^{1-1/d})}$ algorithm for \textsc{Feedback Vertex Set}, which would violate ETH, thus finishing the proof.
\end{proof}

\subsection{\textsc{Hamiltonian Cycle}, \textsc{Hamiltonian Path}, and \textsc{Euclidean TSP}}

We prove the following theorem first.

\begin{theorem}
There is no $2^{o(n^{1-1/d})}$ algorithm for 
\textsc{Hamiltonian Cycle} or \textsc{Hamiltonian Path} in induced $d$-dimensional grid graphs, unless ETH fails.
\end{theorem}

\begin{proof}
We can essentially use the construction by Itai~\etal~\cite{ItaiPS82} for \textsc{Hamiltonian Cycle} in grid graphs, but in order to get a tight bound, we need to be somewhat careful with how we handle the underlying grid embedding. The writeup assumes that the reader is familiar with~\cite{ItaiPS82} and~\cite{Plesnik79}.
Our proof is a reduction from $(3,3)$-SAT that uses several steps. Similarly to the other problems, we start with the planar case.

Given a $(3,3)$-CNF formula $\phi$, we start by drawing its incidence graph in the grid the following way. We place the variable vertices horizontally on the top, and the clauses vertically on the left of the figure, each edge is drawn in a shape ``$\lrcorner$''. Next, we apply the construction of Plesn\'ik~\cite{Plesnik79} for the \NP-completeness of \textsc{Directed Hamiltonian Cycle} in planar digraphs where the sum of in- and outdegrees of each vertex is $3$ to this specific drawing, see Figure~\ref{fig:plesnikgrid}. (Note that the gadgetry is similar to, but slightly different from the one given by Garey, Johnson, and Tarjan~\cite{GareyJT76}.) In this gadgetry, each variable and its negation is assigned a pair of parallel arcs, and the truth value is determined by the Hamiltonian cycle (the arc contained in the Hamiltonian cycle is exactly one of the two arcs). These parallel arcs are connected by XOR-gadgets, which are essentially four arcs, alternatingly oriented. The clauses are represented by three or two pairs of parallel arcs, depending on the number of literals inside. These arcs are attached to triple-OR and normal OR gadgets.
The opposing arc of each literal is connected to an arc of the corresponding variable with a XOR gadget, that is, if variable $x$ occurs as a positive literal in clause $c$, then we enforce the condition that either $x$ is true and the negation of the literal in $c$ is false, or $x$ is false and the negation of the literal is true. Such XOR-gadgets can also cross each other using a crossing gadget. It is easy to see that for a $(3,3)$-CNF formula $\phi$ on $n$ variables, the obtained planar digraph $G_1$ has size $O(n^2)$, and moreover that it is drawn in an $O(n)\times O(n)$ grid.

\begin{figure}[t]
\begin{center}
\includegraphics[width=\textwidth]{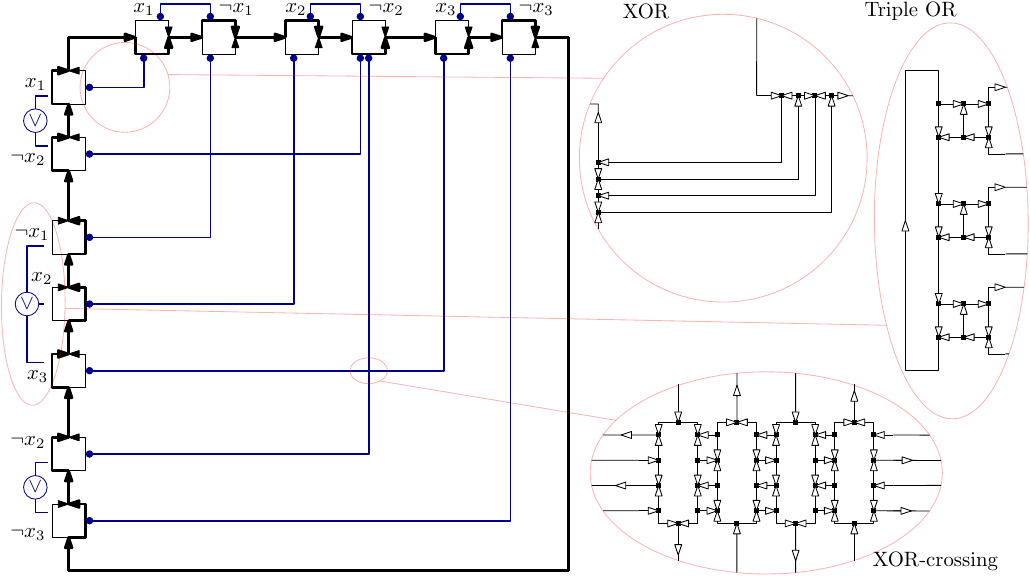}
\caption{The construction by Plesn\'ik for the formula $(x_1\vee \neg x_2)\wedge(\neg x_1\vee x_2 \vee x_3)\wedge(\neg x_2\vee \neg x_3)$, drawn in an $O(n)\times O(n)$ grid.}\label{fig:plesnikgrid}
\end{center}
\end{figure}

The next reduction step is to \textsc{Hamiltonian Cycle} in planar undirected bipartite graphs
(see also~\cite{ItaiPS82}); one can just replace each vertex $v$ of $G_1$ with two vertices, $v_{in}$ and $v_{out}$, connected by an edge, and for each arc $uv$ of $G_1$, we add the edge $u_{out}v_{in}$ to the new graph $G_2$. Note that by introducing a $2$-refinement in the drawing of $G_1$, we can add the new vertices and change the edges accordingly, therefore we get a drawing of $G_2$ in an $O(n)\times O(n)$ grid. Using this graph, we can follow the proof by Itai, Papadimitriou, and Szwarcfiter~\cite{ItaiPS82} from this point onwards: we can make the above drawing of $G_2$ into a \emph{parity preserving embedding} in a 3-refinement. This is a grid drawing where the left and right side of the bipartite graph are mapped to even and odd grid points respectively. Next, they take a 9-refinement of the underlying grid, and substitute vertices with the nine vertices covered by a $2\times 2$ grid square. Each edge becomes a \emph{tentacle}, connecting two such grid squares. A tentacle is a width $2$ grid path (see Figure~\ref{fig:tentacle}), which allows one to model the Hamiltonian cycle passing the edge by ``snaking'' through it, or to do a long detour, which happens for tentacles that correspond to edges that are not in the original Hamiltonian cycle in $G_2$. The final graph we get is an induced grid graph, that is now guaranteed to fit in an $O(n)\times O(n)$ grid. This concludes the proof of the lower bound in $2$ dimensions.

\begin{figure}[t]
\begin{center}
\includegraphics[height=4cm]{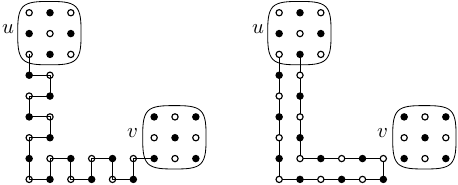}
\caption{A Hamiltonian cycle passing from vertex $u$ to $v$ through a tentacle (left), and making a detour on the same tentacle (right).}\label{fig:tentacle}
\end{center}
\end{figure}

\begin{figure}
\begin{center}
\includegraphics[height=6cm]{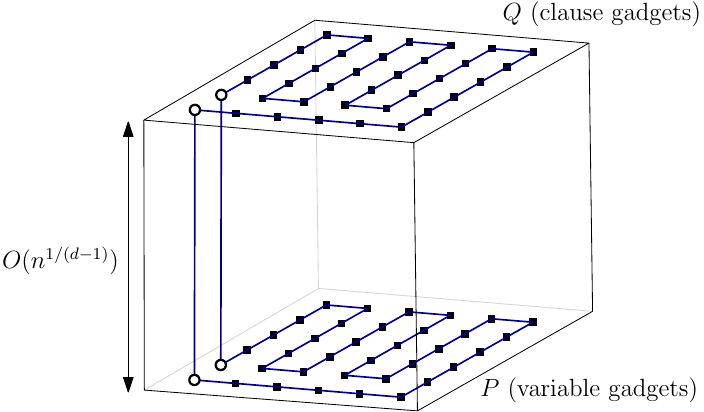}
\caption{Adding a strip through all the variable and clause gadgets. This a schematic picture; the points represent gadgets, and the blue segments correspond to tentacles.}\label{fig:3dhamcycle}
\end{center}
\end{figure}

For higher dimensions, we can reuse the variable, clause and wire (XOR) gadgets we gained in the $2$-dimensional construction. Notice that the XOR-crossing gadgets are not necessary. The one additional thing to take care of is that we need to place the variable gadgets and the clause gadgets along a cycle, that is, we need to run a tentacle through these gadgets; essentially, we need to run this strip through our point sets $P$ and $Q$ in the cube wiring. To this end, one just needs to take a Hamiltonian path on the variable gadgets in the bottom facet, and connect its ends to the ends of the Hamiltonian path drawn on the clause gadgets in the top facet. We illustrate the approach in three dimensions in Figure~\ref{fig:3dhamcycle}.

To show the same bound for \textsc{Hamiltonian Path}, observe that there are edges in our \textsc{Hamiltonian Cycle} construction that are contained in all Hamiltonian cycles. Such an edge can be drawn as a simple path in the grid, instead of a tentacle as for other edges. By removing an inner vertex $v$ of such a path, the neighbors of $v$ will have degree $1$, therefore a Hamiltonian path is forced to have these vertices as endpoints. Consequently, we have gained an equivalent instance.
\end{proof}

The following is an important corollary.

\begin{corollary}
There is no $2^{o(n^{1-1/d})}$ algorithm for 
\etsp in $\Reals^d$, unless ETH fails.
\end{corollary}
\begin{proof} 
Let $G$ be the induced grid graph constructed above for Hamiltonian Cycle, and let $P\subseteq \Reals^d$ the set of grid points realizing its vertex set.
We claim that $G$ has a Hamiltonian cycle if and only if $P$ has a TSP tour of length $n$. 
Indeed, a Hamiltonian cycle in $G$ induces a tour of length $n$ 
on~$P$, since edges of the induced grid graph $G$ correspond to pairs of vertices at distance $1$. 
Moreover, if there is a tour of length $n$ on~$P$, then 
there must be a Hamiltonian cycle in $G$, since a tour of length $n$ cannot
afford to use edges of length greater than~1 (as there are no edges of length smaller than~1, and pairs of points at distance exactly one are always connected in $G$).
\end{proof}

\section{Conclusion}

We have presented an algorithmic and lower bound framework for obtaining $2^{\Theta(n^{1-1/d})}$ algorithms and matching conditional lower bounds for several problems in geometric intersection graphs. We find the following questions intriguing:
\smallskip
\begin{itemize}
\item Is it possible to obtain clique decompositions without geometric information? Alternatively, how hard is it color the complement of a small diameter geometric intersection graph of fat objects? If we could approximate the size of a minimum clique decomposition in a greedy partition class within a constant factor, then all our algorithms (except those for {\sc Hamiltonian Cycle} and {\sc Hamiltonian Path}) would become robust.
\item Many of our applications require the low degree property (i.e., the fact that $G_\cP$ has bounded degree). Is the low degree property really essential for these applications? Would having low average degree be sufficient?
\item Is it possible to modify the framework to work without the similar size assumption? In case of \textsc{Dominating Set}, it is already known that the similar size assumption is necessary~\cite{Bergk20}.
\end{itemize}
\smallskip
Finally, it would be interesting to explore the potential consequences of this framework for parameterized and approximation algorithms. Work in this direction has already begun~\cite{FominLP0Z20}.

\bibliography{highdim-h}

\clearpage

\appendix

\section{Problem definitions}\label{app:probdefs}
 
In this section, we state the formal definitions of problems appearing in our paper.

\defproblem{$(3,3)$-SAT}{A CNF formula $\phi$ with at most 3 variables per clause and where each variable occurs in at most 3 clauses.}{Is there is a satisfying assignment?}

\defproblem{Grid Embedded SAT}{A $(3,3)$-CNF formula $\phi$ whose incidence graph $G_{\phi}$ is embeddable in $G_2(n,n)$.}{Is there is a satisfying assignment?}

\defproblem{Planar SAT}{A CNF formula $\phi$ whose incidence graph $G_{\phi}$ is planar.}{Is there is a satisfying assignment?}

\defproblem{Independent Set}
{A graph $G=(V,E)$ and an integer $k$}
{Decide if there is a vertex set $I\subseteq V$ of size $k$ that induces no edges.}

\defproblem{Vertex Cover}
{A graph $G=(V,E)$ and an integer $k$}
{Decide if there is a vertex set $S\subseteq V$ of size $k$ such that all edges are incident to at least one vertex from $S$.}

\defproblem{Connected Vertex Cover}
{A graph $G=(V,E)$ and an integer $k$}
{Decide if there is a vertex set $S\subseteq V$ of size $k$ such that $S$ induces a connected subgraph, and all edges are incident to at least one vertex from $S$.}

\defproblem{Dominating Set}
{A graph $G=(V,E)$ and an integer $k$}
{Decide if there is a vertex set $D\subseteq V$ of size $k$ such that all vertices in $V\setminus D$ are adjacent to at least one vertex in $D$.}

\defproblem{$r$-Dominating Set}
{A graph $G=(V,E)$ and an integer $k$}
{Decide if there is a vertex set $D\subseteq V$ of size $k$ such that all vertices in $V\setminus D$ have at least one vertex of $D$ within distance $r$.}

\defproblem{Connected Dominating Set}
{A graph $G=(V,E)$ and an integer $k$}
{Decide if there is a vertex set $D\subseteq V$ of size $k$ such that $D$ induces a connected subgraph, and all vertices in $V\setminus D$ are adjacent to at least one vertex in $D$.}

\defproblem{Steiner Tree}
{A graph $G=(V,E)$, a set of terminal vertices $K\subseteq V$ and integer $s$.}
{Decide if there is a vertex set $X\subseteq V$ of size at most $s$, such that $K\subseteq X$, and $X$ induces a connected subgraph of $G$.}

\defproblem{Maximum Induced Forest}
{A graph $G=(V,E)$ and an integer $k$}
{Decide if there is a vertex set $F\subseteq V$ of size $k$ such that $F$ induces a forest.}

\defproblem{Feedback Vertex Set}
{A graph $G=(V,E)$ and an integer $k$}
{Decide if there is a vertex set $F\subseteq V$ of size $k$ such that $V\setminus F$ induces a forest.}

\defproblem{Connected Feedback Vertex Set}
{A graph $G=(V,E)$ and an integer $k$}
{Decide if there is a vertex set $F\subseteq V$ of size $k$ such that $F$ induces a connected subgraph, and $V\setminus F$ induces a forest.}

\defproblem{Hamiltonian Cycle}
{A graph $G=(V,E)$}
{Decide if there is a cycle $S\subseteq E$ that visits all vertices of $G$.}

\defproblem{Hamiltonian Path}
{A graph $G=(V,E)$, and two vertices $v,w\in V$}
{Decide if there is a path $P\subseteq E$ from $v$ to $w$ in $G$ that visits all vertices of $G$.}




\end{document}